\let\theoremstyle\relax
\newtheorem{theorem}{Theorem}     
\theoremstyle{definition}
\theoremstyle{remark}
\newcommand{\keywords}[1]{\par\noindent{\textbf{Keywords:} #1}\par}
\newtheorem{lemma}[theorem]{\bf R}
\begin{document}




\title{A Reduction-Driven Local Search for the Generalized Independent Set Problem}

\author[1]{Yiping Liu}
\author[1]{Yi Zhou\thanks{Corresponding author: \texttt{zhou.yi@uestc.edu.cn}}}
\author[1]{Zhenxiang Xu}
\author[1]{Mingyu Xiao}
\author[2]{Jin-Kao Hao}

\affil[1]{School of Computer Science and Engineering, University of Electronic Science and Technology of China, Chengdu, China \\
\texttt{yliu823@aucklanduni.ac.nz}, \texttt{zhenxiangxu@std.uestc.edu.cn}, \texttt{myxiao@uestc.edu.cn}}

\affil[2]{LERIA, Université d’Angers, 2 Boulevard Lavoisier, 49045 Angers, France \\
\texttt{jin-kao.hao@univ-angers.fr}}

\date{}






\maketitle

\begin{abstract}
The Generalized Independent Set (GIS) problem extends the classical maximum independent set problem by incorporating profits for vertices and penalties for edges. 
This generalized problem has been identified in diverse applications in fields such as { forest harvesting}, competitive facility location, social network analysis, and even machine learning.
However, solving the GIS problem in large-scale, real-world networks remains computationally challenging. 
In this paper, we explore data reduction techniques to address this challenge.
We first propose 14 reduction rules that can reduce the input graph with rigorous optimality guarantees.  
We then present a reduction-driven local search (RLS) algorithm that integrates these reduction rules into the pre-processing, the initial solution generation, and the local search components in a computationally efficient way.
The RLS is empirically evaluated on 278 graphs drawn from different application scenarios. 
The results indicate that the RLS is highly competitive -- 
For most graphs, it achieves significantly superior solutions compared to other known solvers, and it effectively provides solutions for graphs exceeding 260 million edges, a task at which every other known method fails.
Analysis also reveals that data reduction plays a key role in achieving  such a competitive performance.
\end{abstract}
\keywords{
Generalized independent set; Data reduction; Local search; Large graph}

\section{Introduction}
\label{introduction}
Graphs are essential tools for modeling complex systems including social networks, geometric networks, and forest management systems. The analysis of graph data has led to the development of numerous combinatorial problems that help address real-world challenges.
In this paper, we study the {\textit{Generalized Independent Set}} (GIS) problem.
Let $G=(V,E=E_p\cup E_r)$ be an undirected graph where $V$ is the vertex set and $E$ is the edge set that is partitioned into a \textit{permanent edge} set $E_p$ and a \textit{removable edge} set $E_r$.
Each vertex $v\in V$ is assigned a profit $w(v)\in \mathbb{R}$.
Each removable edge $e\in E_r$ is associated with a penalty $p(e)\in \mathbb{R}$.
A \textit{generalized independent set} of the graph $G$ is a vertex set $I\subseteq V$ such that no permanent edge exists between any two vertices of $I$ in $G$.
Then, the GIS problem asks for a generalized independent set $I$ of maximum net benefit, which is the total vertex profits of $I$ minus the total penalties of removable edges whose end vertices are in $I$.
An example is shown in Figure~\ref{fig: facility}.

\begin{figure}[h]
    \centering
    \caption{ A toy example of the GIS problem. The set of red vertices is a generalized independent set of  net benefit of $(2+6+5)-1=12$. 
    }
    \vspace{0.3cm}
    \includegraphics[width=0.5\linewidth]{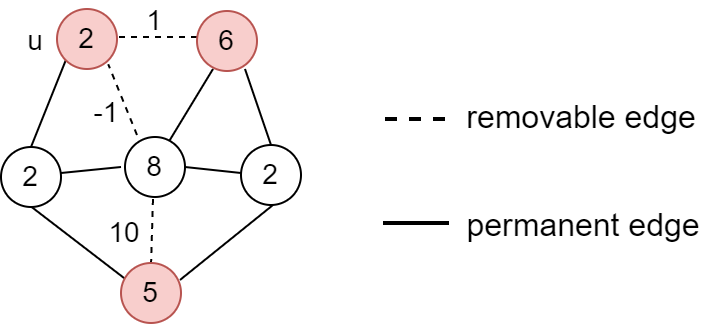}
    \label{fig: facility}
\end{figure}

{ When all the edges are permanent and each vertex has a profit of 1, the GIS problem becomes the well-known \textit{maximum independent set problem} (MIS), which asks for a maximum{-cardinality} set of non-adjacent vertices {in} a given graph.} 
As the GIS problem allows a more flexible definition of vertex profit and edge penalties, it captures a broader range of real-world scenarios than the classical independent set problem.
We {give} detailed examples of { applications} in diverse domains in Section~\ref{sec:applications}.
However, since MIS is NP-hard, W[1]-hard and difficult to approximate \citep{cygan2015parameterized}, the equivalence above indicates that the GIS problem is also computationally hard.

Existing works have proposed various approaches to solve the GIS problem practically. Exact solution methods include mixed integer linear programming \citep{colombi2017generalized} and unconstrained binary quadratic programming \citep{hosseinian2019algorithms}.
They can solve instances with up to 400 vertices and 71,820 edges in three hours so far \citep{zheng2024exact}.
On the other hand, heuristic search methods have demonstrated their effectiveness in providing high-quality solutions within reasonable computational time (e.g., \cite{colombi2017generalized,nogueira2021iterated}). 
For example, the { most recent} algorithm, named {\it  Adaptive Local Search} (ALS), proposed by \cite{zheng2024exact} achieves near-optimal solutions on all standard benchmark instances and scales to problems with up to 18,000 vertices.
Nevertheless, we observed that these heuristic algorithms struggle to maintain solution quality in even larger instances.

Data reduction, which reduces input size while preserving problem optimality, has recently emerged as a promising technique for improving algorithm scalability on large instances. The technique has been successfully applied to problems such as independent set, vertex cover, and maximum clique \citep{grossmann2023finding,abu2022recent,verma2015solving,walteros2020maximum,lamm2019exactly}.
For the GIS problem, research on data reduction is still at its early stages.  
There are three main challenges in developing data reduction techniques for the GIS problem:
(1) The complex interplay among vertex profits, edge penalties, and the underlying graph structure,
(2) Translating reduction rules into efficient, practical algorithms, which often requires {incrementally applying the reduction rules} \citep{akiba2016branch,chang2017computing},
(3) Integrating reductions into local search algorithms while balancing search time and efficiency.

In this paper, we address these challenges by proposing a novel reduction-driven local search algorithm for solving the GIS problem.
Our contributions are summarized as follows.
\begin{enumerate}[leftmargin=*]
    \item To the best of our knowledge, this is the first work to apply data reduction techniques to the GIS problem. We introduce 14 reduction rules that are classified into four groups: edge transformation, neighborhood-based reduction, low-degree reduction, and vertex-pair reduction. 
    We provide rigorous proofs that these rules preserve the problem optimality. 
    To efficiently implement these reduction rules, we introduce an ordering strategy and construct a dependency graph to manage their applications. 

    \item We propose an efficient reduction-driven local search algorithm, named RLS, to solve the GIS problem. We incorporate our proposed reduction rules into three main components of the RLS: the pre-processing, which incrementally reduces the size of the original input instance, 
    the initial solution generation, which interplays the vertex selection and data reduction, and the neighborhood search, which uses a novel reduction-based move operator to guide the search trajectory. 

    \item We evaluate the proposed the RLS algorithm on a total of 278 instances.
   On small instances { with up to 400 vertices and 80,000 edges}, the RLS consistently achieves the known optimal solutions with  significantly shorter running time than exact algorithms. 
   On large instances, the RLS exhibits superior scalability over existing heuristic algorithms in solution quality. 
   Notably, the RLS successfully handles instances with { over 18 million vertices and} over 260 million edges, where existing heuristic methods fail.
    Our experimental results also demonstrate that data reduction techniques significantly improve the  computational efficiency of the RLS.
\end{enumerate}

\noindent
The rest of the paper is organized as follows: Section~\ref{sec: background} reviews applications and related works, followed by Section~\ref{sec: preliminary} which introduces the notations and definitions.
In Section~\ref{sec: reduction}, we present our reduction rules for the GIS problem. 
Then, in Section~\ref{sec: local search}, we describe the RLS algorithm. 
Section~\ref{sec: experiments} presents computational results and comparisons with existing methods. 
In the last section, we conclude the paper and outline future directions.

\section{Background} \label{sec: background}
In this section, we provide some application examples of the GIS model and review existing GIS algorithms.
\subsection{Applications Related to GIS}\label{sec:applications}

In the following, we highlight several important applications of the GIS problem:

\noindent {\bf Forest harvesting.} 
The planning of forest {harvesting} involves selecting forest areas (cells) for timber, aiming to maximize benefits like timber value while minimizing ecological damage from excessive cutting. As a GIS problem, forest cells are vertices weighted by harvesting benefits, and edges represent penalties for harvesting adjacent cells (\cite{hochbaum1997forest}).

\noindent {\bf Competitive facility location.} 
In determining optimal locations for service facilities, each potential site has a utility value. However, the utility diminishes when multiple facilities compete for customers in the same area. 
The facility location problem, modeled with vertices as facilities and edges as geographic relationships, can be framed as a GIS problem to maximize utilities (\cite{hochbaum200450th}).
Similar spatial conflicts exist in airplane seating \citep{pavlik2021airplane} and label placement \citep{mauri2010new}.

\noindent{\bf Social network analysis.} { Potentially}, the GIS problem has broad applications in social network analysis. A maximum independent set corresponds to the complement of a minimum vertex cover and is equivalent to a maximum clique in the graph’s complement (\cite{sanchis1996some}). These properties make GIS useful for tasks such as detecting cliques or communities (\cite{gschwind2021branch}) and optimizing coverage and reach within social networks (\cite{puthal2015efficient}).


\noindent {\bf Other applications.} GIS has been applied in machine learning to aggregate low-quality results into better results \citep{brendel2010segmentation}, to track objects in multi-object tracking tasks \citep{brendel2011multiobject}, and to serve as a pooling layer in graph neural networks \citep{nouranizadeh2021maximum}.

\subsection{Existing Algorithms} \label{sec: related work}

The GIS problem was first introduced in \cite{hochbaum1997forest}.
Existing methods for solving the GIS problem can be mainly divided into exact and heuristic algorithms. 

\noindent {\bf Exact Algorithms.}
In pursuit of optimal solutions, \cite{hochbaum200450th} introduced an integer linear programming model that assigns variables to each node and edge, and associates each edge with a specific constraint. { As a result, the integer linear program for a graph of 2000 vertices (e.g. C2000.5 in the DIMACS competition set) has approximately 500,000 variables and 500,000 constraints, which is quite large for modern ILP solvers like CPLEX. (It fails to even report a solution within 3 hours for this instance.)}
Then, \cite{colombi2017generalized} reformulated the GIS problem as an unconstrained binary quadratic program (UBQP), and developed a branch-and-cut algorithm, along with additional valid inequalities.
Note that they also generated a total of 216 random instances which often serve as benchmark datasets in the later work.
\cite{hosseinian2019algorithms} employed another { quadratic} program to bound the optimal solution, which was later incorporated into a branch-and-bound framework.
Recently, \cite{zheng2024exact} used a Lagrangian relaxation in their branch-and-bound framework to estimate tight upper bounds.
Their experiments showed that their algorithm is able to solve the GIS instances with up to 400 vertices within $3$ hours. 

\noindent {\bf Heuristic Algorithms.}
\cite{kochenberger2007effective} also developed a UBQP formulation of the GIS problem and solved it using tabu search. 
\cite{colombi2017generalized} enhanced performance by integrating advanced tabu search with UBQP.
\cite{nogueira2021iterated} directly solved the original problem by using an iterated local search heuristic. They explored move operations of adding one or two vertices to the incumbent solution.
More recently, \cite{zheng2024exact} proposed a more refined local search heuristic``AL'' to explores three move operations -- adding one vertex to the solution, dropping one vertex from the solution, and replacing a vertex in the solution with another one not in the solution. 
{ The ALS algorithm has been reported to achieve state-of-the-art performance on the standard benchmark set of 216 instances proposed by \cite{colombi2017generalized}.
In particular, it is able to find high-quality solutions for instances with up to $17,903$ vertices within $5$ minutes. 
However, its performance on graphs with more than 20,000 vertices has not been systematically evaluated. Our experiments show that ALS quickly becomes impractical on such large-scale instances.}


\noindent {\bf Data Reduction Techniques.}
{ Data reduction techniques for integer programming models can be traced back several decades, referred to as variable fixing and coefficient reduction \citep{crowder1983solving}. Then, the parameterized complexity community has also adopted this technique, known as \textit{kernelization}, to achieve strong theoretical bounds on the time complexity of many hard combinatorial graph problems \citep{cygan2015parameterized}.
}
Recently, data reduction techniques are intensively used for practically tackling MIS \citep{abu2022recent}. 
For example, \cite{akiba2016branch} shows that branch-and-reduce outperforms pure branch-and-bound, while \cite{dahlum2016accelerating} and \cite{chang2017computing} { integrate reductions with heuristics} to achieve near-optimal performance. 
These techniques enable handling graphs with millions of vertices and edges \citep{lamm2019exactly,xiao2021efficient}.
Besides, the effectiveness of reductions has also been acknowledged in other fundamental problems like the clique problem \citep{walteros2020maximum}, and the matching problem \citep{koana2021data}. 
However, as for the GIS problem, no existing work focuses on reduction rules. 
{ Extending existing reductions for the independent set problem to the GIS problem, where both vertices and edges are weighted, remains challenging.}




\section{Preliminary} \label{sec: preliminary}


 According to the definition, an input instance of the GIS problem is an (undirected weighted) graph $G=(V, E=E_p\cup E_r)$ where each vertex $v\in V$ is associated with a profit $w(v)\in \mathbb{R}$ and each removable edge $e\in E_r$ is associated with a penalty { $p(e)\in \mathbb{R}$}.
For a vertex set $S\subseteq V$, we define $nb(S)$ as the {\em net benefit of $S$}, i.e., 
\[
nb(S)= \sum_{v\in S} w(v)-\sum_{u\in S,v\in S, e=\{u,v\}\in E_r} p(e).
\]
{Then, a \textit{generalized independent set} of $G$ is a vertex subset $S\subseteq V$ such that $G[S]$ contains no permanent edge. The GIS problem can be formulated as finding a generalized independent set such that its net benefit is maximized.} We denote the optimal objective value of the GIS problem in $G$ as
{ $\alpha(G)$, i.e.,
\[
\alpha(G)=\max_{S\subseteq V}nb(S) \quad s.t. \quad  E_p\cap \{\{u,v\}\mid u,v\in S\}=\varnothing
\]
}
{ We also denote $MGIS(G)$ as the family of generalized independent sets $S\subseteq V$ that achieve the optimal objective value $\alpha(G)$. Each $S\in MGIS(G)$ corresponds to an optimal solution.}
\begin{table}[t]
\centering
\caption{Table of Notations}
\label{tab:notations}
\resizebox{0.8\textwidth}{!}{%
\begin{tabular}{|l|l|}
\hline
$G=(V,E=E_p\cup E_r)$ & 
\begin{tabular}[l]{@{}l@{}} 
a graph $G$ with vertex set $V$ and edge set $E$, \\
$E_p, E_r$ represent the set of permanent edges and the set of removable edges, respectively
\end{tabular} 
\\ \hline
$w(v), p(\{u,v\})$ & the profit of including a vertex $v$, and the penalty of including both vertices $u$ and $v$ \\ \hline
$N(v), N_p(v), N_r(v)$ & \begin{tabular}[l]{@{}l@{}} { subsets of $V$, representing} the sets of vertices connected to $v$, vertices connected to $v$ by \\ a permanent edge and vertices connected to $v$ by a removable edge, respectively \end{tabular} \\ \hline
$|N(v)|,|N_p(v)|, |N_r(v)|$ & the degree, permanent degree, and removable degree of $v$, respectively \\ \hline
$N[v], N_p[v], N_r[v]$ & \begin{tabular}[l]{@{}l@{}}{ subsets of $V$, representing} the sets $N(v)\cup \{v\}$, $N_p(v)\cup \{v\}$, $N_r(v)\cup \{v\}$, respectively \end{tabular} \\ \hline
$\Delta$ & the maximum degree of vertices in $G$ \\ \hline
$\Tilde{w}(v)$ & \begin{tabular}[c]{@{}l@{}}$\Tilde{w}(v)= w(v)+\sum_{x\in N_r(v)} \max(0,-p(\{v,x\}))$
\end{tabular} \\ \hline
$S, {G[S], E(G[S])}$ & a vertex subset of $G$, the subgraph induced by $S$, and the edge set of the induced subgraph
\\ \hline
$w(S), w^+(S)$ & $w(S)=\sum_{v\in S}w(v)$, $w^+(S) = \sum_{v\in S} \max(0,\Tilde{w}(v))$
\\ \hline
$nb(S)$ &  the net benefit of a vertex set $S$, $nb(S)= \sum_{v\in S} w(v)-\sum_{ u\in S,v\in S,e=\{u,v\}\in E_r} p(e)$ \\ \hline
$\alpha(G),MGIS(G)$ & \begin{tabular}[c]{@{}l@{}} the largest net benefit value achievable without including any permanent edge, \\ and the family of all subsets of $V$ attaining $\alpha(G)$ value under the same constraint \end{tabular} \\ \hline
\end{tabular}%
}
\end{table}

For convenience, we report more notations in Table~\ref{tab:notations}.
Note that we use $\Tilde{w}(v)$ to represent the profit of $v$ plus the { absolute sum of  non-positive} penalties of edges that are incident to $v$.
Given that $S\subseteq V$ is a generalized independent set, $w^+(S)$ is an upper bound on the net benefit of { $S$.
For example, in Figure~\ref{fig: facility}, $\Tilde{w}(u)=2+2.5=4.5$.
When $S$ denotes the set of red vertices in Figure~\ref{fig: facility}, $w^+(S)=4.5+6+5=15.5$.
}

In the rest of the paper, we use the notation $\coloneqq$ to represent assignment and $=$ to indicate equivalence.

\noindent {\bf Integer Programming.}
The GIS problem can be conveniently formulated as follows \citep{colombi2017generalized}:
\begingroup
\addtolength{\jot}{-0.5em}
\begin{align}
\alpha(G)=\max \quad & \sum_{i\in V}w(i) x_i-\sum_{\{i,j\}\in E_r} p(\{i,j\}) y_{ij} \label{eq:1} \\
\text{s.t. \quad} &x_i+x_j\leq 1 && \{i,j\}\in E_p  \label{eq:2}  \\
& x_i+x_j-y_{ij}\leq 1 && \{i,j\}\in E_r \label{eq:3}  \\
& { x_i\geq y_{ij}, x_j\geq y_{ij}} && \{i,j\}\in E_r\label{eq:5} \\
& x_i\in \{0,1\}  && i\in V \label{eq:4}  \\
& y_{ij}\in \{0,1\} && \{i,j\}\in E_r \label{eq:6} 
\end{align}
\endgroup

\noindent
where binary variables $x_i$ indicate whether the vertex $i$ is selected, and binary variables $y_{ij}$ indicate whether the removable edge $\{i,j\}$ is included.

\section{Reduction Rules} \label{sec: reduction}

In this section, we develop reduction rules that identify reducible graph structures without compromising optimality.
We rely on four operations to reduce the current instance to a simplified instance.
(1) \textit{Edge transformation}, which identifies  { removable edges} that can be deleted or transformed to { permanent edges};
(2) \textit{Vertex inclusion}, which identifies vertices that are in { at least one} set $I\in MGIS(G)$;
(3) \textit{Vertex exclusion}, which identifies vertices that { do not belong to a} set $I\in MGIS(G)$.
(4) \textit{Folding}, which identifies vertices that can be merged into a single vertex.
Based on the four operations, we introduce a total of 14 reduction rules as follows.
{ For simplicity, we denote the graph obtained after applying a reduction rule by $G'=(V',E'=E_p'\cup E_r')$. Unless otherwise specified, we assume that $w'(v)$, the profit of a vertex $v\in V'$,keeps its original profit $w(v)$ if $v$ also exists in  $V$, and $p'(e)$, the penalty of a removable edge $e\in E_r'$, keeps its original penalty $p(e)$ if $e$ also exists in $E_r$.}

\subsection{Edge Transformation}

We first introduce two edge transformation rules.



\begin{lemma}[Edge Removal Reduction] \label{lem: reduction 1}
If there exists an edge { $e=\{u,v\}\in E_r$} such that $p(e)=0$, then $\alpha(G)=\alpha(G')$ where $G'$ is obtained by only removing the edge $e$ from $G$.
\end{lemma}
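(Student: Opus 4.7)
The plan is to show that removing a zero-penalty removable edge preserves both the feasible region and the objective function of the GIS optimization problem, which immediately yields equality of the optimal values.

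First, I would observe that since $e\in E_r$, the operation leaves the vertex set and the permanent edge set unchanged: $V'=V$ and $E_p'=E_p$, while $E_r'=E_r\setminus\{e\}$. Therefore, for any subset $S\subseteq V$, the induced subgraph $G[S]$ contains a permanent edge if and only if $G'[S]$ does. This shows that the family of generalized independent sets of $G$ coincides with that of $G'$, i.e., the two problems have the same feasible region.

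Second, I would compare net benefits. For any $S\subseteq V$, the vertex profit term $\sum_{v\in S} w(v)$ is identical in $G$ and $G'$, and the penalty sums differ only in whether the term $p(e)\cdot\mathbf{1}[u,v\in S]$ is present. Since $p(e)=0$, we have $nb_G(S)=nb_{G'}(S)$ for every $S$. Combining the two observations, the maximization problems defining $\alpha(G)$ and $\alpha(G')$ are identical, hence $\alpha(G)=\alpha(G')$.

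The main obstacle is essentially bookkeeping rather than any mathematical difficulty: one only needs to be careful to note that removing a \emph{removable} edge cannot introduce or delete any permanent edge, and that a zero-penalty term contributes nothing to the objective regardless of whether its endpoints are selected. In fact, this proof pattern — verifying that the feasible region is preserved and that the objective value is invariant on that region — will serve as a template for several of the subsequent edge-transformation rules.
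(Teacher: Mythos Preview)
Your proof is correct and follows essentially the same approach as the paper: both argue that removing a zero-penalty removable edge leaves the net benefit $nb(S)$ unchanged for every $S$, with the paper doing this via a short case split on whether $u,v\in S$ and you doing it by explicitly separating the feasibility constraint (unchanged because $E_p$ is unchanged) from the objective (unchanged because the only differing term is zero). Your presentation is arguably a bit more systematic, but the underlying idea is identical.
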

The proofs of the correctness of this reduction rule, and the subsequent reduction rules, are deferred to Appendix~\ref{app: proofs}.
R\ref{lem: reduction 1} implies removing an edge $e$ with $p(e)=0$ does not change optimality.





\begin{lemma}[Edge-penalty Reduction] \label{lem: reduction 2}
    If there is an edge { $e=\{u,v\}\in E_r$} such that $p(e)> min(\Tilde{w}(u),\Tilde{w}(v))$, then $\alpha(G)=\alpha(G')$ where $G'$ is obtained by {only moving $e$ from $E_r$ to $E_p$ in $G$}.
\end{lemma}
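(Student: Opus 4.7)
The plan is to establish both inequalities $\alpha(G') \leq \alpha(G)$ and $\alpha(G) \leq \alpha(G')$ separately. The main technical work is in the second direction, using the definition of $\tilde{w}$ to argue that no optimal solution in $G$ can contain both endpoints of $e$.

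For the easy direction $\alpha(G') \leq \alpha(G)$: any generalized independent set $S$ of $G'$ automatically satisfies the permanent-edge constraints of $G$ (since $E_p \subseteq E_p'$), so $S$ is a generalized independent set of $G$ as well. Furthermore, because $e=\{u,v\}$ is permanent in $G'$, such an $S$ contains at most one of $u,v$, so the penalty $p(e)$ is never subtracted in the formula defining $nb_G(S)$. Hence $nb_{G'}(S)=nb_G(S)$, which yields $\alpha(G')\leq \alpha(G)$.

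For the reverse direction $\alpha(G)\leq \alpha(G')$, I would show that some $S\in MGIS(G)$ contains at most one of $u,v$: then $S$ is a feasible generalized independent set of $G'$, and by the same argument as above $nb_{G'}(S)=nb_G(S)=\alpha(G)$. Assume WLOG $\tilde{w}(u)\leq \tilde{w}(v)$, and suppose for contradiction that there is an $S\in MGIS(G)$ with $\{u,v\}\subseteq S$. I then compare $nb_G(S)$ with $nb_G(S\setminus\{u\})$; their difference (the ``contribution of $u$'') equals
\[
w(u)-p(e)-\!\!\!\sum_{x\in N_r(u)\cap S\setminus\{u,v\}}\!\!\!p(\{u,x\}).
\]
Using the pointwise bound $-p(\{u,x\})\leq \max(0,-p(\{u,x\}))$ and extending the sum to all of $N_r(u)\setminus\{v\}$, the definition of $\tilde{w}$ gives that this contribution is at most $\tilde{w}(u)-\max(p(e),0)$. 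A case split on the sign of $p(e)$ then shows this quantity is strictly negative, so $nb_G(S\setminus\{u\})>nb_G(S)$, contradicting the optimality of $S$.

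The main obstacle I expect is the case $p(e)<0$: here the upper bound on the contribution reduces to $\tilde{w}(u)$ rather than to $\tilde{w}(u)-p(e)$, so strict negativity is not immediate from the hypothesis alone. The fix is to observe that $p(e)<0$ together with $p(e)>\tilde{w}(u)$ forces $\tilde{w}(u)<p(e)<0$, which supplies the strict sign needed. Once this bookkeeping is done carefully, both directions combine to give $\alpha(G)=\alpha(G')$.
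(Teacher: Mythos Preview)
Your proof is correct and follows essentially the same approach as the paper: assume an optimal $S$ contains both $u$ and $v$, and show $nb(S\setminus\{u\})>nb(S)$ by bounding the contribution of $u$ via $\tilde{w}(u)$. The one difference is that the paper extends the bounding sum to \emph{all} of $N_r(u)$ (including $v$), obtaining directly
\[
nb(S)-nb(S\setminus\{u\}) \;\le\; w(u)-p(e)+\sum_{x\in N_r(u)}\max(0,-p(\{u,x\}))\;=\;\tilde{w}(u)-p(e)\;<\;0,
\]
which sidesteps your case split on the sign of $p(e)$; your exclusion of $v$ from the sum is what forces the separate analysis for $p(e)<0$.
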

Compared to a permanent edge, a removable edge is more challenging to reduce because of the difficulties of identifying whether both endpoints of a removable edge are in a set $S\in MGIS(G)$. R\ref{lem: reduction 2} allows us to transform a removable edge into a permanent edge without sacrificing optimality, underpinning many further reduction rules.




\subsection{Neighborhood-based Reductions}
We  further consider the local neighborhood of a vertex.

\begin{lemma}[Neighborhood Weight Reduction] \label{lem: reduction 3}
If there is a vertex $u\in V$ { such} that $w(u)\geq w^+(N(u))$, 
then { $u$ is in at least one set $I\in MGIS(G)$} and 
$\alpha(G)=\alpha(G')+w(u)$ where $G'\coloneqq G[V\setminus N_p[u]]$, and $w'(v)\coloneqq w(v)-p(\{u,v\})$ for each $v\in N_r(u)$.
\end{lemma}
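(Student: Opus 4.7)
The plan is to prove the lemma in two parts: first show that at least one optimal generalized independent set contains $u$, and then verify the equality $\alpha(G) = \alpha(G') + w(u)$ via matching bounds.

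For the first part, I would use an exchange argument. Starting from an arbitrary $I^* \in MGIS(G)$ with $u \notin I^*$, I construct $I' := (I^* \setminus N(u)) \cup \{u\}$. The set $I'$ is a generalized independent set because deleting $u$'s neighbors from $I^*$ first eliminates any potential permanent-edge conflict with $u$, and $I^* \setminus N(u)$ inherits independence from $I^*$. It then suffices to show $nb(I') \geq nb(I^*)$, which reduces to bounding the contribution of $S := I^* \cap N(u)$ to $nb(I^*)$, namely $\sum_{v \in S} w(v) - \sum_{e \in E_r,\, e \cap S \neq \varnothing,\, e \subseteq I^*} p(e)$, by $w(u)$.

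The main obstacle is precisely this inequality, and the delicate case is when some removable edges carry negative penalties (so $-p(e) > 0$). The idea is to bound the contribution of $S$ above by $\sum_{v \in S} \Tilde{w}(v)$. For every removable edge $e$ with $e \cap S \neq \varnothing$ and $e \subseteq I^*$, the penalty satisfies $-p(e) \leq \max(0, -p(e))$; moreover such an edge has at least one endpoint in $S$, so it is counted at least once in the double sum $\sum_{v \in S} \sum_{x \in N_r(v)} \max(0, -p(\{v, x\}))$ that defines $\Tilde{w}$. Combining these two observations gives contribution of $S \leq \sum_{v \in S} \Tilde{w}(v) \leq w^+(S) \leq w^+(N(u)) \leq w(u)$, where the last step is the hypothesis. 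Hence $nb(I') \geq nb(I^*)$, and by optimality of $I^*$, the set $I'$ is also optimal and contains $u$.

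For the second part, pick an optimal $I^* \ni u$ (guaranteed by Step 1) and let $J := I^* \setminus \{u\}$. Since $I^*$ contains no permanent edge, $J \cap N_p(u) = \varnothing$, so $J \subseteq V \setminus N_p[u]$ and $G'[J]$ has no permanent edge. A direct calculation, absorbing the term $-p(\{u,v\})$ for each $v \in N_r(u) \cap J$ into the adjusted profit $w'(v) = w(v) - p(\{u,v\})$, yields $nb_G(I^*) = w(u) + nb_{G'}(J)$, so $\alpha(G) \leq w(u) + \alpha(G')$. For the reverse inequality, I take any $J^* \in MGIS(G')$ and observe that $J^* \cup \{u\}$ is a generalized independent set in $G$ (since $N_p(u)$ is disjoint from $V \setminus N_p[u] \supseteq J^*$); the same bookkeeping gives $nb_G(J^* \cup \{u\}) = w(u) + \alpha(G')$, closing the loop.
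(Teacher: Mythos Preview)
Your proposal is correct and follows essentially the same exchange argument as the paper: replace $I^*\cap N(u)$ by $u$ and bound the lost contribution by $w^+(N(u))\le w(u)$. In fact your write-up is more careful than the paper's, which asserts $nb(I\cup\{u\}\setminus N(u))\ge nb(I)$ without spelling out the $\tilde w$ bound and simply states $\alpha(G)=\alpha(G')+w(u)$ without the two-direction bookkeeping you provide.
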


We extend R\ref{lem: reduction 3} by considering more neighboring structures to obtain R\ref{lem: reduction 4}-R\ref{lem: reduction 6}.


\begin{lemma}[Neighborhood Penalty Reduction] \label{lem: reduction 4}
If there exists a vertex $u\in V$ such that $w(u)\geq  w^+(N_p(u))+\sum_{x\in N_r(u)}\max(0, p(\{u,x\}))$, then 
{$u$ is in at least one set $I\in MGIS(G)$} and
$\alpha(G)=\alpha(G')+w(u)$ where $G'\coloneqq G[V\setminus N_p[u]]$, and $w'(v)\coloneqq w(v)-p(\{u,v\})$ for each $v\in N_r(u)$. 
\end{lemma}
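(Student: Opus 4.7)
The plan is the standard exchange argument plus a value-preserving bijection between solution families. Pick any optimal $S^*\in MGIS(G)$. If $u\in S^*$, the membership claim is immediate; otherwise let $R:=N_p(u)\cap S^*$ and consider $S':=(S^*\setminus R)\cup\{u\}$. Since every permanent neighbor of $u$ that was in $S^*$ has been removed, $S'$ is a valid generalized independent set, and it suffices to show $nb(S')\geq nb(S^*)$.

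Expanding the difference by first removing $R$ from $S^*$ and then adding $u$, one obtains
\[
nb(S')-nb(S^*) = w(u) - \Bigl[w(R)-\!\!\!\!\sum_{e\in E_r,\, e\cap R\neq\emptyset,\, e\subseteq S^*}\!\!\!\! p(e)\Bigr] - \!\!\sum_{x\in N_r(u)\cap S^*}\!\! p(\{u,x\}).
\]
The last sum is upper bounded by $\sum_{x\in N_r(u)}\max(0,p(\{u,x\}))$, since dropping non-positive terms can only increase the sum. The key estimate is to bound the bracketed marginal contribution of $R$ by $w^+(N_p(u))$: one expands $\sum_{v\in R}\Tilde{w}(v)$ and the marginal side-by-side and applies the termwise identity $p(e)+\max(0,-p(e))=\max(0,p(e))\geq 0$ on every removable edge incident to $R$. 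Internal edges of $R$ appear twice in $\sum_{v\in R}\Tilde{w}(v)$ but only once in the marginal sum, and this asymmetry contributes only nonnegative slack, so one obtains $w(R)-\sum_e p(e)\leq \sum_{v\in R}\Tilde{w}(v)\leq w^+(R)\leq w^+(N_p(u))$. Combining these two bounds with the hypothesis on $w(u)$ yields $nb(S')-nb(S^*)\geq 0$, so $S'$ is optimal and contains $u$.

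For the value equation $\alpha(G)=\alpha(G')+w(u)$, I would exhibit a value-preserving bijection $T\mapsto T\setminus\{u\}$ between generalized independent sets of $G$ that contain $u$ and arbitrary generalized independent sets of $G'$. Well-definedness follows since $u\in T$ forces $T\cap N_p(u)=\emptyset$, hence $T\setminus\{u\}\subseteq V\setminus N_p[u]$, and conversely $T'\cup\{u\}$ is permanent-edge-free. The weight modification $w'(v)=w(v)-p(\{u,v\})$ on $v\in N_r(u)$ is designed precisely so that the penalties $u$ incurs towards $T'\cap N_r(u)$ in $G$ are absorbed into the vertex profits of $G'$, giving $nb_G(T)=w(u)+nb_{G'}(T')$. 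Combined with the first part of the lemma --- which guarantees $\alpha(G)$ is attained by some $T\ni u$ --- taking suprema on both sides yields $\alpha(G)=w(u)+\alpha(G')$.

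The main obstacle will be the marginal-contribution estimate: one must carefully track how removable edges internal to $R$ and removable edges crossing from $R$ to $S^*\setminus R$ enter both the marginal contribution and $\sum_{v\in R}\Tilde{w}(v)$ without over- or under-counting, while the hypothesis only controls $w^+(N_p(u))$ and the positive-penalty sum across $u$. My plan is to open both sides as a single sum over edges incident with $R$ and observe that the pointwise identity $\max(0,-p(e))+p(e)=\max(0,p(e))\geq 0$, combined with the double-counting slack on internal edges, closes the bound.
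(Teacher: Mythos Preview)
Your proposal is correct and follows essentially the same exchange argument as the paper: remove $R=N_p(u)\cap S^*$ from an optimal solution, insert $u$, and bound the net-benefit change using $\sum_{v\in R}\Tilde{w}(v)$ and the positive-part penalty sum across $u$. The paper's proof simply asserts the key inequality $nb(I')-nb(I)\geq w(u)-\sum_{v\in S}\Tilde{w}(v)-\sum_{v\in S'}p(\{u,v\})$ and the value identity $\alpha(G)=\alpha(G')+w(u)$ without further justification, whereas you actually supply the edge-level accounting (the $\max(0,-p(e))+p(e)=\max(0,p(e))$ identity and the internal-edge double-counting slack) and the explicit value-preserving bijection; these are genuine details the paper omits, but the underlying route is identical.
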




\begin{lemma}[Negative Profit Reduction] \label{lem: negative profit} \label{lem: reduction 5}
If there exists a 
vertex $u\in V$ such that $w(u)<0$ and $\sum_{v\in N_r(u)} \min(0,p(\{u,v\}))>w(u)$, then 
{ $u$ is not in any $I\in MGIS(G)$} and 
$\alpha(G)=\alpha(G')$ where $G'\coloneqq G[V\setminus \{u\}]$.
\end{lemma}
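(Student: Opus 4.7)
The plan is a simple exchange argument. I would take an arbitrary generalized independent set $S \subseteq V$ with $u \in S$, set $S' = S \setminus \{u\}$ (still a generalized independent set, since removing a vertex cannot create any forbidden permanent edge), and show that the hypothesis forces $nb(S) < nb(S')$. This rules out $u$ from every optimal solution, and then $\alpha(G) = \alpha(G')$ follows because any member of $MGIS(G)$ is entirely contained in $V \setminus \{u\}$ and remains a GIS of $G' = G[V \setminus \{u\}]$ with unchanged net benefit, while conversely every GIS of $G'$ is trivially a GIS of $G$ with the same net benefit.

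Concretely, I would expand the difference in net benefits directly from the definition. Removing $u$ from $S$ drops the profit $w(u)$ and deletes exactly the penalties $p(\{u,v\})$ for removable edges linking $u$ to other members of $S$; since $S$ is a generalized independent set, $u$ has no permanent neighbor in $S$, so nothing else changes. This yields
\[
nb(S) - nb(S') \;=\; w(u) \;-\; \sum_{v \in S \cap N_r(u)} p(\{u,v\}).
\]
Next I would bound the penalty sum from below. Using $p(\{u,v\}) \geq \min(0, p(\{u,v\}))$ termwise, and then extending the index set from $S \cap N_r(u)$ to the full $N_r(u)$ (which can only decrease a sum of non-positive terms), I obtain
\[
\sum_{v \in S \cap N_r(u)} p(\{u,v\}) \;\geq\; \sum_{v \in N_r(u)} \min(0, p(\{u,v\})) \;>\; w(u),
\]
where the final strict inequality is exactly the hypothesis. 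Rearranging gives $nb(S) < nb(S')$, as required, so $u$ cannot lie in any optimal solution.

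The only subtle point — not really a serious obstacle — is the direction of the inequality around $\min(0, \cdot)$: one must be careful that extending a summation of non-positive quantities to a superset makes the total \emph{smaller}, and that $p \geq \min(0, p)$ holds regardless of the sign of $p$. Note also that the $w(u) < 0$ clause in the hypothesis is automatically implied, since $\sum_{v \in N_r(u)} \min(0, p(\{u,v\})) \leq 0$, so it does not enter the argument separately. Once the strict improvement is in hand, the identity $\alpha(G) = \alpha(G')$ is routine: the net-benefit function and the GIS constraint depend only on vertices and edges inside $V \setminus \{u\}$, which $G$ and $G'$ share identically.
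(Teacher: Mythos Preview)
Your proposal is correct and follows essentially the same exchange argument as the paper: show that deleting $u$ from any feasible set containing it strictly increases the net benefit, hence $u$ lies in no optimal solution. The paper splits into the two cases ``no neighbors of $u$ in $I$'' (using $w(u)<0$) and ``some neighbors in $I$'' (using the penalty bound), whereas you handle both at once and additionally observe that the hypothesis $w(u)<0$ is redundant; otherwise the arguments coincide.
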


{Next, we show the \textit{clique reduction} rule. A \textit{clique} of $G$ is a subset of vertices in which every two vertices are connected by a permanent edge.}
\begin{lemma}[Clique Reduction] \label{lem: reduction 6}
If there exists a vertex $u\in V$ such that $N_p(u)$ is a clique  of $G$ and $w(u)\geq \sum_{v\in N_r(u)} \max(0,p(\{u,v\}))+ \max(0,\max_{v\in N_p(u)}\Tilde{w}(v))$, then {$u$ is in at least one set $I\in MGIS(G)$}, $\alpha(G)=\alpha(G')+w(u)$ where $G'\coloneqq G[V\setminus N_p[u]]$, and $w'(v)\coloneqq w(v)-p(\{u,v\})$ for each $v\in N_r(u)$.
\end{lemma}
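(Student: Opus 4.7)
The plan is to prove both conclusions via a standard exchange argument that rests on the key structural fact that, since $N_p(u)$ is a clique of permanent edges, every generalized independent set $S$ of $G$ falls into exactly one of three disjoint cases: (a) $S\cap N_p[u]=\varnothing$, (b) $u\in S$ and $S\cap N_p(u)=\varnothing$, or (c) $u\notin S$ and there is a unique $v\in S\cap N_p(u)$. This trichotomy lets me transform any optimal $S^*\in MGIS(G)$ into an equally good solution that contains $u$.

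First I would establish vertex inclusion. In case (a), I set $I:=S^*\cup\{u\}$; feasibility holds because the permanent neighbors of $u$ are absent from $S^*$. The benefit change is $w(u)-\sum_{v\in N_r(u)\cap S^*}p(\{u,v\})$, which I lower-bound by replacing the sum by $\sum_{v\in N_r(u)}\max(0,p(\{u,v\}))$; the hypothesis then forces nonnegativity. Case (b) is trivial with $I:=S^*$. In case (c), I swap $v$ for $u$, setting $I:=(S^*\setminus\{v\})\cup\{u\}$; feasibility follows because $v$ is the only member of $N_p(u)$ in $S^*$. Expanding the net-benefit difference gives
\[
w(u)-w(v)-\sum_{x\in N_r(u)\cap S^*}p(\{u,x\})+\sum_{x\in N_r(v)\cap(S^*\setminus\{v\})}p(\{v,x\}).
\]
I would upper-bound the first sum by $\sum_{x\in N_r(u)}\max(0,p(\{u,x\}))$ and lower-bound the second by $-\sum_{x\in N_r(v)}\max(0,-p(\{v,x\}))=w(v)-\Tilde{w}(v)$. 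The expression collapses to $w(u)-\Tilde{w}(v)-\sum_{x\in N_r(u)}\max(0,p(\{u,x\}))$, which the hypothesis makes nonnegative since $\max(0,\max_{v'\in N_p(u)}\Tilde{w}(v'))\geq \Tilde{w}(v)$ regardless of the sign of $\Tilde{w}(v)$. Hence some $I\in MGIS(G)$ contains $u$.

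Next I would prove $\alpha(G)=\alpha(G')+w(u)$ by matching generalized independent sets of $G$ containing $u$ with generalized independent sets of $G'$. For the $\leq$ direction, take the $I$ just constructed and set $I':=I\setminus\{u\}\subseteq V\setminus N_p[u]$; a direct expansion shows that the penalties of removable edges between $u$ and $I'$ are absorbed into the shifted weights $w'(v)=w(v)-p(\{u,v\})$ on $N_r(u)\cap V'$, while all other edges of $G[I']$ coincide with those of $G'[I']$. This yields $nb_G(I)=w(u)+nb_{G'}(I')\leq w(u)+\alpha(G')$. For the $\geq$ direction, lift any optimal $J'\in MGIS(G')$ to $J:=J'\cup\{u\}$, whose feasibility is immediate since $J'\cap N_p[u]=\varnothing$, and the same identity gives $nb_G(J)=w(u)+nb_{G'}(J')=w(u)+\alpha(G')$.

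The main obstacle is the bookkeeping in case (c): one must simultaneously handle penalties at $u$ gained and at $v$ lost with no sign assumption on the edge penalties. The function $\Tilde{w}(\cdot)$ is precisely the right device, since it credits each vertex with its helpful (negative) incident penalties and therefore upper-bounds what an adversarial swap-out could save. Once this algebraic collapse is carried out, the remaining steps are routine rewritings of the net-benefit sum.
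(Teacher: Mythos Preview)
Your proposal is correct and follows essentially the same exchange argument as the paper: use the clique structure of $N_p(u)$ to conclude that at most one permanent neighbor of $u$ lies in an optimal solution, then either add $u$ (your case~(a)) or swap that neighbor for $u$ (your case~(c)), bounding the gain via $\Tilde{w}(\cdot)$ exactly as the paper does. Your write-up is in fact more complete than the paper's, since you spell out the trichotomy explicitly and carry out both directions of the identity $\alpha(G)=\alpha(G')+w(u)$, which the paper leaves implicit.
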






\subsection{Low-degree Reductions}
Low-degree vertices such as degree-one and degree-two vertices are common in large real-world graphs. Next, we introduce five reduction rules to reduce these vertices.

\subsubsection{Degree-one Vertices Reduction}

\begin{lemma}[Degree-one Vertices Reduction] \label{lem: reduction 7}
    Let $u$ be a vertex with a single neighbor $v$.

    Case 1: $\{u,v\}\in E_r$ and $w(u)\geq p(\{u,v\})$. (1) If $w(u)\geq 0$, then $u$ is in at least one set $I\in MGIS(G)$ and $\alpha(G)=\alpha(G')+w(u)$, where $G'\coloneqq G[V\setminus \{u\}]$ and $w'(v)\coloneqq w(v)-p(\{u,v\})$.  (2) If $w(u)<0$, then $\alpha(G)=\alpha(G')$, where $G'\coloneqq G[V\setminus \{u\}]$ and $w'(v)\coloneqq w(v)-p(\{u,v\})+w(u)$. {$u$ is in at least one set $I\in MGIS(G)$ iff $v$ is in at least one set $I\in MGIS(G')$}.

    Case 2: $\{u,v\}\in E_p$ or $\{u,v\}\in E_r$ but $w(u)<p(\{u,v\})$. (1) If $w(u)\geq 0$ and $w(u)\geq \Tilde{w}(v)$, then {$u$ is in at least one set $I\in MGIS(G)$, $v\notin I$}, and $\alpha(G)=\alpha(G')+w(u)$ where $G'\coloneqq G[V\setminus \{u,v\}]$. (2) If $w(u)\geq 0$ and $w(u)<\Tilde{w}(v)$, then {$u$ and $v$ cannot be both included in the same $I\in MGIS(G)$}, we have $\alpha(G)=\alpha(G')+w(u)$, where $G'\coloneqq G[V\setminus \{u\}]$ and $w'(v)\coloneqq w(v)-w(u)$. 
    (3) If $w(u)<0$, then
    {$u$ is not in any $I\in MGIS(G)$} and
    $\alpha(G)=\alpha(G')$ where $G'\coloneqq G[V\setminus \{u\}]$. 
\end{lemma}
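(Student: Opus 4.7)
The plan is to prove each of the five subcases by exhibiting a value-preserving correspondence between generalized independent sets of $G$ and $G'$ whose net benefits differ by a fixed constant $c\in\{0,w(u)\}$. For every subcase I establish both $\alpha(G)\geq \alpha(G')+c$ and $\alpha(G)\leq \alpha(G')+c$ by explicit constructions: given $S'$ in $G'$, I build a feasible $S$ in $G$ with $nb(S)\geq nb(S')+c$; and, given an optimal $S$ in $G$, an exchange argument produces an equally good representative whose image $S'$ in $G'$ satisfies $nb(S')\geq nb(S)-c$. The membership claims about $u$, as well as the biconditional in the second subcase of Case~1, then fall out from inspecting which representatives realize the equality.

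For Case~1, where $\{u,v\}\in E_r$ with $w(u)\geq p(\{u,v\})$, $u$'s only adjacency is via a removable edge, so $u$ never causes infeasibility. In the first subcase ($w(u)\geq 0$), inserting $u$ into any $S$ changes $nb$ by either $w(u)\geq 0$ (if $v\notin S$) or $w(u)-p(\{u,v\})\geq 0$ (if $v\in S$), so some optimum contains $u$ and the possible edge penalty is absorbed into $w'(v)\coloneqq w(v)-p(\{u,v\})$. In the second subcase ($w(u)<0$), taking $u$ alone strictly hurts while taking $u$ together with $v$ yields nonnegative marginal value $w(u)-p(\{u,v\})\geq 0$; hence some optimum satisfies $u\in S \iff v\in S$. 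This coupling is encoded by $w'(v)\coloneqq w(v)-p(\{u,v\})+w(u)$, which is exactly the joint marginal effect of toggling both $u$ and $v$, and the biconditional on $MGIS$-membership reads off directly.

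For Case~2, the premise makes co-selection of $u$ and $v$ either infeasible or strictly dominated by selecting $v$ alone, so I may restrict attention to optima with $\{u,v\}\not\subseteq S$. In the first subcase, $\tilde{w}(v)$ upper-bounds the marginal contribution of $v$ to any generalized independent set (its profit plus the best-case negative-penalty bonuses from removable neighbors), so the hypothesis $w(u)\geq \tilde{w}(v)$ lets me exchange $v$ for $u$ in any optimum without decreasing $nb$, yielding an optimum with $u\in S,\,v\notin S$ that reduces cleanly to $G[V\setminus\{u,v\}]$. In the third subcase ($w(u)<0$), removing $u$ from any $S$ that contains it strictly improves $nb$ (using the Case~2 premise when $v\in S$), so $u$ never sits in an optimum.

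The main obstacle is the middle subcase of Case~2 ($w(u)\geq 0$ but $w(u)<\tilde{w}(v)$), which is a genuine either/or between $u$ and $v$. The reduction removes $u$ and sets $w'(v)\coloneqq w(v)-w(u)$, encoding the opportunity cost of picking $v$ over $u$. For $\alpha(G)\geq\alpha(G')+w(u)$, given $S'$ in $G'$ I will set $S\coloneqq S'\cup\{u\}$ when $v\notin S'$ and $S\coloneqq S'$ otherwise, then verify $nb(S)=nb(S')+w(u)$ in both branches using the facts that $u$'s only incident edge is to $v$ and that $v$'s other incident edges are untouched by the transformation. For the reverse direction, I first use the Case~2 premise to restrict to optima with $\{u,v\}\not\subseteq S$, then use $w(u)\geq 0$ to further assume $S$ contains exactly one of $u,v$; the two resulting constructions each give an $S'$ in $G'$ with $nb(S')=nb(S)-w(u)$. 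The delicate step will be the $nb$ bookkeeping while $v$'s translated profit continues to interact with $v$'s other removable neighbors in $E_r'$; a short algebraic check shows those contributions cancel across the transformation, so only the $\pm w(u)$ translation and the isolated toggle of $u$ affect $nb$.
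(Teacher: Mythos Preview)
Your proposal is correct and follows essentially the same approach as the paper: both proofs rely on the same exchange arguments (adding $u$, deleting $u$, or swapping $v$ for $u$ in an optimal solution and checking the sign of the resulting change in $nb$), and the membership claims fall out identically. The only difference is one of presentation: you explicitly set up the two-sided inequality $\alpha(G)\geq\alpha(G')+c$ and $\alpha(G')\geq\alpha(G)-c$ via explicit maps $S'\mapsto S$ and $S\mapsto S'$, whereas the paper establishes the membership statements first and then dismisses the $\alpha$ equalities with ``it is easy to verify $\alpha(G')=\alpha(G)$ in all combinations''; your version is thus somewhat more self-contained, but the underlying argument is the same.
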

Note that R\ref{lem: reduction 7} covers all degree-one vertices. In other words, any vertex of degree one can be removed using this rule.






\subsubsection{Degree-two Vertices Reductions}
{ For ease of describing $R8$ and $R9$, we assume $p(\{u,v\})=+\infty$ for each $\{u,v\}\in E_p$. This assumption does not affect the optimal objective value in the graph. }


\begin{lemma}[Permanently Connected Neighbors Reduction] \label{lem: reduction 8}
    Let $u$ be a vertex with exactly two neighbors $x,y$ and $\{x,y\}\in E_p$. The following table shows the rules for reducing $u$ under different conditions.

\begingroup
\vspace{0.3cm}
\centering
\resizebox{\textwidth}{!}{%
\begin{tabular}{|c|c|l|}
\hline
\multicolumn{2}{|c|}{conditions}  & \multicolumn{1}{c|}{reduction} \\ \hline
\multirow{4}{*}{\begin{tabular}[c]{@{}c@{}}Case 1: \\  $w(u)\geq \max(p(\{u,x\}),p(\{u,y\}))$\end{tabular}}
& $w(u)\geq 0$ & \begin{tabular}[l]{@{}l@{}}{\bf R8.1.1:} { $u$ is in at least one set $I\in MGIS(G)$},\\ $\alpha(G)=\alpha(G')+w(u)$, where $G'\coloneqq G[V\setminus \{u\}]$, \\ $w'(x)\coloneqq w(x)-p(\{u,x\})$, $w'(y)\coloneqq w(y)-p(\{u,y\})$.\end{tabular} \\ \cline{2-3} 
 & $w(u)<0$ & \begin{tabular}[l]{@{}l@{}}{\bf R8.1.2:} {$u$ is in at least one set $I\in MGIS(G)$ iff $x\in I$ or $y\in I$}, \\ $\alpha(G)=\alpha(G')$, where $G'\coloneqq G[V\setminus \{u\}]$, \\ $w'(x)\coloneqq w(x)+w(u)-p(\{u,x\})$, $w'(y)\coloneqq w(y)+w(u)-p(\{u,y\})$.\end{tabular} \\ \hline
\multirow{6}{*}{\begin{tabular}[c]{@{}c@{}}Case 2: \\ $p(\{u,x\})> w(u)\geq p(\{u,y\})$\\ w.l.o.g, assume $p(\{u,x\})> p(\{u,y\})$\end{tabular}} & \begin{tabular}[l]{@{}l@{}}$w(u)\geq 0$ and\\  $w(u)\geq \Tilde{w}(x)$\end{tabular} & \begin{tabular}[l]{@{}l@{}}{\bf R8.2.1:}
{$u$ is in at least one set $I\in MGIS(G)$ and $x\notin I$}, \\
$\alpha(G)=\alpha(G')+w(u)$ where $G'\coloneqq G[V\setminus \{u,x\}]$, \\ and $w'(y)\coloneqq w(y)-p(\{u,y\})$.\end{tabular} \\ \cline{2-3} 
 & \begin{tabular}[l]{@{}l@{}}$w(u)\geq 0$ and\\  $w(u)<\Tilde{w}(x)$\end{tabular} & \begin{tabular}[l]{@{}l@{}}{\bf R8.2.2:} 
 $\alpha(G)=\alpha(G')+w(u)$ where $G'\coloneqq G[V\setminus \{u\}]$,\\  $w'(x)\coloneqq w(x)-w(u)$, $w'(y)\coloneqq w(y)-p(\{u,y\})$. \\
 {$u$ is in at least one set $I\in MGIS(G)$ iff $x$ is not in at least one set $I'\in MGIS(G')$}.
 \end{tabular} \\ \cline{2-3} 
 & $w(u)<0$ & \begin{tabular}[l]{@{}l@{}}{\bf R8.2.3:}
 $\alpha(G)=\alpha(G')$ where $G'\coloneqq G[V\setminus \{u\}]$,\\  $w'(y)\coloneqq w(y)+w(u)-p(\{u,y\})$.
 \\
 {$u$ is in at least one set $I\in MGIS(G)$ iff $y$ is in at least one set $I'\in MGIS(G')$}.
 \end{tabular} \\ \hline
\multirow{7}{*}{\begin{tabular}[c]{@{}c@{}}Case 3: \\ $\min(p(\{u,x\}),p(\{u,y\}))>w(u)$ \\ w.l.o.g, assume $\Tilde{w}(x)\geq \Tilde{w}(y)$\end{tabular}} & \begin{tabular}[l]{@{}l@{}}$w(u)\geq 0$ and\\  $w(u)\geq \Tilde{w}(x)$\end{tabular} &\begin{tabular}[l]{@{}l@{}}{\bf R8.3.1:}  {$u$ is in at least one set $I\in MGIS(G)$},\\ $\alpha(G)=\alpha(G')+w(u)$ where $G'\coloneqq G[V\setminus \{u,x,y\}]$ \end{tabular} \\ \cline{2-3} 
 & \begin{tabular}[c]{@{}c@{}}$w(u)\geq 0$ and \\  $\Tilde{w}(y)\leq w(u)<\Tilde{w}(x)$\end{tabular} & \begin{tabular}[l]{@{}l@{}}{\bf R8.3.2:} $\alpha(G)=\alpha(G')+w(u)$ where $G'\coloneqq G[V\setminus \{u,y\}]$, \\ $w'(x)\coloneqq w(x)-w(u)$.
 \\
 {$u$ is in at least one set $I\in MGIS(G)$ iff $x$ is not in at least one set $I'\in MGIS(G')$}.
 \end{tabular} \\ \cline{2-3} 
 & \begin{tabular}[c]{@{}c@{}}$w(u)\geq 0$ \\ and $w(u)<\Tilde{w}(y)$\end{tabular} & \begin{tabular}[l]{@{}l@{}}{\bf R8.3.3:} $\alpha(G)=\alpha(G')+w(u)$ where $G'\coloneqq G[V\setminus \{u\}]$,\\  $w'(x)\coloneqq w(x)-w(u)$, $w'(y)\coloneqq w(y)-w(u)$.
 \\
 {$u$ is in at least one set $I\in MGIS(G)$ iff both $x$ and $y$ are not in at least one set $I'\in MGIS(G')$}.
 \end{tabular} \\ \cline{2-3} 
 & $w(u)<0$ & \begin{tabular}[l]{@{}l@{}}{\bf R8.3.4:} {$u$ is not in any $I\in MGIS(G)$}, $\alpha(G)=\alpha(G')$ where $G'\coloneqq G[V\setminus \{u\}]$. \end{tabular} \\ \hline 
\end{tabular}%
}
\vspace{0.2cm}
\endgroup
\end{lemma}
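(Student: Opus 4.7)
The plan is to prove R8 by an exhaustive case analysis on how $u, x, y$ intersect an optimal generalized independent set, and then specialize to each of the case hypotheses. Because $\{x, y\} \in E_p$, any generalized independent set $I$ of $G$ satisfies $|I \cap \{x, y\}| \leq 1$, so $I \cap \{u, x, y\}$ is one of the six configurations $\emptyset$, $\{u\}$, $\{x\}$, $\{y\}$, $\{u, x\}$, $\{u, y\}$ (with the convention $p(\{u, z\}) = +\infty$ for permanent edges, the forbidden configurations are automatically eliminated). Since $u$ has no neighbors outside $\{x, y\}$, toggling $u$ in or out of $I$ only changes the objective through $w(u)$ and the two incident penalties, so the local bookkeeping is decoupled from the rest of $I$. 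I would tabulate the six configurational contributions once, then read each sub-case off this table.

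For each sub-case the proof follows a uniform template. In \textbf{Case 1} ($w(u) \geq \max(p(\{u,x\}), p(\{u,y\}))$), whenever $w(u) \geq 0$ (R8.1.1) adding $u$ to any $I$ increases $nb(I)$ by at least $w(u) - \max(p(\{u,x\}), p(\{u,y\})) \geq 0$, so some optimum contains $u$, and subtracting the two penalties from $w(x), w(y)$ compensates exactly; when $w(u) < 0$ (R8.1.2) the folded weights $w'(x) := w(x) + w(u) - p(\{u,x\})$ and $w'(y) := w(y) + w(u) - p(\{u,y\})$ encode that whenever $x$ (resp.\ $y$) is picked in $G'$, the pair $\{u, x\}$ (resp.\ $\{u, y\}$) is at least as good as $\{x\}$ in $G$, guaranteed by $w(u) \geq p(\{u,x\})$ and $w(u) \geq p(\{u,y\})$. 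In \textbf{Case 2} the asymmetry $p(\{u,x\}) > w(u) \geq p(\{u,y\})$ makes the pair $\{u, x\}$ never profitable, so the relevant contest is between $\{u\}$ or $\{u, y\}$ and $\{x\}$; I would compare $w(u)$ against $\tilde{w}(x)$, an upper bound on how much $x$ can ever contribute by being in an optimal set together with its favorable removable neighbors. This gives R8.2.1 (where $u$ dominates $x$ outright), R8.2.2 (where $x$ may still win but the fold $w'(x) := w(x) - w(u)$ records the loss in $G'$), and R8.2.3 (where $u$ is useful only through $y$). \textbf{Case 3} is analogous: now neither pairing with $u$ is profitable, so the contest becomes between $\{u\}$, $\{x\}$ and $\{y\}$, yielding R8.3.1--R8.3.4.

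The key analytic ingredient throughout is the inequality
\[
nb(I) \leq nb(I \setminus \{z\}) + \tilde{w}(z) \qquad \text{for any } z \in I,
\]
which follows directly from the definition $\tilde{w}(z) = w(z) + \sum_{x \in N_r(z)} \max(0, -p(\{z, x\}))$: only the negative-penalty removable edges incident to $z$ can be cancelled against $w(z)$ by keeping the right neighbors in $I$. This bound is what allows me, at every branch point where the case condition compares $w(u)$ to $\tilde{w}(x)$ or $\tilde{w}(y)$, to argue about optimal sets containing $x$ or $y$ without knowing the rest of $I$.

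The main obstacle, as with all weight-folding reductions, is verifying that the weight-adjusted instance $G'$ and the original $G$ are in perfect objective-value bijection. For the sub-cases that only delete $u$ and decrease $w'(x)$ or $w'(y)$ (R8.2.2, R8.3.2, R8.3.3), I would argue bidirectionally: given an optimal $I'$ for $G'$, lift it by inserting $u$ precisely when the sub-case's trigger condition on $x$ or $y$ is met, and check via the case hypothesis that the lift gains exactly the announced additive constant; conversely, given an optimal $I$ for $G$, project to $I \setminus \{u\}$ and verify through the configuration table that the resulting value in $G'$ matches $nb(I) - w(u)$ (or $nb(I)$ for R8.1.2 and R8.2.3). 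The accounting for neighbors of $x, y$ through removable edges is precisely why $\tilde{w}(x)$ and $\tilde{w}(y)$, not $w(x)$ and $w(y)$, appear in the case conditions; once the six-row configuration table is written down explicitly, each sub-case reduces to a short arithmetic verification, but the sheer number of cases makes the bookkeeping the real work of the proof.
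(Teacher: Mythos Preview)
Your proposal is correct and follows essentially the same route as the paper: an exhaustive case analysis on $I\cap\{u,x,y\}$ (noting $|I\cap\{x,y\}|\le 1$ since $\{x,y\}\in E_p$), driven by the bound $nb(I)\le nb(I\setminus\{z\})+\tilde w(z)$ at each branch where $w(u)$ is compared to $\tilde w(x)$ or $\tilde w(y)$, followed by a configuration-by-configuration check that $\alpha(G)=\alpha(G')+c$. The only noteworthy difference is that for Case~3 with $w(u)\ge 0$ the paper observes that the three vertices effectively form a permanent-edge triangle and invokes an external result (Theorem~3.3 of Gu et al., 2021) rather than reading R8.3.1--R8.3.3 off the table directly as you propose; your version is thus a bit more self-contained, but the underlying argument is the same.
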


\vspace{-0.1cm}
Examples of $R8$ are presented in Figure~\ref{fig:reduction 8}. For example, in the first plot, $u$ satisfies case 1 and $w(u)\ge 0$. Then by $R8.1.1$, we can remove $u$ from the original graph $G$ (left) to get the reduced graph $G'$(right) and $\alpha(G)=\alpha(G')+10$.

For ease of describing $R9$-$R11$, we further assume $p(\{u,v\})= 0$ for $\{u,v\}\notin E$. Note that this does not alter the objective value in the graph. { Then, for every pair of vertices $\{u,v\}$, there is a corresponding penalty $p(\{u,v\})$.}

\begin{lemma}[Non-permanently Connected Neighbors Reduction] \label{lem: reduction 9}
    Let $u$ be a vertex with two neighbors $x$ and $y$. Assume $p(\{u,x\})\geq p(\{u,y\})$ without loss of generality. If $\{x,y\}\in E_r$ or $\{x,y\}\notin E$, the following table summarizes the reduction rules under different conditions.

\begingroup
\vspace{0.1cm}
\centering
\resizebox{\textwidth}{!}{%
\begin{tabular}{|c|c|l|}
\hline
\multicolumn{2}{|c|}{conditions} & \multicolumn{1}{c|}{reduction} \\ \hline
\multirow{7}{*}{\begin{tabular}[c]{@{}c@{}}Case 1: \\ $w(u)\geq p(\{u,x\})$\end{tabular}} & \begin{tabular}[c]{@{}c@{}}$w(u)\geq 0$ and \\ $w(u)\geq p(\{u,x\})+p(\{u,y\})$\end{tabular} & \begin{tabular}[l]{@{}l@{}}{\bf R9.1.1:}
{$u$ is in at least one set $I\in MGIS(G)$},\\
$\alpha(G)=\alpha(G')+w(u)$ where $G'\coloneqq G[V\setminus \{u\}]$,\\  $w'(x)\coloneqq w(x)-p(\{u,x\})$, $w'(y)\coloneqq w(y)-p(\{u,y\})$\end{tabular} \\ \cline{2-3} 
 & \begin{tabular}[c]{@{}c@{}}$w(u)\geq 0$ and \\ $w(u)< p(\{u,x\})+p(\{u,y\})$\end{tabular} & \begin{tabular}[l]{@{}l@{}}{\bf R9.1.2:} $\alpha(G)=\alpha(G')+w(u)$ where $G'\coloneqq (V\setminus \{u\}, E(G[V\setminus \{u\}])\cup \{\{x,y\}\})$,\\ $w'(x)\coloneqq w(x)-p(\{u,x\})$, $w'(y)\coloneqq w(y)-p(\{u,y\})$, $p'(\{x,y\})\coloneqq p(\{x,y\})-p(\{u,x\})-p(\{u,y\})+w(u)$ \\
 {$u$ is in at least one set $I\in MGIS(G)$ iff either $x$ or $y$ is not in at least one set $I'\in MGIS(G')$}.
 \end{tabular} \\ \cline{2-3} 
 & $w(u)<0$ & \begin{tabular}[l]{@{}l@{}}{\bf R9.1.3:}  $\alpha(G)=\alpha(G')$ where $G'\coloneqq (V\setminus \{u\}, E(G[V\setminus \{u\}])\cup \{\{x,y\}\})$,\\  $w'(x)\coloneqq w(x)+w(u)-p(\{u,x\})$, $w'(y)\coloneqq w(y)+w(u)-p(\{u,y\})$,  $p'(\{x,y\})\coloneqq p(\{x,y\})+w(u)$.
 \\
 {$u$ is in at least one set $I\in MGIS(G)$ iff either $x$ or $y$ is in at least one set $I'\in MGIS(G')$}.
 \end{tabular} \\ \hline
\multirow{10}{*}{\begin{tabular}[c]{@{}c@{}}Case 2: \\ $p(\{u,x\})> w(u)\geq p(\{u,y\})$\end{tabular}} & \begin{tabular}[c]{@{}c@{}}$w(u)\geq 0$ and \\ $w(u)\geq p(\{u,x\})+p(\{u,y\})$\end{tabular} & \begin{tabular}[l]{@{}l@{}}{\bf R9.2.1:} 
 $\alpha(G)=\alpha(G')+w(u)$ where $G'\coloneqq (V\setminus \{u\}, E(G[V\setminus \{u\}])\cup \{\{x,y\}\})$,\\  $w'(y)\coloneqq w(y)-p(\{u,y\})$, $w'(x)\coloneqq w(x)-w(u)$, $p'(\{x,y\})\coloneqq p(\{x,y\})+p(\{u,x\})-w(u)$.
 \\
 {$u$ is in at least one set $I\in MGIS(G)$ iff either $x$ is not in or $y$ is in at least one set $I'\in MGIS(G')$}.
 \end{tabular} \\ \cline{2-3} 
 & \begin{tabular}[c]{@{}c@{}}$w(u)\geq 0$ and \\ $w(u)< p(\{u,x\})+p(\{u,y\})$\end{tabular} & \begin{tabular}[l]{@{}l@{}}{\bf R9.2.2:} 
 $\alpha(G)=\alpha(G')+w(u)$ where $G'\coloneqq (V\setminus \{u\}, E(G[V\setminus \{u\}])\cup \{\{x,y\}\})$,\\  $w'(y)\coloneqq w(y)-p(\{u,y\})$, $w'(x)\coloneqq w(x)-w(u)$, $p'(\{x,y\})\coloneqq p(\{x,y\})-p(\{u,y\})$.
 \\
 {$u$ is in at least one set $I\in MGIS(G)$ iff $x$ is not in at least one set $I'\in MGIS(G')$}.
 \end{tabular} \\ \cline{2-3} 
 & \begin{tabular}[c]{@{}c@{}}$w(u)<0$ and \\ $w(u)\geq p(\{u,x\})+p(\{u,y\})$\end{tabular} & \begin{tabular}[l]{@{}l@{}}{\bf R9.2.3:} $\alpha(G)=\alpha(G')$ where $G'\coloneqq (V\setminus \{u\}, E(G[V\setminus \{u\}])\cup \{\{x,y\}\})$,\\  $w'(y)\coloneqq w(y)+w(u)-p(\{u,y\})$, $p'(\{x,y\})\coloneqq p(\{x,y\})+p(\{u,x\})$.
 \\
 {$u$ is in at least one set $I\in MGIS(G)$ iff $y$ is in at least one set $I'\in MGIS(G')$}.
 \end{tabular} \\ \cline{2-3} 
 & \begin{tabular}[c]{@{}c@{}}$w(u)<0$ and \\ $w(u)< p(\{u,x\})+p(\{u,y\})$\end{tabular} & \begin{tabular}[l]{@{}l@{}}{\bf R9.2.4:} $\alpha(G)=\alpha(G')$ where $G'\coloneqq (V\setminus \{u\}, E(G[V\setminus \{u\}])\cup \{\{x,y\}\})$,\\ $w'(y)\coloneqq w(y)+w(u)-p(\{u,y\})$, $p'(\{x,y\})\coloneqq p(\{x,y\})+w(u)-p(\{u,y\})$.
 \\
 {$u$ is in at least one set $I\in MGIS(G)$ iff $x$ is not in and $y$ is in at least one set $I'\in MGIS(G')$}.
 \end{tabular} \\ \hline
\multirow{6}{*}{\begin{tabular}[c]{@{}c@{}}Case 3: \\ $p(\{u,y\})>w(u)$\end{tabular}} & $w(u)\geq 0$ & \begin{tabular}[l]{@{}l@{}}{\bf R9.3.1:} $\alpha(G)=\alpha(G')+w(u)$ where $G'\coloneqq (V\setminus \{u\}, E(G[V\setminus \{u\}])\cup \{\{x,y\}\})$, \\ $w'(x)\coloneqq w(x)-w(u)$, $w'(y)\coloneqq w(y)-w(u)$, $p'(\{x,y\})\coloneqq p(\{x,y\})-w(u)$.
\\
 {$u$ is in at least one set $I\in MGIS(G)$ iff both $x$ and $y$ are not in at least one set $I'\in MGIS(G')$}.
\end{tabular} \\ \cline{2-3} 
 & \begin{tabular}[c]{@{}c@{}}$w(u)<0$ and \\ $w(u)<p(\{u,x\})+p(\{u,y\})$\end{tabular} & {\bf R9.3.2:}
 {$u$ is not in any $I\in MGIS(G)$}, $\alpha(G)=\alpha(G')$ where $G'\coloneqq G[V\setminus \{u\}]$. \\ \cline{2-3} 
 & \begin{tabular}[c]{@{}c@{}}$w(u)<0$ and \\ $w(u)\geq p(\{u,x\})+p(\{u,y\})$\end{tabular} & \begin{tabular}[l]{@{}l@{}}{\bf R9.3.3:} $\alpha(G)=\alpha(G')$ where $G'\coloneqq (V\setminus \{u\}, E(G[V\setminus \{u\}])\cup \{\{x,y\}\})$, \\ $p'(\{x,y\})\coloneqq p(\{x,y\})-w(u)+p(\{u,x\})+p(\{u,y\})$.
 \\
{ $u$ is in at least one set $I\in MGIS(G)$ iff both $x$ and $y$ are in at least one set $I'\in MGIS(G')$}.
 \end{tabular} \\
 \hline
\end{tabular}%
}
\vspace{0.05cm}
\endgroup

\end{lemma}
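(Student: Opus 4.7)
The plan is to prove R9 by establishing, for each sub-case, a weight-preserving correspondence between generalized independent sets of $G$ and generalized independent sets of $G'$. The central observation is that $u$'s only neighbors are $x$ and $y$, so $u$'s marginal contribution to $nb(S)$ depends only on whether $x\in S$ and $y\in S$. Specifically, if we write $a=\mathbb{1}[x\in S]$ and $b=\mathbb{1}[y\in S]$, then the net benefit obtained from optimally deciding $u$'s membership is
\[
f(a,b) \;=\; \max\bigl(0,\; w(u) - a\,p(\{u,x\}) - b\,p(\{u,y\})\bigr).
\]
Since both $\{u,x\}$ and $\{u,y\}$ are removable (we adopt the convention $p=+\infty$ for permanent edges, so $u$ could never be in the same set as $x$ or $y$ in that case, and the corresponding $f(a,b)=0$ forces exclusion), every $S\in MGIS(G)$ can be reconstructed from $S'=S\setminus\{u\}\subseteq V\setminus\{u\}$ by adding $u$ exactly when $w(u)-a\,p(\{u,x\})-b\,p(\{u,y\})>0$. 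Hence
\[
\alpha(G) \;=\; \max_{S'\subseteq V\setminus\{u\}}\bigl\{nb(S') + f(\mathbb{1}[x\in S'],\mathbb{1}[y\in S'])\bigr\},
\]
where $S'$ ranges over sets containing no permanent edge of $G$.

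The reduction in each sub-case is then obtained by encoding the four values $f(0,0), f(1,0), f(0,1), f(1,1)$ into $G'$. Because modifying $w(x)$ changes the objective only when $a=1$, modifying $w(y)$ only when $b=1$, and introducing an edge $\{x,y\}$ with penalty $p'(\{x,y\})$ only when $a=b=1$, a system of at most three adjustments (together with an additive constant) can realise any assignment of the $f$-values so long as discrete monotonicity is respected. I would first reduce the verification of each sub-case to checking, for all four $(a,b)\in\{0,1\}^2$, that
\[
f(a,b) \;=\; C + a\bigl(w'(x)-w(x)\bigr) + b\bigl(w'(y)-w(y)\bigr) + ab\bigl(p(\{x,y\})-p'(\{x,y\})\bigr),
\]
for the constant $C$ reported in each rule (typically $w(u)$ or $0$). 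This yields a bijection on optimal solutions, which also justifies every ``$u$ is in at least one $I\in MGIS(G)$ iff \ldots'' clause.

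Then I would partition into the three cases based on the ordering of $w(u)$ relative to $p(\{u,x\})$ and $p(\{u,y\})$ (recalling $p(\{u,x\})\geq p(\{u,y\})$). Within each case, the sub-cases correspond to the sign of $w(u)$ and whether $w(u)\geq p(\{u,x\})+p(\{u,y\})$; these determine which of the four $f(a,b)$ values use the $0$-branch versus the linear branch of the max. For R9.1.1 all four values use the linear branch with $u$ included, so the transformation only pushes penalties onto $x$ and $y$. For R9.1.2, R9.2.1-R9.2.4, and R9.3.3, one of the four $(a,b)$ patterns triggers the $0$-branch, so a nonlinear correction is needed and this is exactly what the new $p'(\{x,y\})$ absorbs. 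For R9.3.2, $f(a,b)\leq 0$ for all $(a,b)$, so $u$ is excluded from every optimal solution and deletion suffices.

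The main obstacle will be the tedious bookkeeping: each of the ten sub-cases requires verifying the identity above at all four $(a,b)$ assignments, confirming that the reported $w'(x), w'(y), p'(\{x,y\})$ are consistent, and checking that $G'$ is a valid GIS instance (for example, that a newly introduced $\{x,y\}$ has a well-defined penalty and that the sign conventions for $p$ do not force the new edge to be counted incorrectly). Special attention is needed at the boundary inequalities $w(u)=p(\{u,x\})$ and $w(u)=p(\{u,x\})+p(\{u,y\})$ to ensure the sub-cases partition cleanly and the ``iff'' statements hold; the tie-breaking is harmless because ``at least one $I\in MGIS$'' is asserted, but it must be checked case-by-case.
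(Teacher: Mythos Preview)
Your proposal is correct and rests on the same key observation as the paper's proof: since $u$'s only neighbors are $x$ and $y$, its optimal contribution depends solely on the pair $(a,b)=(\mathbb{1}[x\in S'],\mathbb{1}[y\in S'])$, so the reduction amounts to encoding the four values $f(0,0),f(1,0),f(0,1),f(1,1)$ into weight and penalty adjustments on $x,y$ and the edge $\{x,y\}$.

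The difference is one of organisation. The paper proceeds narratively: for each sub-case it fixes an arbitrary $I\in MGIS(G)$, argues directly which configurations of $u,x,y$ can occur by comparing $nb(I\cup\{u\})$ with $nb(I\setminus\{u\})$, and then closes each case with ``it is easy to verify $\alpha(G')=\alpha(G)$ in all combinations of $x,y,u$''. Your approach replaces this with a single identity
\[
f(a,b)=C+a\bigl(w'(x)-w(x)\bigr)+b\bigl(w'(y)-w(y)\bigr)+ab\bigl(p(\{x,y\})-p'(\{x,y\})\bigr)
\]
to be checked mechanically at the four points, which is cleaner and makes explicit exactly what the paper defers as ``easy''. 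It also explains uniformly \emph{why} the sub-case boundaries ($w(u)$ versus $p(\{u,x\})$, $p(\{u,y\})$, and their sum) are the right ones: they are precisely where some $f(a,b)$ switches between the $0$-branch and the linear branch of the $\max$. One small caution: calling the correspondence a ``bijection on optimal solutions'' is slightly loose at ties, but you already flag this and the ``at least one $I\in MGIS$'' phrasing in the statement absorbs it.
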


\begin{figure}[h!]
    \centering
     \caption{Examples of $R8$. Suppose that vertex $u$ is adjacent to $x$ and $y$, and $\{x,y\}\in E_p$. 
     $G_R$ represents the residual graph $G[V\setminus \{u,x,y\}]$.
     $\alpha$ and $\alpha'$ indicate the optimal objective value in the original graph and the graph after reduction, respectively.
     }
     \vspace{0.05cm}
\includegraphics[width=\linewidth]{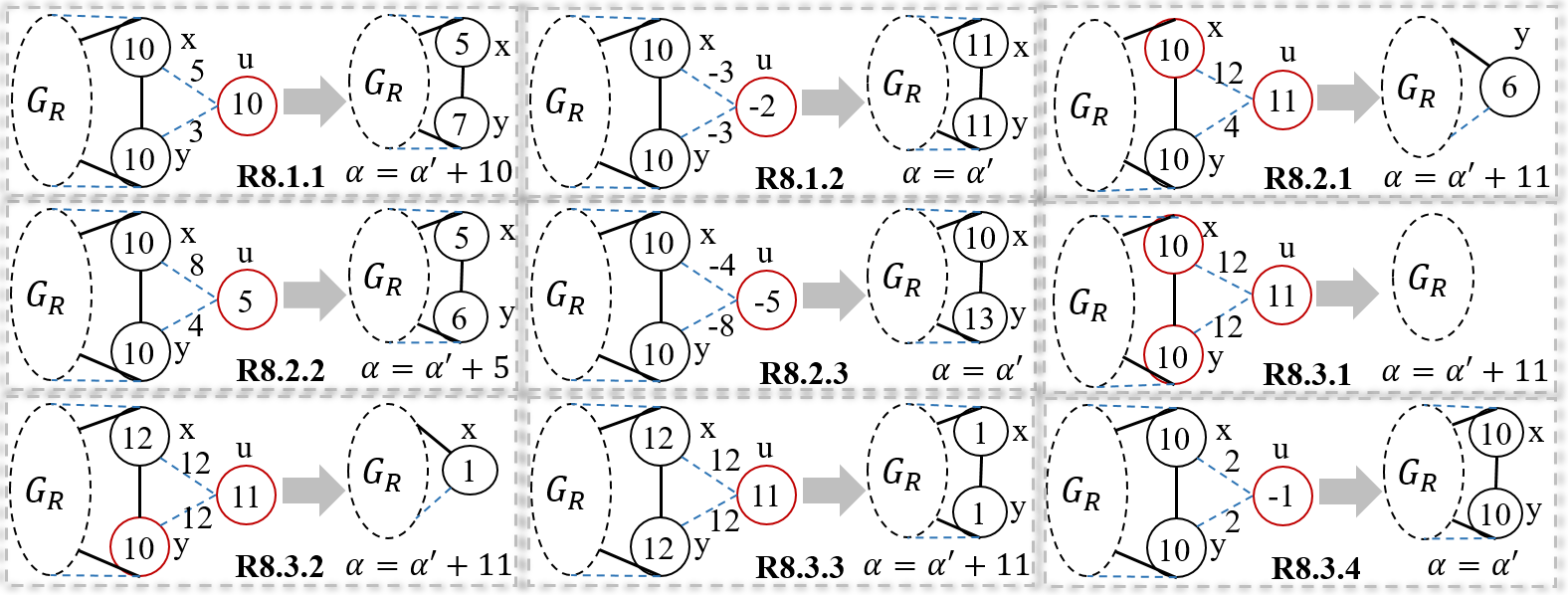}
    \label{fig:reduction 8}
\end{figure}

\begin{figure}[h!]
    \centering
     \caption{Examples of $R9$. Suppose that vertex $u$ is adjacent to vertices $x$ and $y$, and $\{x,y\}\in E_r$ or $\{x,y\}\notin E$. 
     }
     \vspace{0.05cm}
    \includegraphics[width=\linewidth]{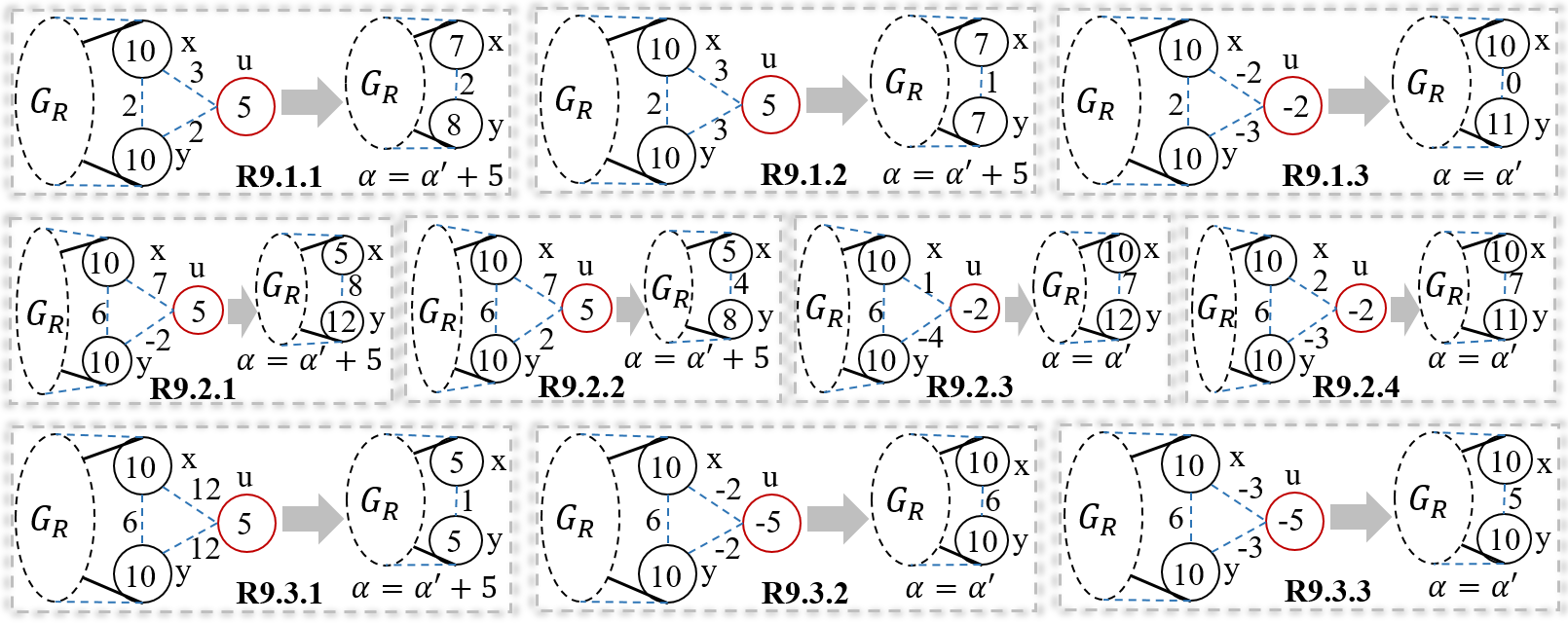}
    \label{fig: reduction 9}
\end{figure}
\vspace{-0.3cm}

Examples of reduction rule $R9$ are presented in Figure~\ref{fig: reduction 9}. 
As a matter of fact, $R8$ and $R9$ together can reduce all degree-two vertices. To be specific, R\ref{lem: reduction 8} covers the cases where the two neighbors are connected by a permanent edge, while R\ref{lem: reduction 9} covers the remaining cases where the two neighbors are connected by a removable edge or not connected.

\subsubsection{Low Permanent-degree Reductions}
We further introduce two reduction rules to reduce high-degree vertices that are connected by only one or two vertices by permanent edges. 

\begin{lemma}[Permanent Degree-one Vertices Reduction] \label{lem: reduction 10}
    If there exists a vertex $u$ { such} that $u$ has exactly one neighbor $x\in N_p(u)$ and $w(u)\geq \sum_{v\in N_r(u)} max(0,p(\{u,v\}))$, then
    $\alpha(G)=\alpha(G')+w(u)$ where $G'\coloneqq (V\setminus \{u\},E(G[V\setminus \{u\}])\cup \{\{x,v\}\mid v\in N_r(u)\setminus N_p(x)\})$, $w'(x)\coloneqq w(x)-w(u)$, $ w'(v)\coloneqq w(v)-p(\{u,v\})$ for each $v\in N_r(u)$, and $  p'(\{x,v\})\coloneqq p(\{x,v\})-p(\{u,v\})$ for each $v\in N_r(u)\setminus N_p(x)$. 
    {$u$ is in at least one set $I\in MGIS(G)$ iff $x$ is not in at least one set $I'\in MGIS(G')$}.
\end{lemma}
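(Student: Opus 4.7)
The plan is to establish $\alpha(G) = \alpha(G') + w(u)$ and the vertex-membership equivalence through a pair of maps between feasible solutions of $G$ and $G'$ that shift the net benefit by exactly $w(u)$, combined with a normalization lemma restricting attention to structured optima of $G$.

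First I would argue that some $I^* \in MGIS(G)$ satisfies $u \in I^*$ or $x \in I^*$. If $I$ is optimal with $u, x \notin I$, then $I \cup \{u\}$ is feasible since $N_p(u) = \{x\}$, and its net benefit changes by $w(u) - \sum_{v \in I \cap N_r(u)} p(\{u,v\}) \geq w(u) - \sum_{v \in N_r(u)} \max(0, p(\{u,v\})) \geq 0$ by the hypothesis $w(u) \geq \sum_{v \in N_r(u)} \max(0,p(\{u,v\}))$. Thus there is always an optimum containing $u$ (forcing $x \notin I^*$ by the permanent edge $\{u,x\}$) or containing $x$ (with $u \notin I^*$).

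I would then define two maps. In Case A ($u \in I$), set $I' := I \setminus \{u\}$; in Case B ($x \in I$, $u \notin I$), set $I' := I$. The inverses send $I'$ with $x \notin I'$ to $I := I' \cup \{u\}$, and $I'$ with $x \in I'$ to $I := I'$. Feasibility transfers in both directions because the only edges newly introduced to $G'$ are the removable edges $\{x,v\}$, so $E_p'$ is exactly the restriction of $E_p$ to $V \setminus \{u\}$, and the only permanent obstruction to adding $u$ back to $I'$ is $x$, which is absent precisely when $x \notin I'$.

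The main computation is to show $nb_G(I) = nb_{G'}(I') + w(u)$ in each case. Case A is straightforward: $x \notin I'$, so the new edges incident to $x$ contribute nothing, and the weight decrement $w'(v) = w(v) - p(\{u,v\})$ for $v \in I' \cap N_r(u)$ replaces exactly the lost penalties once paid against $u$. The delicate Case B requires reconciling three modifications together: the $-w(u)$ contributed by $w'(x) = w(x) - w(u)$; the shift $-p(\{u,v\})$ in each $w'(v)$ for $v \in I \cap N_r(u)$; and the updated edge penalties at $x$. The key observation is that $I \cap N_r(u) \cap N_p(x) = \emptyset$ (otherwise the permanent edge $\{x,v\}$ would be violated since $x \in I$), so every $v \in I \cap N_r(u)$ lies in $N_r(u) \setminus N_p(x)$; this means the edge $\{x,v\}$ in $G'$ is present and its penalty differs from $p(\{x,v\})$ by exactly $-p(\{u,v\})$, uniformly whether $\{x,v\}$ was originally in $E_r$ or was absent and carried the conventional penalty $0$. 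These edge-side shifts cancel the weight-side $-p(\{u,v\})$ terms, leaving a total net change of $-w(u)$. The main obstacle is precisely this uniform handling of the two subcases $v \in N_r(x)$ versus $v \notin N_r(x) \cup N_p(x)$ in Case B. The iff claim on $u$'s membership then follows immediately: Case A / Case B sends optima to optima with $x \notin I'$ / $x \in I'$ respectively, and the inverse maps invert this dichotomy.
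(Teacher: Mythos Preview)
Your proposal is correct and follows essentially the same approach as the paper: first normalize so that some optimum contains exactly one of $u,x$ (using the hypothesis $w(u)\ge\sum_{v\in N_r(u)}\max(0,p(\{u,v\}))$ to add $u$ when neither is present), then split into the two cases $u\in I,\ x\notin I$ and $x\in I,\ u\notin I$ and verify the net-benefit identity. The paper dispatches the case-by-case verification with a one-line ``it is easy to verify,'' whereas you carry it out explicitly---in particular your observation that $I\cap N_r(u)\cap N_p(x)=\varnothing$ in Case~B is exactly the ingredient needed to make the weight shifts and the new edge penalties at $x$ cancel.
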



\begin{lemma}[Permanent Degree-two Vertices Reduction] \label{lem: reduction 11}
    Let $u$ be a vertex with exactly two neighbors $x,y\in N_p(u)$,  $\{x,y\}\in E_p$, and $w(u)\geq \sum_{v\in N_r(u)} max(0,p(\{u,v\}))$.
    W.l.o.g, assume $\Tilde{w}(x)\geq \Tilde{w}(y)$.  

    Case 1: If $w(u)\geq \Tilde{w}(x)$, then 
    { $u$ must be included in at least one set $I\in MGIS(G)$} and
    $\alpha(G)=\alpha(G')+w(u)$ where $G'\coloneqq G[V\setminus \{u,x,y\}]$, and $w'(v)\coloneqq w(v)-p(\{u,v\})$ for each $v\in N_r(u)$.

    Case 2: If $\Tilde{w}(y)\leq w(u)<\Tilde{w}(x)$, then 
    {$y$ is not in at least one set $I\in MGIS(G)$} and 
    $\alpha(G)=\alpha(G')+w(u)$ where $G'\coloneqq (V\setminus \{u,y\},E(G[V\setminus \{u,y\}])\cup \{\{x,v\}\mid v\in N_r(u)\setminus N_p(x)\})$, $w'(x)\coloneqq w(x)-w(u)$, $w'(v)\coloneqq w(v)-p(\{u,v\})$ for each $v\in N_r(u)$, and $p'(\{x,v\})\coloneqq p(\{x,v\})-p(\{u,v\})$ for each $v\in N_r(u)\setminus N_p(x)$. 
    {$u$ is in at least one set $I\in MGIS(G)$ iff $x$ is not in at least one set $I'\in MGIS(G')$}.

    Case 3: If $w(u)< \Tilde{w}(y)$, then $\alpha(G)=\alpha(G')+w(u)$ where $G'\coloneqq (V\setminus \{u\}, E(G[V\setminus \{u\}])\cup \{\{x,v\}\mid v\in N_r(u)\setminus N_p(x)\}\cup \{\{y,v\}\mid v\in N_r(u)\setminus N_p(y)\})$, $w'(x)\coloneqq w(x)-w(u)$, $w'(y)\coloneqq w(y)-w(u)$, $w'(v)\coloneqq w(v)-p(\{u,v\})$ for each $v\in N_r(u)$, $p'(\{x,v\})\coloneqq p(\{x,v\})-p(\{u,v\})$ for each $v\in N_r(u)\setminus N_p(x)$, and $p'(\{y,v\})\coloneqq p(\{y,v\})-p(\{u,v\})$ for each $v\in N_r(u)\setminus N_p(y)$. 
    {$u$ is in at least one set $I\in MGIS(G)$ iff both $x$ and $y$ are not in at least one set $I'\in MGIS(G')$}.
    
\end{lemma}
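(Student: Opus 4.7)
The plan is to verify Lemma 11 case by case, establishing $\alpha(G) = \alpha(G') + w(u)$ in each of the three cases by a correspondence between generalized independent sets of $G$ and those of $G'$. Two observations drive every case. First, because $\{x,y\} \in E_p$, any GIS of $G$ contains at most one of $x, y$. Second, the standing hypothesis $w(u) \geq \sum_{v \in N_r(u)} \max(0, p(\{u,v\}))$ ensures that including $u$ (and hence excluding $x, y$) contributes at least $w(u)$ to the objective, regardless of which removable neighbors of $u$ are also selected.

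For Case 1 I would first show via a local exchange that some $S \in MGIS(G)$ contains $u$: if a given optimal $S$ excludes $u$ but contains at most one of $\{x, y\}$, swapping that vertex for $u$ cannot decrease $nb(S)$, by the combination $w(u) \geq \Tilde{w}(x) \geq \Tilde{w}(y)$ and the hypothesis on $u$'s removable penalties. Once some optimal $S$ contains $u$, both $x$ and $y$ must be excluded, and the map $S \mapsto S \setminus \{u\}$ gives a bijection between such optimal sets and generalized independent sets of $G' = G[V \setminus \{u,x,y\}]$ that shifts $nb$ by exactly $w(u)$.

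For Cases 2 and 3, I would define an explicit map $\phi$ from GIS of $G'$ back to GIS of $G$ and verify $nb_G(\phi(S')) = nb_{G'}(S') + w(u)$ by case analysis on $S' \cap \{x, y\}$. In Case 3 the map sends $S'$ to $S' \cup \{u\}$ when $x, y \notin S'$ and to $S'$ otherwise; the weight transfers $w'(x) = w(x) - w(u)$ and $w'(y) = w(y) - w(u)$ together with the folded-edge penalties $p'(\{x,v\}) = p(\{x,v\}) - p(\{u,v\})$ and $p'(\{y,v\}) = p(\{y,v\}) - p(\{u,v\})$ encode the trade-off ``take $u$ and exclude $x, y$'' versus ``take one of $x, y$ and exclude $u$''. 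In Case 2 only $x$ participates in the folding, since an exchange argument analogous to the one in Case 1 but applied only to $y$ (using $w(u) \geq \Tilde{w}(y)$) shows that $y$ lies outside some optimal GIS and may therefore be removed first, after which the remaining structure is exactly the permanent degree-one configuration handled by R\ref{lem: reduction 10}.

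The main obstacle will be the bookkeeping in Case 3. I would need to track modified weights on $x, y$, and on every $v \in N_r(u)$, together with modified or newly added removable edges $\{x,v\}$ and $\{y,v\}$ for $v \in N_r(u) \setminus N_p(x)$, respectively $v \in N_r(u) \setminus N_p(y)$. The identity $nb_G(\phi(S')) = nb_{G'}(S') + w(u)$ then reduces to checking that the weight- and penalty-differences telescope in each of the four sub-cases for $S' \cap \{x, y\}$, with care that any $v \in N_p(x) \cap N_r(u)$ cannot coexist with $x$ in a GIS and that the convention $p(\{x,v\}) = 0$ for $\{x,v\} \notin E(G)$ is applied uniformly. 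The reverse inequality $\alpha(G) \leq \alpha(G') + w(u)$, for the branch in which $u \notin S$ and $x, y \notin S$, also hinges on the standing hypothesis on $N_r(u)$'s penalties to conclude that the quantity $w(u) - \sum_{v \in S \cap N_r(u)} p(\{u,v\})$ is non-negative.
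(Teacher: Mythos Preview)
Your proposal is correct and follows essentially the same skeleton as the paper's own argument: both start from the observation that $\{u,x,y\}$ is a permanent triangle so at most one of the three appears in any GIS, use the standing hypothesis $w(u)\geq \sum_{v\in N_r(u)}\max(0,p(\{u,v\}))$ to guarantee that at least one of them appears in some optimal solution, and then dispatch Cases~1 and~2 by the same local-exchange inequalities $nb(I\cup\{u\}\setminus\{x\})\geq nb(I)$ and $nb(I\cup\{u\}\setminus\{y\})\geq nb(I)$ using $w(u)\geq\Tilde{w}(x)$ and $w(u)\geq\Tilde{w}(y)$ respectively.

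Where you differ is in execution rather than strategy. The paper, after the exchange arguments, simply asserts that ``it is easy to verify $\alpha(G')+w(u)=\alpha(G)$ in all combinations of $x,y,u$'' for Cases~2 and~3. You instead (i) reduce Case~2 to R\ref{lem: reduction 10} after deleting $y$, which is a clean shortcut since the resulting graph $G[V\setminus\{y\}]$ puts $u$ in exactly the permanent-degree-one configuration with the same weight hypothesis, and the $G'$ of R\ref{lem: reduction 10} coincides with the $G'$ stated here; and (ii) in Case~3 write down the explicit map $\phi$ and plan the four-way case split on $S'\cap\{x,y\}$, including the awkward branch $u,x,y\notin S$ for the reverse inequality, which the paper glosses over. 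Both approaches are sound; yours is more self-contained and reusable, while the paper's is terser but leaves the bookkeeping to the reader.
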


\subsection{Vertex-pair Reductions} \label{sec: edge based reductions}

We introduce three additional rules that can reduce some specific vertex pairs.
\begin{lemma}[Permanent Edge Reduction] \label{reduction:edge} \label{lem: reduction 12}
If there is an edge $\{u,v\}$ such that $\{u,v\}\in E_p$ and $w(u)\geq \Tilde{w}(v)+w^+(N_p(u)\setminus N_p[v])+\min(w^+(N_r(u)\setminus N_p(v)),\sum_{x\in N_r(u)\setminus N_p(v)} \max(0,p\{u,x\}))$, then {$v$ is not in at least one $I\in MGIS(G)$} and  $\alpha(G)=\alpha(G')$ where $G'\coloneqq G[V\setminus \{v\}]$.
\end{lemma}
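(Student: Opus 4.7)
The plan is to fix an arbitrary $I^{*}\in MGIS(G)$ and exhibit an equally good set $I^{**}\subseteq V\setminus\{v\}$, from which both $\alpha(G)=\alpha(G')$ and the claim that $v$ may be excluded from some optimal solution will follow. If $v\notin I^{*}$ the choice $I^{**}=I^{*}$ suffices, so assume $v\in I^{*}$, which forces $u\notin I^{*}$ because $\{u,v\}\in E_p$. Partition $I^{*}$ as $\{v\}\cup X\cup Y\cup T'$, where $X=I^{*}\cap(N_p(u)\setminus N_p[v])$ collects the permanent neighbors of $u$ in $I^{*}$ other than $v$, $Y=I^{*}\cap N_r(u)$ collects the removable neighbors of $u$ in $I^{*}$, and $T'$ is the remainder.

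I will consider two candidate swaps: $I^{**}_1=(T'\cup Y)\cup\{u\}$ (keep $Y$) and $I^{**}_2=T'\cup\{u\}$ (also drop $Y$). Both are generalized independent sets in $G$ because every permanent neighbor of $u$ inside $I^{*}$ lies in $\{v\}\cup X$, which has been removed. Writing $T=T'\cup Y$ and expanding $nb(I^{**}_i)-nb(I^{*})$, the task reduces to showing $w(u)\ge \rho_i$ for at least one $i\in\{1,2\}$, where
\[
\rho_1=w(v)+w(X)-P_{int}(\{v\}\cup X)-P_{ext}(T;\{v\}\cup X)+\sum_{y\in Y}p(\{u,y\}),
\]
\[
\rho_2=w(v)+w(X)+w(Y)-P_{int}(\{v\}\cup X\cup Y)-P_{ext}(T';\{v\}\cup X\cup Y),
\]
with $P_{int}$ and $P_{ext}$ denoting internal and cross removable penalties.

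The technical core is the auxiliary estimate: \emph{for any generalized independent set $S\subseteq V$ and any disjoint vertex set $U\subseteq V$,}
\[
w(S)-P_{int}(S)-P_{ext}(U;S)\le w^{+}(S).
\]
I plan to prove this by decomposing the slack $w^{+}(S)-nb(S)=\sum_{s\in S}[\max(0,\Tilde{w}(s))-w(s)]+P_{int}(S)$ and showing that each per-vertex summand dominates $\sum_{z:\{s,z\}\in E_r}\max(0,-p(\{s,z\}))$ via a two-case argument on the sign of $\Tilde{w}(s)$. Summing these lower bounds over $s\in S$ and using $-P_{ext}(U;S)\le\sum_{s\in S,\,t\in U,\,\{s,t\}\in E_r,\,p<0}|p(\{s,t\})|$ majorises $-P_{ext}(U;S)$, and rearrangement delivers the estimate. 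I expect this step to be the main obstacle, because the naive bound $nb(S)\le w^{+}(S)$ is too loose when $P_{ext}(U;S)<0$, and the tight argument must carefully track how $\Tilde{w}$ absorbs cross-edge bonuses.

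Equipped with the estimate, I apply it to $S=\{v\}\cup X$ with external set $U=T$ to bound $\rho_1$, and to $S=\{v\}\cup X\cup Y$ with external set $U=T'$ to bound $\rho_2$. A side observation is that every $z\in I^{*}$ satisfies $\Tilde{w}(z)\ge 0$ (otherwise $I^{*}\setminus\{z\}$ would strictly exceed $I^{*}$, contradicting optimality), so $w^{+}(\{v\}\cup X)=\Tilde{w}(v)+w^{+}(X)$ and $w^{+}(\{v\}\cup X\cup Y)=\Tilde{w}(v)+w^{+}(X)+w^{+}(Y)$. Combining with $X\subseteq N_p(u)\setminus N_p[v]$, $Y\subseteq N_r(u)\setminus N_p(v)$, monotonicity of $w^{+}$, and $\sum_{y\in Y}p(\{u,y\})\le \sum_{x\in N_r(u)\setminus N_p(v)}\max(0,p(\{u,x\}))$ yields
\[
\rho_1\le \Tilde{w}(v)+w^{+}(N_p(u)\setminus N_p[v])+\sum_{x\in N_r(u)\setminus N_p(v)}\max(0,p(\{u,x\})),
\]
\[
\rho_2\le \Tilde{w}(v)+w^{+}(N_p(u)\setminus N_p[v])+w^{+}(N_r(u)\setminus N_p(v)).
\]
The hypothesis forces $w(u)$ to dominate the minimum of these two upper bounds, so $w(u)\ge \rho_i$ for whichever $i$ achieves the minimum; this gives $nb(I^{**}_i)\ge nb(I^{*})$, and optimality of $I^{*}$ forces equality, so $I^{**}_i\in MGIS(G)$ with $v\notin I^{**}_i$, completing the argument.
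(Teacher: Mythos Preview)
Your proposal is correct and follows essentially the same approach as the paper: both proofs consider the two candidate swaps $I^{**}_1=(I^*\setminus N_p(u))\cup\{u\}$ (keep $u$'s removable neighbors) and $I^{**}_2=(I^*\setminus N(u))\cup\{u\}$ (drop all of $u$'s neighbors), and argue that the hypothesis guarantees at least one of them is optimal, depending on which term attains the minimum. Your write-up is more careful than the paper's---you isolate the inequality $w(S)-P_{int}(S)-P_{ext}(U;S)\le w^{+}(S)$ as an explicit lemma and justify the $\tilde w(z)\ge 0$ observation for $z\in I^{*}$---whereas the paper simply asserts the corresponding bounds on $nb(I\cup\{u\}\setminus N(u))-nb(I)$ and $nb(I\cup\{u\}\setminus N_p(u))-nb(I)$ in one line each (with apparent sign typos). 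But the underlying argument is identical.
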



\begin{lemma}[Common Neighbors Reduction] \label{reduction: extend edge} \label{lem: reduction 13}
If there is an edge $\{u,v\}$ such that $\{u,v\}\in E_p$ and $w(v)\geq w^+(N(v))-\max(0,w(u))$, then { $N_p(u)\cap N_p(v)$ is not a subset of at least one $I\in MGIS(G)$} and $\alpha(G)=\alpha(G')$, where $G'\coloneqq G[V\setminus (N_p(u)\cap N_p(v))]$.
\end{lemma}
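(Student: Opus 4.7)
The plan is to establish $\alpha(G)=\alpha(G')$ by proving both inequalities. The direction $\alpha(G')\le\alpha(G)$ is immediate: any generalized independent set in $G'=G[V\setminus S]$, with $S:=N_p(u)\cap N_p(v)$, is a generalized independent set in $G$ with the same net benefit. For the reverse direction, I will show that every $I\in MGIS(G)$ can be transformed (if necessary) into a generalized independent set $I'\subseteq V\setminus S$ with $nb(I')\ge nb(I)$, thereby exhibiting an optimal solution that avoids $S$ and certifying $\alpha(G')\ge\alpha(G)$.

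Fix $I\in MGIS(G)$. If $I\cap S=\varnothing$, set $I':=I$ and we are done. Otherwise, pick $s\in I\cap S$; since $\{u,s\},\{v,s\}\in E_p$ and $I$ is a generalized independent set, $u,v\notin I$. Define $I':=(I\setminus N(v))\cup\{v\}$. By construction $v$ has no neighbor in $I'$, so $I'$ is a generalized independent set in $G$. Moreover $S\subseteq N_p(v)\subseteq N(v)$ and $v\notin S$ (no self-loops), so $I'\cap S=\varnothing$ and $I'$ is feasible in $G'$.

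To show $nb(I')\ge nb(I)$, set $T_v:=N(v)\cap I$. A direct expansion gives $nb(I)-nb(I')=L-w(v)$, where $L=w(T_v)-\sum p(\{x,y\})$, the sum taken over all removable edges of $G[I]$ incident to $T_v$. I will bound $L$ by a per-vertex charging scheme: each removable edge with both endpoints in $T_v$ contributes half of its penalty to each endpoint, while each removable edge from $T_v$ to $I\setminus T_v$ contributes its full penalty to its $T_v$-endpoint. Using $-\lambda p\le\max(0,-p)$ for $\lambda\in\{\tfrac12,1\}$, the total charge at $x\in T_v$ is at most $w(x)+\sum_{y\in N_r(x)}\max(0,-p(\{x,y\}))=\Tilde{w}(x)$, and hence at most $\max(0,\Tilde{w}(x))$. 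Summing over $x\in T_v$ and noting $u\in N(v)\setminus T_v$ (because $u\notin I$), I obtain
\[
L\;\le\;\sum_{x\in T_v}\max(0,\Tilde{w}(x))\;\le\;w^+(N(v))-\max(0,\Tilde{w}(u))\;\le\;w^+(N(v))-\max(0,w(u)),
\]
where the last step uses $\Tilde{w}(u)\ge w(u)$. The hypothesis $w(v)\ge w^+(N(v))-\max(0,w(u))$ then yields $L-w(v)\le 0$, as required.

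The main delicate point is the per-vertex charging scheme: one must verify that the half-weight on edges inside $T_v$ and the full weight on cross edges both admit the uniform per-vertex bound $\Tilde{w}(x)$, and that the resulting charges sum correctly to $L$. Once this is in place, carving out $u$'s contribution from $w^+(N(v))$ is what lets the hypothesis close the gap via the term $-\max(0,w(u))$; everything else is a mechanical chain of inequalities.
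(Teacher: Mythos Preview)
Your proposal is correct and follows essentially the same approach as the paper's proof: both arguments show that whenever $u,v\notin I$ one may pass to $I'=(I\setminus N(v))\cup\{v\}$ without loss, which forces some optimal solution to avoid $N_p(u)\cap N_p(v)$. The transformation and the key inequality $nb(I')\ge nb(I)$ via $L\le w^+(N(v)\setminus\{u\})$ are the same; your per-vertex charging scheme (halving penalties on edges internal to $T_v$) spells out carefully what the paper compresses into the single line ``$w(v)\ge w^+(N(v))-\max(0,w(u))\ge w^+(S)$'' together with its standing convention that $w^+$ upper-bounds the relevant net-benefit contribution.
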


\begin{lemma}[Twin Vertices Reduction] \label{lem: reduction 14}
If there exist two vertices $u,v$ such that $N_p(u)=N_p(v)$, $w(u)\geq \sum_{x\in N_r(u)} \max(0,p(\{u,x\}))$ and $w(v)\geq \sum_{x\in N_r(v)} \max(0,p(\{v,x\}))$, then $\alpha(G)=\alpha(G')$, where $G'$ is obtained by folding $u$ and $v$ into a vertex $v'$, $w(v')\coloneqq w(u)+w(v)-p(\{u,v\})$, add edge $\{v',x\}$ as a permanent edge in $G'$ for each $x\in N_p(u)$, and add edge $\{v',x\}$ as a removable edge in $G'$ with $p'(\{v',x\})\coloneqq p(\{u,x\})+p(\{v,x\})$ for each $x\in N_r(v)\cup N_r(u)$. 
{ Both $u$ and $v$ belong to at least one $I\in MGIS(G)$ iff $v'\in MGIS(G')$}.
\end{lemma}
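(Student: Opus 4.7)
The plan is to show $\alpha(G) = \alpha(G')$ by exhibiting net-benefit-preserving maps between the GIS families of $G$ and $G'$ in both directions, and then deduce the iff statement by verifying that these maps restrict to mutually inverse bijections between the optimal sets containing $\{u,v\}$ and those containing $v'$.

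First I would record a few setup facts. Since $v \notin N_p(v)$ and $N_p(u) = N_p(v)$, we have $v \notin N_p(u)$, so $\{u,v\} \notin E_p$; the edge (if any) is either removable or absent, and the convention $p(\{u,v\}) = 0$ for non-edges (in force from R\ref{lem: reduction 9} onward) lets us write $w(v') = w(u) + w(v) - p(\{u,v\})$ and $p'(\{v',x\}) = p(\{u,x\}) + p(\{v,x\})$ uniformly for every relevant $x \in (N_r(u)\cup N_r(v))\setminus\{u,v\}$. These uniform identities make the two net-benefit accountings line up cleanly when the sums are split over $N_r(u)\setminus N_r(v)$, $N_r(v)\setminus N_r(u)$, and $N_r(u)\cap N_r(v)$.

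For the direction $\alpha(G') \leq \alpha(G)$, I would define an unfolding map sending a GIS $I'$ of $G'$ to $I' $ itself when $v' \notin I'$ and to $(I' \setminus \{v'\}) \cup \{u,v\}$ when $v' \in I'$. Validity is immediate since $v'$ inherits the permanent neighborhood of $u$ and $v$, and $\{u,v\} \notin E_p$. A direct computation, using the identities above, shows that the net benefit is preserved; crucially, the $-p(\{u,v\})$ term sitting inside $w(v')$ exactly accounts for the penalty of $\{u,v\}$ when $\{u,v\}\in E_r$.

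The reverse direction $\alpha(G) \leq \alpha(G')$ is the main obstacle, because an optimal GIS $I$ of $G$ might contain exactly one of $u,v$, or contain neither while blocking both by a permanent neighbor. Here I would use the two weight hypotheses to argue that any GIS $I$ of $G$ can be modified, without decreasing net benefit, to one satisfying either $\{u,v\} \subseteq I$ or $\{u,v\} \cap I = \varnothing$ with $N_p(u) \cap I \neq \varnothing$. Concretely, if $u \in I$ and $v \notin I$, then $N_p(v) = N_p(u)$ is disjoint from $I$ and $\{u,v\} \notin E_p$, so $I \cup \{v\}$ is a valid GIS; its net-benefit change is $w(v) - \sum_{x \in N_r(v)\cap I} p(\{v,x\}) \geq w(v) - \sum_{x \in N_r(v)} \max(0,p(\{v,x\})) \geq 0$ by hypothesis. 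The symmetric case and the case where neither $u$ nor $v$ is in $I$ but $N_p(u) \cap I = \varnothing$ are handled by adding $u$ and $v$ in turn. On these canonical forms I would define the folding map by $I \mapsto (I \setminus \{u,v\}) \cup \{v'\}$ when $\{u,v\} \subseteq I$, and $I \mapsto I$ otherwise; the same bookkeeping as in the easy direction gives equality of net benefits. The iff statement then follows because the folding and unfolding maps are mutually inverse on the classes $\{I \in MGIS(G) : \{u,v\}\subseteq I\}$ and $\{I' \in MGIS(G') : v' \in I'\}$, each preserving net benefit.
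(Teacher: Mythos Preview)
Your proposal is correct and follows the same core idea as the paper: the weight hypotheses $w(u)\ge \sum_{x\in N_r(u)}\max(0,p(\{u,x\}))$ and the symmetric one for $v$ allow any optimal $I$ containing exactly one of $u,v$ to be upgraded to one containing both, after which the fold/unfold correspondence is net-benefit-preserving. The paper's own proof is much terser---it records only the canonicalization step (if $u\in I$ then $I\cup\{v\}\in MGIS(G)$) and then asserts ``therefore we can fold,'' leaving the explicit construction of the two maps, the net-benefit bookkeeping, and the verification of $\alpha(G)=\alpha(G')$ to the reader. Your treatment spells all of this out, and you also handle two subtleties the paper glosses over: that the convention $p(\{a,b\})=0$ for non-edges (stated in the paper only for R9--R11) must be in force for the formulas $w(v')=w(u)+w(v)-p(\{u,v\})$ and $p'(\{v',x\})=p(\{u,x\})+p(\{v,x\})$ to make sense, and that $x$ should range over $(N_r(u)\cup N_r(v))\setminus\{u,v\}$ rather than the literal union.
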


\noindent

\section{Reduction-driven Local Search} \label{sec: local search}


Now, we elaborate on the local search algorithm driven by these 14 reduction rules for solving the GIS problem.
This reduction-driven local search (RLS) algorithm is presented in Algorithm~\ref{alg:local_search}.


\begin{algorithm}[h!]
\caption{The reduction-driven local search for GIS} \label{alg:local_search}
\KwIn{a graph $G=(V,E)$ with vertex profits and edge penalties, cutoff time}
\KwOut{ the best solution $S^*$ found}
$G,S_{init}\leftarrow \textit{Complete\_Reduction}(G)$;\label{line:reduction} \label{line:begin}
\tcc{Pre-process to obtain a kernel graph and partial solution via $R1$-$R14$} 
\While{cutoff time is not satisfied}{ \label{line:while begin}
       \tcc{construct an initial solution using $R1$-$R14$}
      $S\leftarrow S_{init}\cup \textit{Random\_Peeling}(G)$;\label{line:construct} \\
      \tcc{improve $S$ via local search and partial reductions}
      $S\leftarrow \textit{Neighborhood\_Search\_with\_Partial\_Reduction}(G,S)$;  \label{line:improve} \\
      \tcc{update the best visited solution}
      \lIf{$nb(S)>nb(S^*)$}{    \label{line:update begin} 
             $S^*\leftarrow S$   \label{line:update end} \label{line:while end}
        } 
}  
\Return $S^*$;

 \end{algorithm}

The RLS begins by pre-processing the original graph $G$. 
It uses the \textit{Complete\_Reduction} procedure which applies all the reduction rules to obtain a \textit{kernel graph}, that is, a graph { where no further reduction rules can be applied}. Due to the correctness of the reduction rules, solving the GIS problem in the original graph is equivalent to solving it in this kernel graph.
The $S_{init}$ is a partial vertex set that { must be a subset of at least one optimal solution}.

The remaining steps of the RLS follow the traditional iterated local search algorithm (\cite{glover1998tabu,lourencco2019iterated}): 
At each iteration, it supplements $S_{init}$ with a random generalized independent set obtained by \textit{Random\_Peeling}, then uses \textit{Neighborhood\_Search\_with\_Partial\_Reduction}  to improve the current solution (line~\ref{line:construct}-\ref{line:improve}) . 
The current solution is denoted by $S$ and the best-known solution visited by the iterative search is kept in $S^*$ (line~\ref{line:update begin}). 
The iteration continues until the cutoff time limit is met. 
Compared with traditional local search, the RLS seamlessly integrates reduction techniques into the pre-processing, the initial solution building, and the neighborhood searching procedures.




\begin{algorithm}[ht!]
\caption{The Complete\_Reduction algorithm} \label{alg:apply_reduction}
\KwIn{a graph $G=(V,E)$ with vertex profits and edge penalties}
\KwOut{updated graph $G$ and partial solution $S$}
$S\leftarrow \varnothing$; 
\tcc{ initial an empty solution}  
\tcc{initial candidate sets for 14 reduction rules} \label{line: reduce cand initial}
$\text{Cand}_i\leftarrow E_r$ for $i=1,2$; $\text{Cand}_i\leftarrow V$ for $i=3,4,6$; $\text{Cand}_5\leftarrow \{v\in V\mid w(v)<0\}$; \\
$\text{Cand}_7\leftarrow \{v\in V\mid |N(v)|=1\}$; $\text{Cand}_8\leftarrow\{v\in V\mid |N(v)|=2\}$; $\text{Cand}_9\leftarrow \text{Cand}_8$; \\
$\text{Cand}_{10}\leftarrow \{v\in V\mid |N_p(v)|=1\}$; $\text{Cand}_{11}\leftarrow\{v\in V\mid |N_p(v)|=2\}$; \\
$\text{Cand}_i\leftarrow E_p$ for $i=12,13$; $\text{Cand}_{14}\leftarrow V\times V\setminus E_p$; \label{line: reduce cand end} \\ 
\While{$\exists i: \text{Cand}_i\neq \varnothing$}{ \label{line: reduce while begin} 
\tcc{exhaustively apply reduction rules}
    \For{each $i=1,2,\ldots,14$}{
        \For{each $elem \in \text{Cand}_i$}{
        %
        \If{the condition of $Ri$ is satisfied by checking $elem$}{
            change the structure, profits and penalties of $G$ and set $S$ according to $Ri$; \\
            update the Cand$_j$ if $Rj$ reachable from $Ri$ in Figure \ref{fig:reduction dependency};\\
        }         
        Remove $elem$ from Cand$_i$;
        }
    }
}  \label{line: reduce end}
\Return $G,S$;
 \end{algorithm}

\subsection{Complete Reduction as Pre-processing} \label{sec: apply reduction}
The Complete\_Reduction algorithm exhaustively applies the 14 reduction rules described in Section~\ref{sec: reduction}. 
A naive implementation is to check every rule independently, process the graph when applicable, and stop when no rules apply. 
The times for checking all these rules are bounded by { $O(|V|+|E|)$} as it requires scanning the entire graph for each rule.
Given the large size of real-world graphs, this is inefficient in practice. We propose a more efficient implementation that incrementally detects applicable rules.
Specifically, we observe that rules are interdependent, with one application triggering others. For example, if only the condition of $R1$ is satisfied in graph $G$ (i.e., there exists an edge $\{u, v\}\in E_r$ with $p(\{u,v\}) = 0$), then the edge $\{u, v\}$ is removed, potentially enabling the application of $R3$-$14$, while the remaining rules remain inapplicable. Hence, we make an in-depth analysis of this dependency relationship for all these rules, and we show the result in Figure~\ref{fig:reduction dependency}.
In Algorithm~\ref{alg:apply_reduction}, we implement this reduction algorithm. It takes a graph $G$ as its input and outputs a kernel graph and a set $S\subseteq V$. 
In detail, it initializes a special { \textit{candidate set} Cand$_i$}, which contains either vertices, edges, or vertex pairs, for a rule $Ri$ where $i=1,...,14$.  Lines 1-4 show how Cand$_i$ is initialized.  
When an element $elem\in \text{Cand}_i$ satisfies the condition of  $Ri$, the current graph $G$ is changed and the current set $S$ is updated according to $Ri$.
Along with the change of graph $G$, if another reduction rule $Rj$ can be applied (i.e., $Rj$ is reachable from $Ri$ in the dependency graph in Figure~\ref{fig:reduction dependency}), then the vertices, edges, or vertex pairs which are ``neighbor elements'' of $elem$ are moved into Cand$_j$. { The process continues recursively until all candidate sets are empty.}
Therefore, the algorithm checks whether each reduction rule is satisfied by tracking the candidate set in an incremental manner.

{ In our implementations, $R1$ and $R2$ are prioritized because the removal of edges may enable more reduction rules. $R3$-$R6$ are assigned secondary priority, as they directly identify vertices that are in at least one set in MGIS(G), which may further shrink the candidate set.
$R7$-$R11$ are of the third priority because they can identify vertices that are in at least one set in MGIS(G) in some cases. 
$R12$-$R14$ are put at the last because they are not able to identify such vertices.}


\begin{figure}[ht!]
    \centering
     \caption{The dependency graph of the reduction rules. 
     A directed edge on the left side indicates how the reduction rule changes the graph $G$ to $G'$. A directed edge on the right side indicates what reduction rules can be satisfied due to this change. }
    \includegraphics[width=0.44\linewidth]{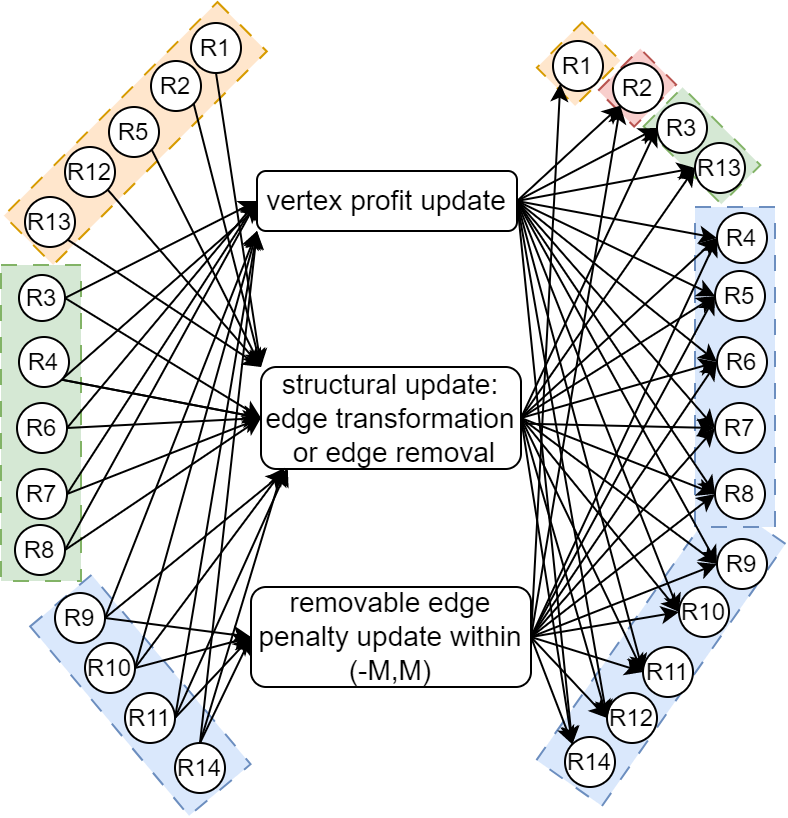}
   
    \label{fig:reduction dependency}
\end{figure}

\begin{algorithm}[ht!]
\caption{The Random\_Peeling algorithm} \label{alg:random_peeling}
\KwIn{a graph $G=(V,E)$ with vertex profits and edge penalties}
\KwOut{ a solution $S$ to the GIS problem}
initialize an empty set $S$; \\ \label{line:empty}
\While{$|V\setminus N_p[S]|>0$}{ \label{line: peel_while_begin}
    $u\leftarrow$ randomly select a vertex from $V\setminus N_p[S]$; \label{line:select} \\ 
    
    $G,S\leftarrow \textit{Complete\_Reduction}(G[V\setminus \{u\}],S)$;\label{line:reduce} \label{line:peel}
}
extend $S$ to be a maximal generalized independent set; \label{line: extend} \\
\Return $S$;
 \end{algorithm}
\subsection{Random Peeling for Initial Solutions}

Random-peeling is a simple random procedure for generating an initial solution.
It starts with an empty vertex set $S$.
When $V\setminus N_p[S]$ is not empty (meaning that $S$ can possibly be a larger generalized independent set), a vertex $u$ is picked randomly from $V\setminus N_p[S]$ and $u$ is moved into $S$. After the removal of $u$ from $G$, the Complete\_Reduction is used again to further shrink the remaining graph $G$ (line~\ref{line:reduce}). When $V\setminus N_p[S]$ becomes empty, we further extend $S$ by randomly picking a vertex $v$ { from the randomly removed vertices} such that $v\notin N_p[S]$ and $nb(S\cup \{v\})>nb(S)$. The $S$ is returned when { there is no other generalized independent set $S'$ such that $S\subset S'$ and $nb(S')\geq nb(S)$. Clearly $S$ is a maximal generalized independent set.}


\subsection{Neighborhood Search} \label{sec:operator}
To improve the current solution, we use an iterated tabu search with three \textit{move operators}.

\noindent
{\bf Move operators.} In the RLS, there are three move operators that update the current solution: ADD, SWAP, and the novel REDUCE operator based on two reduction rules.
Given the current solution $S\subseteq V$, the ADD$(v)$ operator displaces a vertex $v$ (if any) from $V\setminus N_p[S]$ to the $S$ and the SWAP$(u,v)$ operator adds a vertex $v\in N_p(S)$ if $|N_p(v)\cap S|=1$ and removes the only vertex $u\in N_p(v)\cap S$ from the current solution $S$.
Indeed, ADD and SWAP are widely used in existing local search methods such as \cite{zhou2017frequency,jin2015general}.  
The REDUCE$(V\setminus N_p[S])$ operator applies $R3$ and $R4$ to include candidate vertices that are guaranteed to be in a set in $MGIS(G)$. 

\begin{algorithm}[ht!]
\caption{The Neighborhood\_Search\_with\_Partial\_Reduction algorithm} \label{alg:neighborhood_search}
\KwIn{a graph $G=(V,E)$ with vertex profits and edge penalties, initial solution $S$,  {search tolerance $L\in \mathbb{N}$}, perturbation parameter $\epsilon\in (0,1)$}
\KwOut{a solution $S$ to the GIS problem}
$S^*\leftarrow S$, $tabu\_list\leftarrow \varnothing$; \label{line: initial_S} \\
\While{True}{ \label{line: nei_loop_begin}
    $M_1\leftarrow V\setminus N_p[S]$, \label{line: update_Candidate_begin} 
    $M_2\leftarrow \{v\in N_p(S)\mid u\in S, \{u,v\} \in E_p, |N_p(v)\cap S|=1\} \setminus tabu\_list$ \\ \label{line: update_Candidate_end}
    $v\leftarrow$ a vertex with the largest move gain from $M_1 \cup M_2$;\\
    \lIf*{$v\in M_1$}{  \label{line: add_operator}
        $S\leftarrow S\cup\{v\}$, $M_1\leftarrow M_1\setminus \{v\}$ \label{line: add_operator_end} \tcc{add $v$ into $S$}  
    }\uElseIf{$v\in M_2$} {  \label{line: swap_operator}
        $u\leftarrow N_p(v)\cap S$,
        $S\leftarrow S\cup\{v\}\setminus \{u\}$;   \tcc{Swap $v$ and $u$ }
        add $u$ into $tabu\_list$ with tabu tenure $T_u= 10+random(0,|M_2|)$ \label{line: swap_operator_end}
    }
    $REDUCE(M_1)$ and reassign $S$; \label{line: search-reduction} \tcc{apply $R3$ and $R4$}
    \lIf*{$nb(S)>nb(S^*)$}{ \label{line: nei_update_best}
        $S^*\leftarrow S$, $l\leftarrow 0$ }\lElse*{ \label{line: search_depth}
        $l\leftarrow l+1$
    }
    
    \uIf{$l>|S|$ or $M_1=\varnothing \& M_2=\varnothing$}{ \label{line: perturbation begin}
    sequentially drop $\epsilon |S|$ vertices with the least move gain from $S$, ties breaking randomly; \tcc{Perturbation}}   \label{line: perturbation}
    \lIf{$l>L$}{break} \label{line: break}
}
\Return $S^*$;
 \end{algorithm}

\noindent {\bf Fast Calculation of Move Gain.}
For each vertex $u\in V$, we maintain $B_u = w(u)- \sum_{v\in S,\{u,v\}\in E_r}p(\{u,v\})$. 
Then, the move gain of $ADD(v)$---the gain of the net benefit after applying a $ADD(v)$, is $B_v$, and the move gain of $SWAP(u,v)$ is $B_v-B_u$.
After an $ADD(v)$ move, for each $u\in N_r(v)$, we update $B_u$ as $B_u\coloneqq B_u-p(\{u,v\})$.
After a $SWAP(u,v)$ move, for each $x\in N_r(u)$, $B_x\coloneqq B_x+p(\{x,u\})$ and for each $y\in N_r(v)$, $B_y\coloneqq B_y-p(\{y,v\})$.
The time complexity of performing $ADD$ or $SWAP$ is $O(\Delta)$, and of REDUCE is $O(|E|)$. 
The overall complexity of applying all three operators is $O(|E|)$.
Notably, the REDUCE operator is not limited to the GIS problem and can enhance the efficiency of local search in other optimization problems.




\noindent
{\bf Neighborhood Search with Tabu and Perturbation.}
The neighborhood search in the RLS iteratively performs the three basic operators above to improve the current solution $S$, as shown in Algorithm~\ref{alg:neighborhood_search}.

At each iteration in lines~\ref{line: initial_S}-\ref{line: break}, it first builds $M_1$, the set of vertices that can be added into $S$ by $ADD$, and $M_2$, the set of vertices that can be swapped into $S$ by $SWAP$.
By evaluating $M_1$ and $M_2$, $ADD$ or $SWAP$ with the largest move gain is applied, as presented in lines 4-8.
After performing $ADD$ or $SWAP$, we additionally use the REDUCE operator to add more vertices to $S$.
This narrows the search space $M_1=V\setminus N_p[S]$ of $ADD$.

To avoid short-term search cycles, a tabu list is used to exclude recently visited vertices for successive iterations (line~\ref{line: swap_operator_end}). The tabu tenure for a vertex is set to $10+random(0,|M_2|)$, as commonly used in local search algorithms \citep{wu2023solving,zhou2017frequency,jin2015general}.
To further avoid local optima, a perturbation step is applied during neighborhood search (lines~\ref{line: perturbation begin}-\ref{line: perturbation}). Specifically, when we detect that no operators can be applied or the best solution remains unchanged after a certain number of iterations, we remove a portion of low-contribution vertices.
The whole neighborhood search terminates when the consecutive number of iterations without improvement reaches the { search tolerance} parameter $L$ (line~\ref{line: break}).




\section{Computational Experiments} \label{sec: experiments}
This section is dedicated to a comprehensive evaluation of the proposed reduction rules and heuristic approaches. 
To evaluate the algorithm performance, we compare both solution quality and runtime between the RLS and other benchmark algorithms. To evaluate the effect of the reduction rules, we compare the algorithms with and without reductions; we also conduct an ablation study to evaluate the effect of each reduction rule.

\subsection{Datasets}
The proposed reduction rules and the RLS algorithm are evaluated on three sets of $216+47+15=278$ instances, covering various application scenarios.

{\em SET-1.}
The first set of instances consists of $216$ standard instances with up to $400$ vertices and up to $71,820$ edges.
These instances were first introduced by \cite{colombi2017generalized} and tested in \cite{nogueira2021iterated,hosseinian2019algorithms,zheng2024exact}. 
This set of instances consists of 12 graphs with 125 to 400 vertices and 6,963 to 71,820 edges (available at https://or-dii.unibs.it/instances/Instances\_GISP.zip).

{\em SET-2.}
The second set of instances consists of $47$ medium-sized instances with up to $17,903$ vertices and up to $1,997,732$ edges, introduced by \cite{zheng2024exact} as benchmark instances for the GIS problem (available at https://github.com/m2-Zheng/GISP/tree/main).

{\em SET-3.}
To further evaluate the algorithms on large instances and diverse real-world scenarios, we introduce 15 instances with up to 18 million vertices and 261 million edges. The first 4 smaller instances represent ecosystem and road/facility networks, while the remaining 11 larger instances are drawn from hyperlinks, citations, and social networks.
The instances were generated using the same method for generating SET-2 (available at https://github.com/PlutoAiyi/RLS/tree/main).

\subsection{Experimental Settings}

All experiments are conducted on a machine with an Intel(R) Xeon(R) Platinum 8360Y CPU @ 2.40GHz and 2{TB} RAM. 

\noindent
{\bf Benchmark Algorithms.}
To evaluate the effectiveness of our proposed reduction rules and the RLS algorithm, we compare them with the best-performing algorithms for the GIS problem. 
\begin{itemize}[leftmargin=*]
     \item CB\&B: The exact branch-and-bound algorithm using the UBQP formulation proposed by \cite{hosseinian2019algorithms}. 

    \item LA-B\&B: The exact branch-and-bound algorithm using an Lagrange relaxation techniques by \cite{zheng2024exact}. 
    \item CPLEX: The mixed integer programming solver with version 22.12 (released on December 2024).
    
    \item ILS-VND: The iterative local search algorithm proposed by  \cite{nogueira2021iterated}. 
    
    \item ALS: The adaptive local search algorithm proposed by \cite{zheng2024exact} 

    \item RLS: Our proposed reduction-driven local search algorithm. The algorithm is implemented in C++. 


    
\end{itemize}
The source codes of CB\&B, LA-B\&B, ILS-VND and ALS were made available by the authors. All source codes are available at https://github.com/PlutoAiyi/RLS.


\noindent{\bf Parameter Settings.} The RLS algorithm requires two parameters:
the { search tolerance} $L$ and the perturbation parameter $\epsilon\in (0,1)$.
We set $L=10|V|$ where $|V|$ is the number of vertices of the input graph and $\epsilon= 0.2$. This setting was obtained using the general IRACE automatic parameter configuration tool \citep{lopez2016irace}. For large instances like {\em soc-pokec, LiveJ} and  {\em uk2002}, we decrease the { search tolerance} as $L=0.5|V|$ to encourage explorations within the cutoff time.

\subsection{Performance Comparison with Existing Algorithms} 

We first compared five algorithms on the SET-1 instances. 
The two branch \& bound algorithms CB\&B and LA-B\&B ran with a cutoff time of 3 hours (10,800 seconds) for each instance. 
If an exact algorithm fails to finish within the cutoff time, then we report the best solution found so far. 
The remaining three heuristic algorithms ran 10 trials with a cutoff time of 30 seconds for each instance.
We reported the largest value among the best-known objective values obtained in these trials. 
The setting is consistent with that in \citep{hosseinian2019algorithms,nogueira2021iterated}. 

In Table~\ref{tab: Set1 summary}, we summarize the computational results on SET-1.  In the last two columns, we report the number of instances where each algorithm achieves the best (\textit{\#Best}) and worst (\textit{\#Worst}) performance, with respect to the best-known objective value and runtime, respectively.
Note that the runtime of the RLS refers to the wall-clock time, including both pre-processing and local search. 
We observe that, in all instances, the best-known objective values obtained by local search algorithms are comparable to those obtained by exact algorithms. 
In particular, if an exact algorithm obtains the optimal solution within 3 hours, local search algorithms are also able to find an optimal solution.
However, when neither of the exact algorithms finds the optimal solution, local search algorithms always produce a better solution. 
Besides, the exact LA-B\&B dominates CB\&B whereas the heuristics ALS and the RLS dominate ILS-VND in terms of both solution quality and runtime. Therefore, we only consider LA-B\&B, ALS, and the RLS for even larger instances in SET-2 and SET-3.
Detailed results on SET-1 are shown in Table~\ref{tab: Set1 detail} in Appendix~\ref{app: set1}.

\begin{table}[ht!]
\caption{Summary of comparative results on SET-1 instances}
\label{tab: Set1 summary}
\centering
\footnotesize
\resizebox{0.5\textwidth}{!}{%
\begin{tabular}{|cc|c|cc|cc|}
\hline
\multicolumn{2}{|c|}{\multirow{2}{*}{Algorithm}} & \multicolumn{1}{l|}{} & \multicolumn{2}{c|}{Indicator: net benefit} & \multicolumn{2}{c|}{Indicator: time(s)} \\
\cline{4-7}
 & & \multicolumn{1}{l|}{\#Total} & \#Best & \#Worst  & \#Best & \#Worst \\ \hline 
\multirow{2}{*}{Branch \& Bound } &  CB\&B & 216  & 126 & 88  & 78 & 138  \\
   & LA-B\&B & 216  & 155 & 3  & 216 & 0  \\ 
\cline{1-2}
\multirow{3}{*}{Local Search} &ILS-VND & 216  & 215 & 0  & 164 & 52  \\
&ALS & 216  & 216 & 0  & 216 & 0  \\
&RLS & 216  & 216 & 0  & 216 & 0  \\ \hline
\end{tabular}%
}
\end{table}

\begin{table*}[ht]
\caption{Comparative results of RLS on SET-2 instances}
\label{tab: Set2}
\centering
\resizebox{0.9\textwidth}{!}{
\scriptsize
\begin{tabular}{|cccc|ccc|ccc|ccc|ccccc|}
\hline
\multirow{2}{*}{Instance} & \multirow{2}{*}{$|V|$} & \multirow{2}{*}{$|E_p|$} & {\multirow{2}{*}{$|E_r|$}}  &  \multicolumn{3}{c|}{ CPLEX} & \multicolumn{3}{c|}{LA-B\&B} & \multicolumn{3}{c|}{ALS} & \multicolumn{5}{c|}{RLS} \\ \cline{5-18} 
 &  &  & {} & BKV & {$|V_I|$} & htime(s) & BKV & { $|V_I|$} & { htime(s)} & BKV & { $|V_I|$} & \multicolumn{1}{c|}{htime(s)} & BKV & {$|V_I|$ }& htime(s) & \multicolumn{1}{c}{$|V_{\text{ker}}|$} &  time$_{\text{ker}}$(s)\\ \hline
bio-yeast & 1,458 & 980 & 968  &\textbf{68,574}$^*$ &1,111& $<$1 & 66,545 & 1,062 & 2 & {68,572} &1,113  & 9& \textbf{68,574}$^*$ &1,111  & {$<$1} & 0   & $<$1 \\
soc-wiki-Vote & 889 & 727 & 2,187  &\textbf{37,358}$^*$&648& $<$1 & 35,973 &615 & $<$1 & \textbf{37,358}$^*$ &648  & 8& \textbf{37,358}$^*$ &648  & {$<$1} & 0   &$<$1\\
vc-exact\_131 & 2,980 & 1,191 & 4,169  &\textbf{109,641}$^*$&2,098& $<$1 & 107,812 &2,045 & 14 & {109,392}&2,121  & 14& \textbf{109,641}$^*$ &2,098  & {$<$1} & 0   &$<$1\\
web-edu & 3,031 & 4,813 & 1,661  &\textbf{114,140}$^*$&1,752& $<$1 & 109,856 &1,644 & 10  & {113,781}&1,746 & 16& \textbf{114,140}$^*$ &1,753  & $<$1 & 98  &$<$1\\
MANN\_a81 & 3,321 & 3,118 & 3,362  &\textbf{117,327}$^*$&2,084& $<$1 & 114,616 &1,999 & 16  & {116,907} &2,125 & 14& \textbf{117,327}$^*$ &2,084  & {$<$1} & 0  & $<$1\\
tech-routers-rf & 2,113 & 1,539 & 5,093  &\textbf{97,799}$^*$&1,668&$<$1 & 93,690&1,588 & 54 & \textbf{97,799}$^*$ &1,668 & 11& \textbf{97,799}$^*$ & 1,667  & {$<$1} & 0  &$<$1\\
vc-exact\_039 & 6,795 & 2,574 & 8,046  &\textbf{259,542}$^*$&5,018& $<$1 & 254,941&4,908 & 253 & {258,137}&5,164 & 9& \textbf{259,542}$^*$ & 5,018  & {$<$1} & 0  &$<$1\\
ca-GrQc & 4,158 & 3,179 & 10,243 &\textbf{174,782}$^*$&2,991&9 & 167,846 &2,871 & 45 & {174,654} &2,997 & 15& \textbf{174,782}$^*$ &2,994  & 11& 282  &$<$1\\
vc-exact\_038 & 786  & 10,321 &3,703 &\textbf{12,015}$^*$&208& 589 & 10,406 &168 & 1,994 & {12,014}&208 & 12& \textbf{12,015}$^*$ &207  & 4& 697  &$<$1\\
vc-exact\_078 & 11,349 & 4,289 & 13,450 &\textbf{438,671}$^*$&8,330&$<$1 & 431,371 &8,128 & 1,074 & {435,412} &8,671 & $<$1 & \textbf{438,671}$^*$ & 8,330  & {$<$1} & 0  &$<$1\\
vc-exact\_087 & 13,590 & 5,126 & 16,114 &\textbf{518,500}$^*$&10,025& 2 & 509,734 &9,809 & 2,319 & {517,500} &10,083 & 2& \textbf{518,500}$^*$ &10,026  & {$<$1} & 0  &$<$1\\
vc-exact\_107 & 6,402 & 15,646 & 5,594 &\textbf{407,878}$^*$&6,763& $<$1 & 396,456&6,488 & 10,646 & {400,223} &6,846 & 1& \textbf{407,878}$^*$ &6,763  & {$<$1} & 0  &$<$1\\
vc-exact\_151 & 15,783 & 12,144 & 12,519 & \textbf{530,138}$^*$&9,526& 2 & 516,445 &9,135 & 2,402 & {522,462}& 9,764 & 1& \textbf{530,138}$^*$ & 9,526  & {$<$1} & 0  &$<$1\\
vc-exact\_167 & 15,783 & 18,235 & 6,428 &\textbf{465,178}$^*$&7,797& 1 & 449,579 &7,438 & 10,469 & {457,272}&7,923 & 2& \textbf{465,178}$^*$ &7,797  & {$<$1} & 0  &$<$1\\
bio-dmela & 7,393 & 18,897 & 6,672 &\textbf{302,992}$^*$&4,952& $<$1 & 280,897 &4,474 & 170 & {302,449} &4,959  & 10& \textbf{302,992}$^*$ &4,951 & $<$1 & 44  &$<$1\\
vc-exact\_011 & 9,877 & 19,102 & 6,871 &\textbf{303,793}$^*$&5,369&1 & 286,137&4,950 & 4,818 & 301,789 &5,487 & 7& \textbf{303,793}$^*$&5,387   & $<$1 & 172   &$<$1\\
web-spam & 4,767 & 8,089 & 28,386 &\textbf{190,507}$^*$&3,285&4.61  & 180,446&3,130 & 53 & {190,344}&3,299 & 14& 190,459 &3,289 & 13& 487  &$<$1\\
vc-exact\_026 & 6,140 & 27,159 & 9,608 &\textbf{201,509}$^*$&3,887& 1 & 177,902 &3,372&668 & {201,316} &3,922 & 17& 201,503&3,900  & $<$1 & 300  &$<$1\\
vc-exact\_001 & 6,160 & 19,655 & 20,552 &\textbf{215,941}$^*$&4,129& 2 & 194,984 &3,709 & 9,575  & {215,668} &4,175  & 13& \textbf{215,941}$^*$ &4,128  & $<$1 & 134 &$<$1\\
vc-exact\_024 & 7,620 & 11,367 & 35,926 &\textbf{236,109}&4,632& 10,783 & 226,540&4,438 & 244 & {235,440} &4,735 & 14& 236,018& 4,659  & 12& 1,255  &$<$1\\
C1000.9 & 1,000 & 11,989 & 37,432 &5,263&99&10,800 & {6,104} &116 & 430 & {7,229} &126 & 12& \textbf{7,239} &125  & 10& 1,000  &$<$1\\
tech-WHOIS & 7,476 & 42,181 & 14,762 &\textbf{325,584}$^*$&5,368&5 & 309,903&5,050 & 218  & {325,327} &5,367 & 14& 325,581& 5,371 & 13& 338 &$<$1\\
p-hat700-3 & 700 & 30,398 & 31,242 &4,025 &68& 10,800 & {3,030}&52 & 7,436  & \textbf{4,473} &78  & $<$1 & \textbf{4,473}&78  & $<$1 & 700 &$<$1\\
vc-exact\_008 & 7,537 & 35,678 & 37,155 &\textbf{247,091}$^*$&4,800&3 & 210,126&4,089 & 9,093 & {246,771} &4,839 & 14& \textbf{247,091}$^*$ &4,810  & $<$1 & 574  &$<$1\\
keller5 & 776 & 17,800 & 56,910 &2,373&58& 10,800 & {2,860} &54 & 6,327   & \textbf{3,877} &69  & 1& \textbf{3,877} &69  & 2& 776  &$<$1\\
vc-exact\_194 & 1,150 & 39,520 & 41,331 &3,734&77&10,708 & {4,386} &73 & 1,324 & \textbf{5,419}&86  & $<$1 & \textbf{5,419} & 86 & $<$1 & 1,150  &$<$1\\
hamming10-4 & 1,024 & 66,369 & 23,231 &3,256&52& 8,108 & {3,073} &45 & 2,069 & \textbf{3,660} &54  & $<$1 & \textbf{3,660} &54  & $<$1 & 1,024  &$<$1\\
brock800\_2 & 800 & 26,629 & {84,805} &1,177&49&10,800 & {1,996} &36 & 10  & \textbf{2,473} &44  & $<$1 & \textbf{2,473} &43  & $<$1 & {800}  &$<$1\\
brock800\_4 & 800 & 26,831 & {85,126} &1,583&48&10,800  & {1,979} &39& 7,215 & \textbf{2,501} &43  & $<$1 & \textbf{2,501} &42  & $<$1 & {800}  &$<$1\\
ca-HepPh & 11,204 & 57,721 & 59,898 &\textbf{372,096}&6,094&10,800 & 348,738 &5571 & 422 & {370,779} &6,083 & 6& 372,079 &6,098 & 17& 2,165  &$<$1\\
p-hat700-2 & 700 & 90,788 & 32,134 &2,870&41& 10,140  & {2,345} &37 & 2,078 & \textbf{2,971} &42  & $<$1 & \textbf{2,971} &42  & $<$1 & 700  &$<$1\\
vc-exact\_196 & 1,534 & 93,273 & 32,809 &3,916&62& 10,800 & {4,273}&57 & 66 & \textbf{5,321}&76  & $<$1 & \textbf{5,321} &76 & $<$1 & 1,534  &$<$1\\
p-hat700-1 & 700 & 135,804 & 47,847 &730&11& 10,800  & \textbf{844}$^*$ &11 & 5& \textbf{844}$^*$ &11  & $<$1 & \textbf{844}$^*$ &11  & $<$1 & 700  &$<$1\\
ca-AstroPh & 17,903 & 47,135 & 149,837 &\textbf{604,288}&10,468& 10,800 & 558,736 & 9,752 &2,544 & {602,320} &10,521 & 2& 604,019 &10,491 & 14& 4,985  & 1\\
C2000.9 & 2,000 & 147,548 & 51,920 &3,958&65&9,489 & {5,477} &73 & 11 & \textbf{6,396} &86 & 3& \textbf{6,396} &86  & 3& 2,000  &$<$1\\
DSJC1000.5 & 1,000 & 59,694 & 189,980 &925&39& 7,645 & {1,515} &30 & 694 & \textbf{1,861} &32  & {1}& \textbf{1,861}&32  & $<$1 & {1,000}  &$<$1\\
socfb-MIT & 6,402 & 123,064 & 128,166 &7,098&124&10,800 & 110,677 &1,779 & 20  & {129,807} &2,142 & 17& \textbf{129,917} &2,132 & 16& 5,494 &$<$1\\
p-hat1500-3 & 1,500 & 66,206 & 210,800 &4,097&132& 10,800  & {3,877}& 83 & 704 & \textbf{7,501} &148  & 14& \textbf{7,501} &148 & 16& 1,500  &$<$1\\
socfb-UCSB37 & 14,917 & 356,918 & 125,297 &16,708&287&10,800 & 219,965 &3,286 & 256 & {254,965} &4,013 & 16& \textbf{255,924}& 3,994 & 22& 13,655  &3\\
socfb-Duke14 & 9,885 & 374,731 & 131,706 &11,565&198&10,800 & 125,940 &1,887 & 39 & {151,169} &2,415 & 15& \textbf{151,365} &2,390  & 18& 9,003  &4\\
p-hat1500-2 & 1,500 & 411,372 & 143,918 &3,542&53& 10,800  & {2,611}&43 & 2,150 & \textbf{4,253} &68  & $<$1 & \textbf{4,253}&68  & $<$1 & 1,500  &$<$1\\
socfb-Stanford3 & 11,586 & 136,823 & 431,486 &7,250&122&10,800  & 215,381& 3,932 & 265 & {251,350} &4,441 & 12& \textbf{251,932} &4,436  & 14& 9,380  &2\\
socfb-UConn & 17,206 & 448,038 & 156,829 &16,996&279&10,800 & 231,718&3,471 & 322 & {274,637}&4,329  & 12& \textbf{275,320} &4,299  & 18& 16,025  &5\\
p-hat1500-1 & 1,500 & 411,744 & 427,583 &0&0&10,800 & {1,184} &18 & 5,499 & \textbf{1,287} &19  & $<$1 & \textbf{1,287} &19 & 1& 1,500  &$<$1\\
C2000.5 & 2,000 & 489,894 & {509,270} &0&0&10,800 & {1,581} &24 & 865 & \textbf{1,849} &26  & {2}& \textbf{1,849}&26  & 6& {2,000}  &1\\
keller6 & 3,361 & 759,816 & {266,766} &0 &0& 10,800  & {3,545}&55 &1,561 & \textbf{5,236} &73 & {10}& \textbf{5,236} &73 & 16& {3,361}  &$<$1\\
C4000.5 & 4,000 & 1,957,884 & {2,039,848} &0&0&10,800 & {1,611} &25 & 1,659 & {1,948} &28  & {20}& \textbf{1,952}&28  & 22& {4,000}  &9\\ \hline
\multicolumn{4}{|c|}{\# of best / \# of instances} & 24/47 & -- & --
& 1/47 & -- & -- &18/47  &-- & --  & \textbf{41}/47 &--    & -- & --  &--\\ \hline
\multicolumn{4}{|c|}{average running time(s)} &--&-- &5,402.1 &-- &-- &2,087.4  & -- &-- & 7.9 & -- &-- & \textbf{5.9}  &  -- & 1.4\\ 
\hline
\end{tabular}
}
\end{table*}

Next, we compared the RLS with LA-B\&B, ALS { and CPLEX} on SET-2 and SET-3. 
Each of the two heuristic ALS and the RLS were run 10 trials and reported its best known objective value. Each trial was given a cutoff time of 30 seconds. In total, each heuristic algorithm ran 300 seconds per instance in SET-2. The settings are the same as in \cite{zheng2024exact}. 
The cutoff time of exact algorithms  was directly set to {3 hours (10,800 seconds)} per instance in SET-2.
However, for SET-3, the heuristic trial time was increased to 200 seconds.
For large instances like {\em soc-pokec} and {\em LiveJ}, the cutoff time of ALS and the RLS were given 1,000 seconds per trial. For the larger {\em uk2002}, the cutoff time of LA-B\&B and CPLEX was 100,000 seconds, and the cutoff time of ALS and the RLS is 10,000 seconds per trial.
In particular, for the three instances {\em soc-pokec}, {\em LiveJ} and {\em uk2002}, the RLS excluded $R6$ and $R12$–$14$, { as these reductions exhibited considerably slower performance compared to the other rules in the experiments.}
For LA-B\&B and CPLEX, if the algorithm did not finish within the total running time, the best solution found so far was reported.

The comparative results on SET-2 and SET-3 are presented in Tables~\ref{tab: Set2} and \ref{tab:set3}. The first four columns list the instance information. The remaining columns show the best-known objective value (\textit{BKV}) of the algorithm, { the number of vertices in the BKV solution ($|V_I|$)},
the mean running time that the heuristic algorithm first hits the BKV (\textit{htime}), the kernel graph size ($|V_{\text{ker}}|$) and the time of generating a kernel (\textit{time$_{\text{ker}}$}). 
In the last two rows, we show the statistical results. 
The values marked with $*$ denote the optimal objective value.
Note that the htime of the RLS has included the pre-processing time for building graph kernels.

\begin{table}[ht!]
\caption{Comparative results of RLS on SET-3 instances}
\label{tab:set3}
\centering
\scriptsize
\resizebox{\textwidth}{!}{%
\begin{tabular}{|cccc|ccc|ccc|ccc|ccccc|}
\hline
\multirow{2}{*}{Instance} & \multirow{2}{*}{$|V|$} & \multirow{2}{*}{$|E_p|$} & \multirow{2}{*}{$|E_r|$} &  \multicolumn{3}{|c|}{ CPLEX}& \multicolumn{3}{|c|}{LA-B\&B} & \multicolumn{3}{c|}{ALS} & \multicolumn{5}{c|}{RLS} \\ \cline{5-18} 
 &  & & & BKV & { $|V_I|$} & htime(s) & BKV & { $|V_I|$} & { htime(s)} & BKV & {  $|V_I|$} & htime(s) & BKV & {  $|V_I|$}  & htime(s) & $|V_{\text{ker}}|$  &time$_{\text{ker}}$(s)\\ \hline
foodweb-wet & 128 & 1,509 & 566 &\textbf{2,464}$^*$ &44& $<$1 & \textbf{2,464}$^*$ &44 & 8 & \textbf{2,464}$^*$ &44  & $<$1 &  \textbf{2,464}$^*$ &44  & $<$1 
& 125 
 &$<$1 
\\
foodweb-dry & 128 & 1,571 & 535 &\textbf{2,226}$^*$&42& $<$1 & \textbf{2,226}$^*$ &42 & 8 & \textbf{2,226}$^*$  &42& $<$1 & \textbf{2,226}$^*$ &42  & $<$1 
& 125 
 &$<$1 
\\
USAir97 & 332 & 1,589 & 537 &\textbf{10,308}$^*$&194&$<$1 & 9,507&178 &407  & \textbf{10,308}$^*$  &194 &$<$1 & \textbf{10,308}$^*$ &194 & $<$1& 0 
 &$<$1 
\\
powergrid & 4,941 & 1,727 & 4,867 &\textbf{193,339}$^*$&3,710&$<$1& 190,537 &3,684 & 108& 193,095 &3,754 & 59 & \textbf{193,339}$^*$ &3,537  & $<$1& 0 
 &$<$1\\
CondMat & 23,133 & 46,769 & 46,670 &\textbf{711,517}$^*$&12,821&12& 685,442&12,061 & 5,749& 705,901&1,306  & 14 & 
711,334 &12,870  & 45
& 1,648
 &$<$1\\
Email & 265,009 & 91,093 & 273,388 &\textbf{13,054,427}$^*$&257,435&118& OOM & --&-- & OOM &-- & -- & 
\textbf{13,054,427}$^*$ &257,394  & 1 
& 0
 &1\\
Epinion & 75,879 & 202,653 & 203,087 &\textbf{3,052,059}$^*$&57,976&60& OOM& -- &--&  3,038,533&58,267 &  18& \textbf{3,052,059}$^*$ &58,008 & 1 
& 253 
 &1\\
 
Dblp & 317,080 & 787,268 & 262,598 &\textbf{9,737,911}$^*$&171,421&790& OOM & --&-- & OOM &-- & -- & 
9,734,653 &171,403 & 64
& 11,851
 & 11\\
cnr-2000 & 325,557 & 685,686 & 2,053,283 &13,367,460&260,331&10,800& OOM & --&-- & OOM &-- & -- & 
\textbf{13,368,670}&260,716 & 482
& 59,052
 &399 \\

WikiTalk & 2,394,385 & 3,494,873 & 1,164,692 &\textbf{118,708,326}$^*$&2,342,835&3,566& OOM & --&-- & OOM &-- & -- & \textbf{118,708,326}$^*$ &2,342,861  & 333 
& 8
 & 333\\
BerkStan & 685,230 & 1,662,833 & 4,986,637 &\textbf{26,036,210}&501,197&10,800& OOM & --&-- & OOM&--  & -- & 
26,031,704 &501,410 & 1,112
& 114,191
 & 1,031\\
As-Skitter & 1,696,415 & 5,550,661 & 5,544,637 &\textbf{66,603,360}&1,264,844&10,800& OOM & --&-- & OOM &-- & -- & 66,582,254 &1,265,266 & 723
& 31,173
 &636\\
soc-pokec & 1,632,803 & 11,152,945 & 11,149,019 &4,146,172&84,217&10,800& OOM & --&-- & OOM &-- & -- & \textbf{43,999,402} &833,989 & 934
& 881,319
 & 134 \\
LiveJ & 4,846,609 & 10,710,579 & 32,140,658 &0&0&10,800& OOM & --&-- & OOM  & -- &-- &
\textbf{168,431,056} &3,258,831 & 904
& 401,307 
 &469\\
uk2002 & 18,483,186 & 65,451,024 & 196,336,234 &0&0&100,000& OOM & --&-- & OOM  & --&-- & 
\textbf{687,089,901} &13,716,043 & 7,678 & 4,806,660  &6,243\\ 
\hline
\multicolumn{4}{|c|}{\# of best / \# of instances}
& \textbf{11}/15 & -- &-- & 3/15 & -- &-- &3/15  &--  & -- & \textbf{11}/15 &--    & -- &  -- &--\\ \hline
\multicolumn{4}{|c|}{average running time(s)} &--&--&10,570.1&-- &-- &1,256  &  -- &-- & 15.7 & -- &--  & 818.7  & --  & 617.5\\ 
\hline
\end{tabular}%
}
\end{table}

Table~\ref{tab: Set2} shows that the RLS outperforms other methods on all SET-2 instances in terms of both best-known objective value and runtime. 
It is known that the RLS proves optimality if all vertices are reduced (i.e., $|V_{kernel}|=0$), hence 11 out of 47 instances have been solved to optimality by the RLS. In contrast, LA-B\&B can only solve 1 out of these 47 instances. { Similarly, CPLEX fails to solve instances with 40,000 variables (e.g., vc-exact\_024 and C1000.9). Moreover, CPLEX even fails to report a feasible solution for 4 large instances.}
In contrast, the RLS and ALS hit the BKV with an average hitting time of no more than 8 seconds. 

Table~\ref{tab:set3} compares algorithms on SET-3 instances. 
OOM indicates that the algorithm runs out of memory.
Both ALS and LA-B\&B quickly become infeasible as instance size increases. 
CPLEX again fails to report a feasible solution for large instances.
In contrast, the RLS computes feasible solutions even for large instances such as {\em uk2002} . For instances where the benchmarks are feasible, the RLS consistently achieves the best net benefit. For the infeasible instance {\em Email}, the RLS even finds the optimal solution.

\subsection{Evaluation of Reduction Rules} \label{exp: reduction rules}
To evaluate the effectiveness of the reduction rules, we conducted the following two experiments.

\begin{table}[ht]
\caption{Comparative results of algorithms with or without reduction rules}
\label{tab: ablation reduction rules}
\centering
\resizebox{0.8\textwidth}{!}{ 
\scriptsize
\begin{tabular}{|ccc|cc|cc|cccc|cccc|}
\hline
\multirow{3}{*}{Instance} & \multirow{3}{*}{$|V|$} & \multirow{3}{*}{$|V_{ker}|$} &\multicolumn{4}{c|}{ CPLEX} & \multicolumn{4}{c|}{LA-B\&B} & \multicolumn{4}{c|}{RLS} \\ \cline{4-15} 
 &  & & \multicolumn{2}{c|}{without reduction} & \multicolumn{2}{c|}{with reduction}  & \multicolumn{2}{c|}{without reduction} & \multicolumn{2}{c|}{with reduction} & \multicolumn{2}{c|}{without reduction} & \multicolumn{2}{c|}{with reduction} \\ \cline{4-15} 
 &  & & BKV & \multicolumn{1}{c|}{htime(s)} & BKV & htime(s) & BKV & \multicolumn{1}{c|}{ htime(s)} & BKV & { htime(s)} & BKV  & \multicolumn{1}{c|}{htime(s)} & BKV  & htime(s) \\ \hline
bio-yeast & 1,458 & 0 &\textbf{68,574}$^*$& $<$1& \textbf{68,574}$^*$ & $<$1 & 66,545 & \multicolumn{1}{c|}{2} & \textbf{68,574}$^*$ & $<$1 & \textbf{68,574}$^*$ & \multicolumn{1}{c|}{1} & \textbf{68,574}$^*$  & {$<$1} 
\\
soc-wiki-Vote & 889 & 0&\textbf{37,358}$^*$&$<1$&\textbf{37,358}$^*$ & $<$1 & 35,973 & \multicolumn{1}{c|}{$<$1} & \textbf{37,358}$^*$ & $<$1 & \textbf{37,358}$^*$  & \multicolumn{1}{c|}{1} & \textbf{37,358}$^*$  & {$<$1} 
\\
vc-exact\_131 & 2,980 & 0 &\textbf{109,641}$^*$&$<1$&\textbf{109,641}$^*$ & $<$1 &107,812 & \multicolumn{1}{c|}{14} & \textbf{109,641}$^*$ & $<$1 & 109,567  & \multicolumn{1}{c|}{9} & \textbf{109,641}$^*$  & {$<$1} 
\\
web-edu & 3,031 & 98 &\textbf{114,140}$^*$&$<1$&\textbf{114,140}$^*$&$<1$& 109,856 & \multicolumn{1}{c|}{10} & \textbf{114,140}$^*$ & 21 & 114,080  & \multicolumn{1}{c|}{12} & \textbf{114,140}$^*$  & $<$1 
\\
MANN\_a81 & 3,321 & 0 &\textbf{117,327}$^*$&$<1$&\textbf{117,327}$^*$ & $<$1& 114,616 & \multicolumn{1}{c|}{16} & \textbf{117,327}$^*$ & $<$1 & 117,275  & \multicolumn{1}{c|}{14} & \textbf{117,327}$^*$  & {$<$1} 
\\
tech-routers-rf & 2,113 & 0 &\textbf{97,799}$^*$&$<1$&\textbf{97,799}$^*$ & $<$1& 93,690 & \multicolumn{1}{c|}{54} & \textbf{97,799}$^*$ & $<$1 & \textbf{97,799}$^*$  & \multicolumn{1}{c|}{1} & \textbf{97,799}$^*$  & {$<$1} 
\\
vc-exact\_039 & 6,795 & 0 &\textbf{259,542}$^*$&$<1$&\textbf{259,542}$^*$ & $<$1& 254,941 & \multicolumn{1}{c|}{253} & \textbf{259,542}$^*$ & $<$1 & 259,475  & \multicolumn{1}{c|}{13} & \textbf{259,542}$^*$  & {$<$1} 
\\
ca-GrQc & 4,158 & 282 & \textbf{174,782}$^*$&9& \textbf{174,782}$^*$ &6& 167,846 & \multicolumn{1}{c|}{45} & 174,525 & 2,998 & 174,768  & \multicolumn{1}{c|}{7} & \textbf{174,782}$^*$  & 11
\\
vc-exact\_038 & 786 & 697 &\textbf{12,015}$^*$&592&\textbf{12,015}$^*$&585& 10,406 & \multicolumn{1}{c|}{1,994} & 10,782 & 28 & \textbf{12,015}$^*$  & \multicolumn{1}{c|}{13} & \textbf{12,015}$^*$  & 4
\\
vc-exact\_078 & 11,349 & 0 &\textbf{438,671}$^*$&$<1$& \textbf{438,671}$^*$ & 1& 431,371 & \multicolumn{1}{c|}{1,074} & \textbf{438,671}$^*$ & 1 & 438,306  & \multicolumn{1}{c|}{16} & \textbf{438,671}$^*$  & {$<$1} 
\\
vc-exact\_087 & 13,590 & 0 &\textbf{518,500}$^*$&2&\textbf{518,500}$^*$ & $<$1& 509,734 & \multicolumn{1}{c|}{2,319} & \textbf{518,500}$^*$ & $<$1 & 518,069  & \multicolumn{1}{c|}{15} & \textbf{518,500}$^*$  & {$<$1} 
\\
vc-exact\_107 & 6,402 & 0 &\textbf{407,878}$^*$ &$<1$&\textbf{407,878}$^*$ & $<$1&396,456 & \multicolumn{1}{c|}{10,646} & \textbf{407,878}$^*$ & $<$1 & 406,379  & \multicolumn{1}{c|}{14} & \textbf{407,878}$^*$  & {$<$1} 
\\
vc-exact\_151 & 15,783 & 0 &\textbf{530,138}$^*$ &2&\textbf{530,138}$^*$ & $<$1& 516,445 & \multicolumn{1}{c|}{2,402} & \textbf{530,138}$^*$ & $<$1 & 529,003  & \multicolumn{1}{c|}{18} & \textbf{530,138}$^*$  & {$<$1} 
\\
vc-exact\_167 & 15,783 & 0 &\textbf{465,178}$^*$&1&\textbf{465,178}$^*$ & $<$1& 449,579 & \multicolumn{1}{c|}{10,469} & \textbf{465,178}$^*$ & $<$1 & 463,446  & \multicolumn{1}{c|}{12} & \textbf{465,178}$^*$  & {$<$1} 
\\
bio-dmela & 7,393 & 44 & \textbf{302,992}$^*$&$<1$&\textbf{302,992}$^*$&$<1$&280,897 & \multicolumn{1}{c|}{170} & \textbf{302,992}$^*$ & $<1$ & 302,793  & \multicolumn{1}{c|}{18} & \textbf{302,992}$^*$  & $<$1 
\\
vc-exact\_011 & 9,877 & 172&\textbf{303,793}$^*$&1&\textbf{303,793}$^*$&$<1$ & 286,137 & \multicolumn{1}{c|}{4,818} & 303,589 & 1271 & 303,352 & \multicolumn{1}{c|}{9} & \textbf{303,793}$^*$  & $<$1 
\\
web-spam & 4,767 & 487 &\textbf{190,507}$^*$&5&\textbf{190,507}$^*$ &$<1$& 180,446 & \multicolumn{1}{c|}{53} & 189,897 & $<1$ & 190,460  & \multicolumn{1}{c|}{18} & 190,460  & 13
\\
vc-exact\_026 & 6,140 & 300 &\textbf{201,509}$^*$&1&\textbf{201,509}$^*$&$<1$& 177,902 & \multicolumn{1}{c|}{668} & 200,868 & $<1$ & 201,458  & \multicolumn{1}{c|}{13} & 201,503  & $<$1 
\\
vc-exact\_001 & 6,160 & 134 &\textbf{215,941}$^*$&2&\textbf{215,941}$^*$&$<1$& 194,984 & \multicolumn{1}{c|}{9,575} & \textbf{215,941}$^*$ & 2 & 215,934  & \multicolumn{1}{c|}{13} & \textbf{215,941}$^*$  & $<$1 
\\
vc-exact\_024 & 7,620 & 1,255 &\textbf{236,109}&10,800&\textbf{236,109}&10,800& 226,540 & \multicolumn{1}{c|}{244} & 234,933 & $1$ & 235,911  & \multicolumn{1}{c|}{14} & 236,018  & 12
\\
C1000.9 & 1,000 & 1,000 &5,263&10,800&5,263&10,800& 6,104 & \multicolumn{1}{c|}{430} & 6,104 & 430 & 7,235  & \multicolumn{1}{c|}{16} & \textbf{7,237} & 10
\\
tech-WHOIS & 7,476 & 338 &\textbf{325,584}$^*$&5&\textbf{325,584}$^*$&$<1$& 309,903 & \multicolumn{1}{c|}{218} & 325,119 & 255 & 325,480  & \multicolumn{1}{c|}{16} & 325,581  & 13
\\
p-hat700-3 & 700 & 700 &4,025&10,800&4,025&10,800& 3,030 & \multicolumn{1}{c|}{7,436} & 3,030 & 7,436 & \textbf{4,473}  & \multicolumn{1}{c|}{1} & \textbf{4,473}  & $<$1 
\\
vc-exact\_008 & 7,537 & 574 &\textbf{247,091}$^*$ &3&\textbf{247,091}$^*$&$<1$& 210,126 & \multicolumn{1}{c|}{9,093} & 245,289 & 6,627 & 247,036  & \multicolumn{1}{c|}{16} & \textbf{247,091}$^*$  & $<$1 
\\
keller5 & 776 & 776 &2,373&10,800&2,373&10,800& 2,860 & \multicolumn{1}{c|}{6,327} & 2,860 & 6,327 & \textbf{3,877}  & \multicolumn{1}{c|}{2} & \textbf{3,877}  & 2
\\
vc-exact\_194 & 1,150 & 1,150 &3,734&10,800&3,734&10,800& 4,386 & \multicolumn{1}{c|}{1,324} & 4,386 & 1,324 & \textbf{5,419}  & \multicolumn{1}{c|}{2} & \textbf{5,419}  & $<$1 
\\
hamming10-4 & 1,024 & 1,024 &3,256&10,800&3,256&10,800& 3,073& \multicolumn{1}{c|}{2,069} & 3,073 & 2,069 & \textbf{3,660}  & \multicolumn{1}{c|}{1} & \textbf{3,660}  & $<$1 
\\
brock800\_2 & 800 & 800 &1,177&10,800&1,177&10,800& 1,996 & \multicolumn{1}{c|}{10} & 1,996 & 10 & \textbf{2,473}  & \multicolumn{1}{c|}{1} & \textbf{2,473}  & $<$1 
\\
brock800\_4 & 800 & 800 &1,583&10,800&1,583&10,800& 1,979 & \multicolumn{1}{c|}{7,215} & 1,979 & 7,215 & \textbf{2,501}  & \multicolumn{1}{c|}{1} & \textbf{2,501}  & $<$1 
\\
ca-HepPh & 11,204 & 2,165 &372,096&10,800&\textbf{372,229}&10,800& 348,738 & \multicolumn{1}{c|}{422} & 368,637& 2 & 372,079  & \multicolumn{1}{c|}{14} & 372,079  & 17
\\
p-hat700-2 & 700 & 700 &2,870&10,800&2,870&10,800& 2,345 & \multicolumn{1}{c|}{2,078} & 2,345 & 2,078 & \textbf{2,971}  & \multicolumn{1}{c|}{1} & \textbf{2,971}  & $<$1 
\\
vc-exact\_196 & 1,534 & 1,534 &3,916&10,800&3,916&10,800& 4,273 & \multicolumn{1}{c|}{66} & 4,273 & 66 & \textbf{5,321}  & \multicolumn{1}{c|}{2} & \textbf{5,321}  & $<$1 
\\
p-hat700-1 & 700 & 700 &730&10,800&730&10,800& \textbf{844}$^*$ & \multicolumn{1}{c|}{5} & \textbf{844}$^*$ & 5 & \textbf{844}$^*$ & \multicolumn{1}{c|}{1} & \textbf{844}$^*$  & $<$1 
\\
ca-AstroPh & 17,903 & 4,985&604,288&10,800&\textbf{604,342}&10,800 & 558,736 & \multicolumn{1}{c|}{2,544} & 593,899 & 37 & 603,732  & \multicolumn{1}{c|}{15} & 603,945 & 14
\\
C2000.9 & 2,000 & 2,000 &3,958&10,800&3,958&10,800& 5,477 & \multicolumn{1}{c|}{11} & 5,477 & 11 & \textbf{6,396}  & \multicolumn{1}{c|}{3} & \textbf{6,396}  & 3
\\
DSJC1000.5 & 1,000 & 1,000 &925&10,800&925&10,800& 1,515 & \multicolumn{1}{c|}{694} & 1,515 & 694 & \textbf{1,861}  & \multicolumn{1}{c|}{1} & \textbf{1,861}  & $<$1 
\\
socfb-MIT & 6,402 & 5,494 &7,098&10,800&118,114&10,800& 110,677 & \multicolumn{1}{c|}{20} & 113,623 & 10 & 129,831  & \multicolumn{1}{c|}{16} & \textbf{129,872}  & 16
\\
p-hat1500-3 & 1,500 & 1,500 &4,097&10,800&4,097&10,800& 3,877 & \multicolumn{1}{c|}{704} & 3,877 & 704 & 7,500 & \multicolumn{1}{c|}{15} & \textbf{7,501}  & 16
\\
socfb-UCSB37 & 14,917 & 13,655&16,708&10,800&57,084& 10,800 & 219,965 & \multicolumn{1}{c|}{256} & 223,780 & 150 & 255,887  & \multicolumn{1}{c|}{21} & \textbf{255,898}  & 22
\\
socfb-Duke14 & 9,885 & 9,003 &11,565&10,800&127,145&10,800& 125,940 & \multicolumn{1}{c|}{39} & 127,269 & 2,231 & 151,321  & \multicolumn{1}{c|}{14} & \textbf{151,326}  & 18
\\
p-hat1500-2 & 1,500 & 1,500 &3,542&10,800&3,542&10,800& 2,611 & \multicolumn{1}{c|}{2,150} & 2,611 & 2,150 & \textbf{4,253}  & \multicolumn{1}{c|}{1} & \textbf{4,253}  & $<$1 
\\
socfb-Stanford3 & 11,586 & 9,380&7,250&10,800&127,708&10,800 & 215,381 & \multicolumn{1}{c|}{265} & 223,398 & 7,756 & 251,788  & \multicolumn{1}{c|}{19} & \textbf{251,833}  & 14
\\
socfb-UConn & 17,206 & 16,025 &16,996&10,800&51,328&10,800& 231,718 & \multicolumn{1}{c|}{322} & 235,104 & 240 & 275,006  & \multicolumn{1}{c|}{16} & \textbf{275,034}  & 20
\\
p-hat1500-1 & 1,500 & 1,500 &0&10,800&0&10,800& 1,184 & \multicolumn{1}{c|}{5,499} & 1,184 & 5,499 & \textbf{1,287}  & \multicolumn{1}{c|}{2} & \textbf{1,287}  & 1
\\
C2000.5 & 2,000 & 2,000 &0&10,800&0&10,800& 1,581 & \multicolumn{1}{c|}{865} & 1,581 & 865 & \textbf{1,849}  & \multicolumn{1}{c|}{5} & \textbf{1,849}  & 6
\\
keller6 & 3,361 & 3,361 &0&10,800&0&10,800& 3,545 & \multicolumn{1}{c|}{1,561} & 3,545 & 1,561 & 5,229  & \multicolumn{1}{c|}{19} & \textbf{5,236}  & 16
\\
C4000.5 & 4,000 & 4,000 &0&10,800&0&10,800& 1,611 & \multicolumn{1}{c|}{1,659} & 1,611 & 1,659 & \textbf{1,951}  & \multicolumn{1}{c|}{15} & \textbf{1,951}  & 22\\ \hline
foodweb-wet & 128 & 125 &\textbf{2,464}$^*$ &$<1$&\textbf{2,464}$^*$ &$<1$&\textbf{2,464}$^*$ & \multicolumn{1}{c|}{8} & \textbf{2,464}$^*$ & 6& \textbf{2,464}$^*$ & \multicolumn{1}{c|}{$<$1} & \textbf{2,464}$^*$  & $<$1  \\
foodweb-dry & 128 & 125 &\textbf{2,226}$^*$&$<1$&\textbf{2,226}$^*$&$<1$& \textbf{2,226}$^*$ & \multicolumn{1}{c|}{8} & \textbf{2,226}$^*$ & 8  & \textbf{2,226}$^*$  & \multicolumn{1}{c|}{$<$1} & \textbf{2,226}$^*$  & $<$1\\
USAir97 & 332 & 0 &\textbf{10,308}$^*$&$<1$&\textbf{10,308}$^*$ & $<1$& 9,507 & \multicolumn{1}{c|}{407} & \textbf{10,308}$^*$ & $<1$ & \textbf{10,308}$^*$  & \multicolumn{1}{c|}{$<$1} & \textbf{10,308}$^*$  & $<$1\\
powergrid & 4,941 & 0 &\textbf{193,339}$^*$&$<1$&\textbf{193,339}$^*$ & $<1$& 190,537 & \multicolumn{1}{c|}{108} & \textbf{193,339}$^*$ & $<1$ & 193,194  & \multicolumn{1}{c|}{52} & \textbf{193,339}$^*$  &$<$1 \\
CondMat & 23,133 & 1,648 &\textbf{711,517}$^*$&12&\textbf{711,517}$^*$&$<1$& 685,442 & \multicolumn{1}{c|}{5,749} & 708,907 & 84 & 709,927  & \multicolumn{1}{c|}{54} & 711,334  & 45 \\
Email & 265,009 & 0 &\textbf{13,054,427}$^*$&118&\textbf{13,054,427}$^*$  & 1& OOM & \multicolumn{1}{c|}{--} & \textbf{13,054,427}$^*$ & 1 & 13,054,390  & \multicolumn{1}{c|}{69} & \textbf{13,054,427}$^*$  & 1 \\
Epinion & 75,879 & 253 &\textbf{3,052,059}$^*$&60&\textbf{3,052,059}$^*$&1& OOM 
& \multicolumn{1}{c|}{--} & 3,051,256 & 196 & 3,050,415  & \multicolumn{1}{c|}{67} & \textbf{3,052,059}$^*$  & 1 \\
Dblp & 317,080 & 11,851 &\textbf{9,737,911}$^*$&790&\textbf{9,737,911}$^*$&39& OOM & \multicolumn{1}{c|}{--} & 9,715,860 & 307 & 9,577,609  & \multicolumn{1}{c|}{46} & 9,734,653  & 64 \\
cnr-2000 & 325,557 & 59,052 &13,367,460&10,800&\textbf{13,641,278}&10,800& OOM & \multicolumn{1}{c|}{--} & 11,347,241 & 10,796 & 13,361,167  & \multicolumn{1}{c|}{84} & 13,368,670  & 482 \\
WikiTalk & 2,394,385 
& 8&\textbf{118,708,326}$^*$&3,566&\textbf{118,708,326}$^*$  & 333 & OOM & \multicolumn{1}{c|}{--} & \textbf{118,708,326}$^*$ & 333 & 118,672,055  & \multicolumn{1}{c|}{56} & \textbf{118,708,326}$^*$  & 333 \\
BerkStan & 685,230 
& 114,191 &26,036,210&10,800&\textbf{26,055,767}&10,800& OOM & \multicolumn{1}{c|}{--} & 22,591,394 & 1,031 & 25,813,846  & \multicolumn{1}{c|}{200} & 26,031,704 & 1,112 \\
As-Skitter & 1,696,415 & 31,173 &66,603,360&10,800&\textbf{66,830,857}&10,800& OOM & \multicolumn{1}{c|}{--} & 65,907,243 & 10,512 & 59,886,304  & \multicolumn{1}{c|}{200} & 66,582,254 & 723 \\
soc-pokec & 1,632,803 & 881,319 &4,146,172&10,800& 27,037,851&10,800& OOM & \multicolumn{1}{c|}{--} & OOM & -- & 36,540,838  & \multicolumn{1}{c|}{350} & \textbf{43,999,402}  & 934 \\
LiveJ & 4,846,609 & 401,307 &0&10,800&163,061,840&10,800& OOM & \multicolumn{1}{c|}{--} & OOM & -- & 148,758,359  & \multicolumn{1}{c|}{318} & \textbf{168,431,056}  & 904 \\
uk2002 & 18,483,186 & 4,806,660 &0&100,000&567,666,923&100,000& OOM & \multicolumn{1}{c|}{--} & OOM & -- & 633,365,020  & \multicolumn{1}{c|}{1,806} & \textbf{687,089,901}  & 7,678 \\ \hline
\multicolumn{3}{|c|}{\# of best / \# of instances} &31/62&--&36/62&-- & 3/62 & \multicolumn{1}{c|}{--} & 21/62 & -- & {22}/62  & \multicolumn{1}{c|}{--} & \textbf{51}/62  & -- \\ \hline
\multicolumn{3}{|c|}{average running time(s)} &--&7,096.5&--&7,028.8 &-- & \multicolumn{1}{c|}{2,007.5} & --& 1,441.1  &  --  & \multicolumn{1}{c|}{\textbf{61.3}} & --  & {202.6} \\ \hline
\end{tabular}
}
\end{table}

\noindent
{\bf Comparative results with vs. without reduction.}
This experiment evaluates algorithms with and without data reduction. For LA-B\&B and CPLEX with the reductions, we used kernel graphs as input. For the RLS without reduction, all reductions were skipped, and random initialization was applied. ALS cannot handle kernel graphs due to unsupported negative vertex profits and edge penalties.



The comparative results are shown in Table~\ref{tab: ablation reduction rules}. In general, the reduction rules can help all algorithms to find better solutions. 
Specifically, with the reduction techniques, LA-B\&B is able to solve some instances which it cannot solve without reduction, e.g., {\em WikiTalk} (1.6 million vertices, 11.1 million edges).
For some large instances where original LA-B\&B run out-of-memory,  the new LA-B\&B with the reductions can report a feasible solution or solve them to optimality within the cutoff time.
With the reduction techniques, CPLEX can solve some instances with less time. For two large instances {\em LiveJ} and {\em uk2002} where CPLEX fails, CPLEX with the reductions is able to report feasible solutions.
As for the RLS, with these reduction rules, it consistently finds better results across all instances in a shorter time than without them.

\begin{figure}[ht!]
    \centering
    \caption{Variation of kernel sizes when ablating different reductions. Each cell shows the difference in kernel size between { not using and using} the corresponding reductions; higher values indicate larger effects.}
    \includegraphics[width=0.9\linewidth]{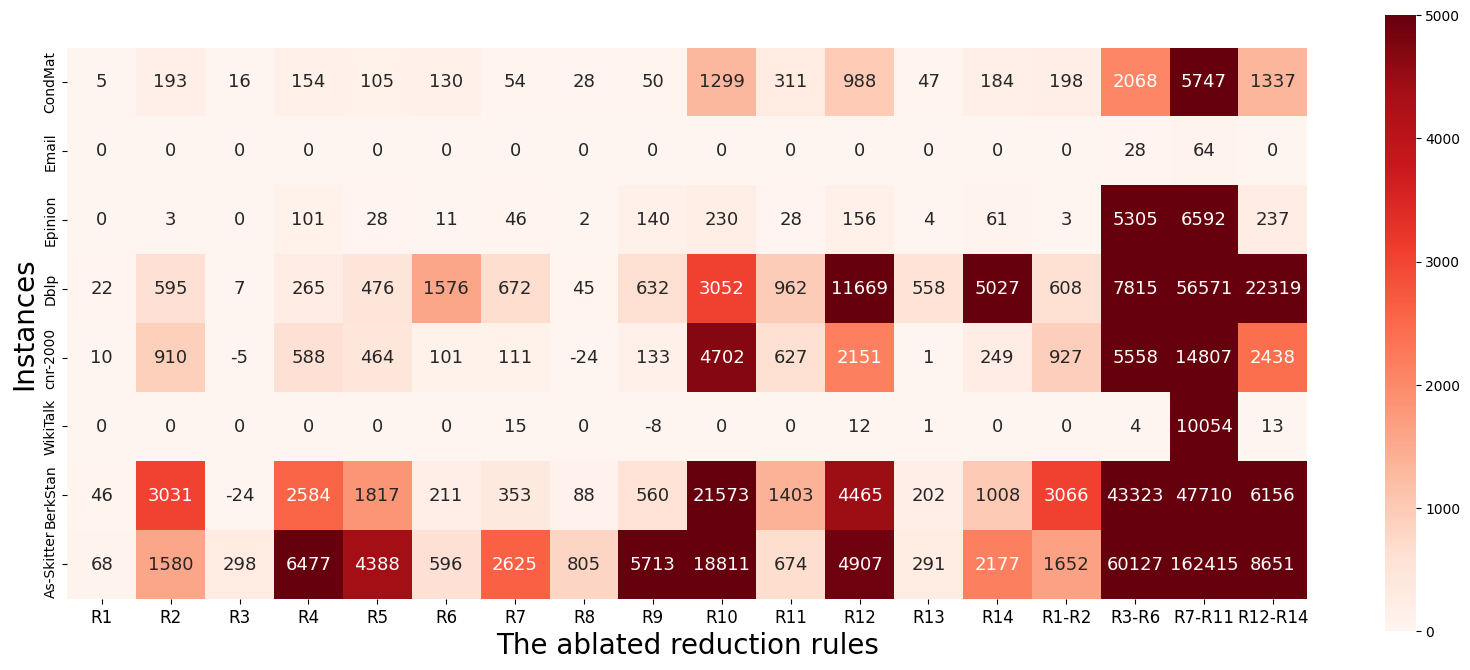}
    \label{fig:reduction heatmap}
\end{figure}

\begin{figure}[ht!]
    \centering
    \caption{Variation of pre-processing time for getting a kernel. Each cell shows the difference in time (seconds) between { not using and using} the corresponding reductions; higher values indicate larger effects.}
    \includegraphics[width=0.9\linewidth]{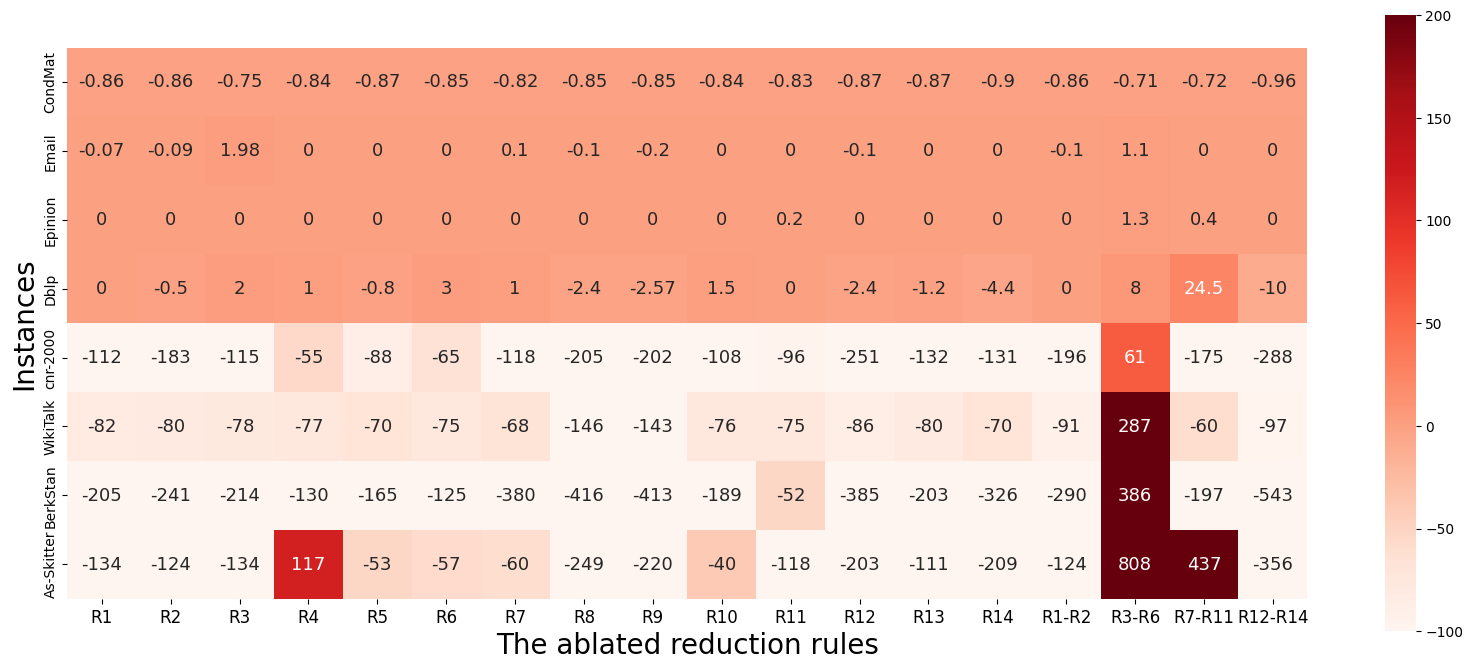}
    
    \label{fig:reduction heatmap time}
\end{figure}


\smallskip
\noindent
{\bf Ablation study of reduction rules.}
In this experiment, we compare kernel sizes (measured by the number of vertices in the graph) and the pre-processing time (i.e., the runtime of Algorithm~\ref{alg:apply_reduction}) after ablating different reduction rules.
Normally, a smaller kernel indicates a smaller search space.
Experimental results are shown in Figures~\ref{fig:reduction heatmap} and \ref{fig:reduction heatmap time}.

Figure~\ref{fig:reduction heatmap} shows that the impacts on the kernel size vary among these reduction rules. 
When comparing individual reductions,  $R10$ and $R12$ are more powerful than the other reduction rules in decreasing the kernel size.
When comparing a group of reductions, the degree-based rules ($R7$–$R11$) are the most powerful in decreasing the kernel size.
The neighborhood-based rules ($R3$–$R6$) are generally more effective than the vertex-pair reduction rules ($R12$–$R14$) except for \textit{Dblp} and \textit{WikiTalk}.
Additionally, although $R1$ and $R2$ do not involve vertex removals, they are also powerful in decreasing the kernel size. In particular, incorporating $R2$ into the RLS can further reduce more than 3,000 vertices for the \textit{BerkStan} instance.

Intuitively, ablating a reduction rule can decrease the overall pre-processing time, but may at the cost of resulting in a larger kernel. 
However, Figure~\ref{fig:reduction heatmap time} shows that ablating reduction rules $R3$–$R6$ or $R7$–$R11$ can even increase the pre-processing time. 
This occurs because the remaining reductions become more time-consuming in the absence of these rules.
We also observe that for the four larger instances {\em As-Skitter}, {\em Berkstan}, {\em WikiTalk} and {\em cnr-2000},  the inclusion of reductions shows a significant improvement over the pre-processing time.

\section{Conclusion and Future Works} \label{sec: conclusion}
In this paper, we proposed a novel approach to the Generalized Independent Set (GIS) problem using data reduction techniques to tackle the complexity of large graphs. 
We developed 14 reduction rules, categorized into edge transformation, neighborhood-based, low-degree, and vertex-pair reductions, and designed a reduction-driven local search (RLS) algorithm based on them. 
Experiments on benchmark instances showed that the RLS algorithm outperformed state-of-the-art methods, especially on large graphs with tens of thousands of vertices.
Future work will focus on developing more advanced reduction techniques to further improve efficiency and extending our algorithms to real-world scenarios such as dynamic graphs with changing weights.
\bibliographystyle{informs2014} 
\bibliography{sample}

\section*{Appendix}
\addcontentsline{toc}{section}{Appendices}
\renewcommand{\thesubsection}{\Alph{subsection}}

\subsection{Technical Proofs} \label{app: proofs}
\begin{proof}[Proof of R\ref{lem: reduction 1}]
Consider whether $u$ and $v$ are included in a generalized independent set $S$. There are { four cases: $u\in S$ and $v\in S$, $u\in S$ and $v\notin S$, $u\notin S$ and $v\in S$, and $u\notin S$ and $v\notin S$.}
When either $u\notin S$ or $v\notin S$, the edge $e=\{u,v\}$ does not exist in $G[S]$, and thus removing $e$ has no effect on the net benefit $nb(S)$.
If both $u\in S$ and $v\in S$, the edge $\{u,v\}$ has a zero penalty, which has no effect on the net benefit $nb(S)$ either.
\end{proof}

\begin{proof}[Proof of R\ref{lem: reduction 2}]
Assume, for the sake of contradiction, that there exists $I\in MGIS(G)$ such that $u\in I$ and $v\in I$. 
{ Assume, without loss of generality, that 
$\Tilde{w}(u)\leq \Tilde{w}(v)$.}
We have $nb(I)-nb(I\setminus \{u\})=w(u)-\sum_{x\in N_r(u)\cap {I}}p(\{u,x\})\leq w(u)-p(\{u,v\})+\sum_{x\in N_r(u)}max(0,-p(\{u,x\}))=\Tilde{w}(u)-p(\{u,v\})< 0$.
In conclusion, $I\setminus \{u\}$ has a larger net benefit without containing permanent edges, which contradicts the optimality of $I$.
Therefore, $u$ and $v$ cannot be simultaneously included in any $S\in MGIS(G)$, which is equivalent to the case where $\{u,v\}\in E_p$.
\end{proof}


\begin{proof}[Proof of R\ref{lem: reduction 3}]
We prove that $u$ is in at least one set in $MGIS(G)$. 
Assume, for the sake of contradiction, that $u$ is not included in any set in $MGIS(G)$.
Let $I$ be an arbitrary set in $MGIS(G)$.
It is clear that at least one vertex in $N(u)$ must be included in $I$. Otherwise, $I\cup \{u\}$ implies with a larger net benefit without including any permanent edge, a contradiction,
since $w(u)\geq w^+(N(u))$, $nb(I\cup \{u\}\setminus N(u))\geq nb(I)$.
So $u$ is in at least one set in $MGIS(G)$. It holds that $\alpha(G)=\alpha(G')+w(u)$.
\end{proof}

\begin{proof}[Proof of R\ref{lem: reduction 4}]
We prove that $u$ is in at least one set in $MGIS(G)$. 
Assume, for the sake of contradiction, that $u$ is not included in any set in $MGIS(G)$.
Let $I$ be an arbitrary set in $MGIS(G)$.
Let $S:=N_p(u)\cap I$ and $S':=N_r(u)\cap I$.
Now, we remove $S$ from $I$ and add $u$ to $I$ to obtain $I'$. Consider the difference between two net benefits $nb(I')- nb(I)\geq w(u)-\sum_{v\in S}\Tilde{w}(v)-\sum_{v\in S'}p(\{u,v\})\geq 0$. Moreover, $G[I']$ does not contain permanent edges.
Therefore, $I'\in MGIS(G)$,
which contradicts the assumption. So $u$ is in at least one set in $MGIS(G)$. It holds that $\alpha(G)=\alpha(G')+w(u)$.
\end{proof}

\begin{proof}[Proof of R\ref{lem: reduction 5}]
We prove that $u$ is not in any set in $MGIS(G)$. Given an arbitrary $I\in MGIS(G)$, there are two cases. When none of the neighbors of $u$ is included in $I$, $I$ does not include $u$ because $w(u)<0$.
When any subset of neighbors of $u$ is included in $I$, $I$ does not include $u$ either because $w(u)-\sum_{v\in S}p(\{u,v\})<0$ for any subset $S\subseteq N(u)$.
Since $I$ is an arbitrary set in $MGIS(G)$, $u$ will not appear in any $I\in MGIS(G)$ and we can directly remove $u$ from $G$.
\end{proof}

\begin{proof}[Proof of R\ref{lem: reduction 6}]
Observe that there exists a vertex set $I\in MGIS(G)$ such that $I\cap N_p[u]\neq \varnothing$, because if $I\cap N_p[u]=\varnothing$ then $nb(I\cup \{u\})=nb(I)+w(u)\geq nb(I)$ implies $I\cup \{u\}\in MGIS(G)$.

Whenever a vertex $x$ in $N_p(u)$ is included in $I$, other vertices in the clique $N_p[u]\setminus \{x\}$ must not be included in $I$. Then $I\cup \{u\}\setminus \{x\}\in MGIS(G)$ because $G[I\cup \{u\}\setminus \{x\}]$ does not contain permanent edges, and $nb(I\cup \{u\}\setminus \{x\})\geq nb(I)+ w(u)-w(x)+\sum_{v\in N_r(x)}p(\{v,x\})-\sum_{v\in N_r(u)} max(0,p(\{u,v\})) \geq nb(I)+ w(u)-\sum_{v\in N_r(u)} max(0,p(\{u,v\})) -\max(0,\Tilde{w}(x))\geq nb(I)$. In conclusion, $u$ is in at least one set in $MGIS(G)$. 
\end{proof}

\begin{proof}[Proof of R\ref{lem: reduction 7}]
Given an arbitrary $I\in MGIS(G)$.

When $w(u)\geq p(\{u,v\})$ and $w(u)\geq 0$. 
If $u\notin I$ and $v\in I$, then $I\cup \{u\}\in MGIS(G)$ because $nb(I\cup \{u\})=nb(I)+w(u)-p(\{u,v\})\geq nb(I)$. 
If $u\notin I$ and $v\notin I$, then $I\cup \{u\}\in MGIS(G)$ because $G[I\cup \{u\}]$ does not contain permanent edges, and $nb(I\cup \{u\})=nb(I)+w(u)\geq nb(I)$. 
In summary, $u$ is included in at least one set in $MGIS(G)$.

When $w(u)< 0$ and $w(u)\geq p(\{u,v\})$. 
If $v\in I$, then $I\cup \{u\}\in MGIS(G)$ because $nb(I\cup \{u\})=nb(I)+w(u)-p(\{u,v\})\geq nb(I)$ for $u\notin I$, and $G[I\cup \{u\}]$ does not contain permanent edges. If $v\notin I$, then $nb(I\cup \{u\}) < nb(I\setminus \{u\})$. 
In other words, there exists a set $S\in MGIS(G)$ such that $u\in S$ if and only if there exists a set $S'\in MGIS(G')$ { such}  that $v\in S'$. 
It is easy to verify $\alpha(G')=\alpha(G)$ in all combinations of $u,v$ that could appear in a set in $MGIS(G)$.

When $w(u)\geq 0$, $w(u)<p(\{u,v\})$ and $w(u)\geq \Tilde{w}(v)$. If $u\notin I, v\in I$, then $nb(I\cup \{u\})\setminus \{v\})-nb(I)=w(u)-w(v)-\sum_{x\in N_r(v)\cap I}max(0,-p(\{v,x\}))=w(u)-\Tilde{w}(v)\geq 0$, implying $I\cup \{u\}\setminus\{v\}\in MGIS(G)$. Otherwise, if $u\notin I, v\notin I$, then $nb(I\cup \{u\})-nb(I)= w(u)\geq 0$, thus $I\cup \{u\}\in MGIS(G)$. In summary, $u$ is included in at least one vertex set $I'\in MGIS(G)$ and $v$ is not in $I'$. 

When $w(u)\geq 0$, $w(u)<p(\{u,v\})$ and $w(u)<\Tilde{w}(v)$. $u$ and $v$ cannot be both included in $I$. Indeed, if $u,v\in I$, then $nb(I\setminus\{u\})=nb(I)+p(\{u,v\})-w(u)>nb(I)$. Moreover, $G[I\setminus\{u\}]$ does not contain permanent edges. This contradicts the optimality of $I$. Furthermore, if $v\notin I$, then $I\cup \{u\}\in MGIS(G)$ because $nb(I\cup \{u\})-nb(I\setminus \{u\})=w(u)\geq 0$.
We conclude that there exists a vertex set in $MGIS(G)$ that includes $u$ iff there exists a vertex set in $MGIS(G')$ that does not contain $v$.
It is easy to verify $\alpha(G')=\alpha(G)$ in all combinations of $u,v$ that could appear in a set in $MGIS(G)$.


Finally, when $w(u)< 0$ and $w(u)< p(\{u,v\})$. We have $nb(I\cup\{u\})<nb(I\setminus \{u\})$,  therefore $u$ is not in any set in $MGIS(G)$ and we can directly remove $u$. 
\end{proof}

\begin{proof}[Proof of R\ref{lem: reduction 8}]
Given an arbitrary $I\in MGIS(G)$.
Because that $\{x,y\}\in E_p$, it holds that $x$ and $y$ cannot be both included in $I$.

\noindent    Case 1: $w(u)\geq \max(p(\{u,x\}),p(\{u,y\}))$. 

    When $w(u)\geq 0$. If both $x$ and $y$ are not included in $I$ and $u\notin I$, then $nb(I\cup \{u\})=nb(I)+w(u)\geq nb(I)$.  If $u\notin I$ and either $x\in I$ or $y\in I$, $nb(I\cup \{u\})\geq nb(I)+w(u)-\max(p(\{u,x\}),p(\{u,y\}))\geq nb(I)$.  In summary, there exists a vertex set in $ MGIS(G)$ that includes $u$.
    
    When $w(u)<0$. If both $x$ and $y$ are not included in $I$, then $u\notin I$ because $nb(I\cup \{u\})=nb(I\setminus \{u\})+w(u)< nb(I\setminus \{u\})$.  If either $x\in I$ or $y\in I$  and $u\notin I$,  then $I\cup \{u\}\in MGIS(G)$ because $nb(I\cup \{u\})\geq nb(I)+w(u)-\max(p(\{u,x\}),p(\{u,y\}))\geq nb(I)$.
    We conclude that there exists a vertex set in $MGIS(G)$ that includes $u$ iff there exists a vertex set in $MGIS(G')$ that includes either $x$ or $y$. In other words, we can defer the decision of $u$ after determining $x$ and $y$. 
    It is easy to verify $\alpha(G')=\alpha(G)$ in all combinations of $u,x,y$ that could appear in a set in $MGIS(G)$.

\noindent    Case 2: $p(\{u,y\})\leq w(u)<p(\{u,x\})$. In this case, if $y$ is in $I$, then $x\notin I$ and $I\cup \{u\} \in MGIS(G)$. 
    
    When $w(u)\geq 0$ and $w(u)\geq \Tilde{w}(x)$. 
    If $x$ is in $I$, then $y,u$ are not in $I$ and $I\cup \{u\}\setminus \{x\}\in MGIS(G)$ because $nb(I\cup \{u\}\setminus \{x\})\geq nb(I)+w(u)-\Tilde{w}(x)\geq nb(I)$.
    If neither $x$ nor $y$ is included in $I$, then $I\cup \{u\}\in MGIS(G)$ because $nb(I\cup \{u\})=nb(I)+w(u)\geq nb(I)$ for $u\notin I$. 
    In summary, there exists a vertex set $I'\in MGIS(G)$ that includes $u$ and not includes $x$.

    When $w(u)\geq 0$ and $w(u)<\Tilde{w}(x)$. 
    There are five cases: $x\in I$ and $u,y\notin I$, $y\in I$ and $u,x\notin I$, $u\in I$ and $x,y\notin I$, $u,y\in I$ and $x\notin I$, and $u,x\in I$ and $y\notin I$.
    But $u,x$ cannot appear in the same $I$, otherwise $nb(I\setminus \{u\})=nb(I)+p(\{u,x\})-w(u)>0$ without containing permanent edges, which contradicts the optimality of $I$.
    If $y\in I$ and $u,x\notin I$, then $I\cup \{u\} \in MGIS(G)$.
    In summary, there exists a vertex set $I'\in MGIS(G)$ such that either $x\in I'$, or both $u,y\in I'$, or $u\in I'$.
    Since $u$ only connects to $x$ and $y$ in $G$, we conclude that there exists a vertex set $S\in MGIS(G)$ that includes $u$ iff there exists a vertex set $S'\in MGIS(G')$ that does not include $x$. In other words, we can defer $u$ after the decision of $x$. 
    It is easy to verify $\alpha(G')=\alpha(G)$ in all combinations of $u,x,y$ that could appear in a set in $MGIS(G)$.

    When $w(u)<0$. If $y\notin I$, then $u\notin I$ because otherwise $nb(I\setminus \{u\})>nb(I)$. 
    Since $u$ only connects to $x$ and $y$ in $G$, we conclude that there exists a vertex set $S\in MGIS(G)$ that includes $u$ iff there exists a vertex set $S'\in MGIS(G')$ that includes $y$. In other words, we can defer $u$ after the decision of $y$. 
    It is easy to verify $\alpha(G')=\alpha(G)$ in all combinations of $u,x,y$ that could appear in a set in $MGIS(G)$.
    

\noindent   Case 3: $w(u)<p(\{u,x\})$ and $w(u)<p(\{u,y\})$. When $w(u)\geq 0$, the three vertices form a triangle where every edge has penalty $M$. The result follows from theorem 3.3 of \cite{gu2021towards}.
In particular, (1) if $w(u)\geq 0$ and $w(u)\geq \Tilde{w}(x)$, then $u$ must be in at least one set in $MGIS(G)$. (2) If $w(u)\geq 0$ and $\Tilde{w}(y)\leq w(u)<\Tilde{w}(x)$, then $y$ must not be in at least one set in $MGIS(G)$. There exists a vertex set $S\in MGIS(G)$ that includes $u$ iff there exists a vertex set $S'\in MGIS(G')$ that does not include $x$. (3) If $w(u)\geq 0$ and  $w(u)< \Tilde{w}(y)$, then 
there exists a vertex set $S\in MGIS(G)$ that includes $u$ iff there exists a vertex set $S'\in MGIS(G')$ that does not include both $x$ and $y$. (4) When $w(u)<0$, the result follows from R\ref{lem: negative profit}.
It is easy to verify $\alpha(G')=\alpha(G)$ in all conditions and all combinations of $u,x,y$ that could appear in a set in $MGIS(G)$.
\end{proof}
\begin{proof}[Proof of R\ref{lem: reduction 9}]
Given an arbitrary $I\in MGIS(G)$.
Notice that $x$ and $y$ can be both included in $I$.

\noindent
Case 1: $w(u)\geq p(\{u,x\})$. In this case, if one of $x,y$ is in $I$, say $x$, then $I\cup \{u\}\in MGIS(G)$ because 
$nb(I\cup \{u\})-nb(I)\geq w(u)-p(\{u,x\})\geq 0$ when $u\notin I$.

When $w(u)\geq 0$ and $w(u)\geq p(\{u,x\})+p(\{u,y\})$. If either $x\in I$ or $y\in I$, then $I\cup \{u\}\in MGIS(G)$ because
$nb(I\cup \{u\})-nb(I)\geq w(u)-\max(p(\{u,x\})+p(\{u,y\}),p(\{u,x\}))\geq 0$ for $u\notin I$.
If none of $x,y$ is in $I$, then $I\cup \{u\}\in MGIS(G)$ because $w(u)\geq 0$.
Therefore, $u$ must belong to at least one set in $MGIS(G)$.

When $w(u)\geq 0$ and $w(u)< p(\{u,x\})+p(\{u,y\})$. If both $x,y$ are in $I$, then $nb(I\cup \{u\})<nb(I)$. If either $x\notin I$ or $y\notin I$, then $nb(I\cup\{u\})\geq nb(I)$. 
In other words, including $u$ has negative profit only if both $x$ and $y$ are included in the same set in $MGIS(G)$.
we conclude that there exists a vertex set $S\in MGIS(G)$ that includes $u$ iff there exists a vertex set $S'\in MGIS(G')$ such that either $x\notin S'$ or $y\notin S'$.
It is easy to verify $\alpha(G')=\alpha(G)$ in all combinations of $u,x,y$ that could appear in a set in $MGIS(G)$.

When $w(u)<0$, it holds that $w(u)\geq p(\{u,x\})+p(\{u,y\})$ because $w(u)\geq p(\{u,x\})$ and $p(\{u,y\})<0$. 
If either $x\in I$ or $y\in I$, then $nb(I\cup \{u\})\geq nb(I)$. If both $x\notin I$ and $y\notin I$, then $nb(I\cup\{u\})< nb(I)$. In other words, including $u$ has positive profit if and only if either $x$ or $y$ is included. 
we conclude that there exists a vertex set $S\in MGIS(G)$ that includes $u$ iff there exists a vertex set $S'\in MGIS(G')$ such that either $x\in S'$ or $y\in S'$.
It is easy to verify $\alpha(G')=\alpha(G)$ in all combinations of $u,x,y$ that could appear in a set in $MGIS(G)$.

\noindent
Case 2: $p(\{u,y\})\leq w(u)<p(\{u,x\})$. In this case, we consider whether the three vertices $u,x,y$ can be included in $I$. 

When $w(u)\geq 0$ and $w(u)\geq p(\{u,x\})+p(\{u,y\})$, the all three vertices may be included in $I$. 
All combinations of $x,y,u$ may appear in $I$. Except that if $x$ is included and $y$ is not included in $I$, then $u\notin I$ because $nb(I\cup \{u\})=nb(I\setminus \{u\})+w(u)-p(\{u,x\})<nb(I\setminus \{u\})$.
In other words, there exists a vertex set $S\in MGIS(G)$ that includes $u$ iff there exists a vertex set $S'\in MGIS(G')$ such that either $x\notin S'$ or $y\in S'$.
It is easy to verify $\alpha(G')+w(u)=\alpha(G)$ in all combinations of $x,y,u$ that could appear in a set in $MGIS(G)$.

When $w(u)\geq 0$ and $w(u)< p(\{u,x\})+p(\{u,y\})$, the three vertices can not be included in $I$. 
Moreover, $u$ and $x$ cannot be both included by $I$ because otherwise $nb(I\setminus u)>nb(I)$ without containing permanent edges, which contradicts the optimality of $I$.
If $x\notin I$, then $I\cup \{u\}\in MGIS(G)$ because $nb(I\cup \{u\})\geq nb(I\setminus \{u\})$.
In other words, there exists a vertex set $S\in MGIS(G)$ that includes $u$ iff there exists a vertex set $S'\in MGIS(G')$ such that $x\notin S'$.
It is easy to verify $\alpha(G')+w(u)=\alpha(G)$ in all combinations of $x,y,u$ that could appear in a set in $MGIS(G)$.

When $w(u)<0$ and $w(u)\geq p(\{u,x\})+p(\{u,y\})$, the three vertices can be included in $I$.
If $y\notin I$, then $nb(I\cup \{u\})\leq nb(I\setminus \{u\})$ regardless of whether $x\in I$.
If $y\in I$, then $nb(I\cup \{u\})\geq nb(I\setminus \{u\})$ regardless of whether $x\in I$
In other words, there exists a vertex set $S\in MGIS(G)$ that includes $u$ iff there exists a vertex set $S'\in MGIS(G')$ such that $y\in S'$.
It is easy to verify $\alpha(G')+w(u)=\alpha(G)$ in all combinations of $x,y,u$ that could appear in a set in $MGIS(G)$.

When $w(u)<0$ and $w(u)< p(\{u,x\})+p(\{u,y\})$, the three vertices can not be included in $I$. 
$u$ is in $I$ iff $y$ is included and $x$ is not included in $I$.
In other words, there exists a vertex set $S\in MGIS(G)$ that includes $u$ iff there exists a vertex set $S'\in MGIS(G')$ such that $x\notin S'$ and $y\in S'$.
It is easy to verify $\alpha(G')=\alpha(G)$ in all combinations of $x,y,u$ that could appear in a set in $MGIS(G)$.

\noindent
Case 3: $w(u)<p(\{u,y\})$.
When $w(u)\geq 0$.
$u$ is in at least one set $I\in MGIS(G)$ iff neither $x$ nor $y$ is included in $I$. 
In other words, there exists a vertex set $S\in MGIS(G)$ that includes $u$ iff there exists a vertex set $S'\in MGIS(G')$ such that $x\notin S'$ and $y\notin S'$.
It is easy to verify $\alpha(G')+w(u)=\alpha(G)$ in all combinations of $x,y,u$ that could appear in a set in $MGIS(G)$.

When $w(u)<0$ and $w(u)<p(\{u,x\})+p(\{u,y\})$.
$u$ is not in any $I\in MGIS(G)$ because otherwise $nb(I\setminus \{u\})>nb(I)$ without containing permanent edges, a contradiction.
Therefore, we can remove $u$ directly.

When $w(u)<0$ and $w(u)\geq p(\{u,x\})+p(\{u,y\})$.
$u$ is in at least one set $I\in MGIS(G)$ iff both $x$ and $y$ are included in $I$.
In other words, there exists a vertex set $S\in MGIS(G)$ that includes $u$ iff there exists a vertex set $S'\in MGIS(G')$ such that $x\in S'$ and $y\in S'$.
It is easy to verify $\alpha(G')=\alpha(G)$ in all combinations of $x,y,u$ that could appear in a set in $MGIS(G)$.
\end{proof}

\begin{proof}[Proof of R\ref{lem: reduction 10}]
    Observe that there exists a vertex set $I\in MGIS(G)$ that includes at least one of $x,u$ must be included in $I$. Because if none of them is included in at least one set $I\in MGIS(G)$, then $nb(I\cup \{u\})-nb(I)\geq w(u)- \sum_{v\in N_r(u)} \max(0,p(\{u,v\}))\geq 0$ and therefore $I\cup \{u\}\in MGIS(G)$.

    Given an arbitrary $I\in MGIS(G)$. 
    Furthermore, $x$ and $u$ cannot be both included in $I$ since $\{u,x\}\in E_p$. In summary, there are two possible states in $I$: either $x\in I,u\notin I$ or $u\in I,x\notin I$. 
    Therefore,  there exists a vertex set $S\in MGIS(G)$ that includes $u$ iff there exists a vertex set $S'\in MGIS(G')$ such that $x\notin S'$.
    It is easy to verify $\alpha(G')+w(u)=\alpha(G)$ in all combinations of $x,u$ that could appear in a set in $MGIS(G)$.
\end{proof}

\begin{proof}[Proof of R\ref{lem: reduction 11}]
Observe that there exists a vertex set $I\in MGIS(G)$ that includes at least one of $u,x,y$. If none of them is included in $I$, then $nb(I\cup \{u\})-nb(I)\geq w(u)- \sum_{v\in N_r(u)} \max(0,p(\{u,v\}))\geq 0$
In addition, at most one of $u,x,y$ can be included in $I$ because $\{u,x\}\in E_p, \{u,y\}\in E_p, \{x,y\}\in E_p$.

Given an arbitrary $I\in MGIS(G)$. 
When $w(u)\geq \Tilde{w}(x)$. If one of $x,y$ is included in $I$,say $x$, then $nb(I\cup\{u\}\setminus \{x\})\geq nb(I)$.
In summary, $u$ must be included in at least one set in $MGIS(G)$.

When $\Tilde{w}(y)\leq w(u)<\Tilde{w}(x)$. If $y$ is included in $I$, then $u\notin I$ and
$nb(I\cup\{u\}\setminus \{y\})\geq nb(I)$.
Therefore, there is a vertex set $I'\in MGIS(G)$ such that $y\notin I'$ and exactly one of $u,x$ is included in $I'$. 
We conclude that there exists a vertex set $S\in MGIS(G)$ such that  $u\in S$ iff there exists a vertex set $S'\in MGIS(G')$ such that $x\notin S'$.
It is easy to verify $\alpha(G')+w(u)=\alpha(G)$ in all combinations of $x,y,u$ that could appear in a set in $MGIS(G)$.

When $w(u)< \Tilde{w}(y)$.  
$u$ is in at least one set $I\in MGIS(G)$ iff both $x,y$ are not included in $I$. 
In other words, there exists a vertex set $S\in MGIS(G)$ that includes $u$ iff there exists a vertex set $S'\in MGIS(G')$ such that $x\notin S'$ and $y\notin S'$.
It is easy to verify $\alpha(G')+w(u)=\alpha(G)$ in all combinations of $x,y,u$ that could appear in a set in $MGIS(G)$.
\end{proof}

\begin{proof}[Proof of R\ref{lem: reduction 12}]
Given an arbitrary $I\in MGIS(G)$. 

When $w(u)\geq \Tilde{w}(v)+w^+(N(u)\setminus N_p[v])$. If $v\in I$, then $u\notin I$. 
We can replace $N(u)\cap I$ with $u$ to get a larger or equal net benefit because $nb(I\cup \{u\}\setminus N(u))-nb(I)\geq  w(u)- \Tilde{w}(v)+w^+(N(u)\setminus N_p[v])\geq w(u)- \Tilde{w}(v)+w^+(N(u)\cap I)\geq 0$. Therefore, $I\cup \{u\}\setminus N(u) \in MGIS(G)$. Since $v\in N_p(u)$, there must exist at least one set in $MGIS(G)$ that does not include $v$.

When $w(u)\geq \Tilde{w}(v)+w^+(N_p(u)\setminus N_p[v])+\sum_{x\in N_r(u)\setminus N_p(v)} \max(0,p(\{u,x\}))$.
If $v\in I$, then $u\notin I$. 
We can replace $N_p(u)\cap I$ with $u$ to get a larger net benefit because $nb(I\cup \{u\}\setminus N_p(u))-nb(I)\geq w(u)- \Tilde{w}(v)+w^+(N_p(u)\setminus N_p(v))+\sum_{x\in N_r(u)\setminus N_p(v)} \max(0,p(\{u,x\}))\geq w(u)- \Tilde{w}(v)+w^+(N(u)\cap I)+\sum_{x\in N_r(u)\cap I} \max(0,p(\{u,x\}))\geq 0$.
Since $v\in N_p(u)$, there must exist a set in $MGIS(G)$ that does not include $v$.
\end{proof}

\begin{proof}[Proof of R\ref{lem: reduction 13}]
Assume, for the sake of contradiction, that neither $u$ nor $v$ is included in any set in $MGIS(G)$.
If for any $I\in MGIS(G)$, $N(v)\cap I= \varnothing$, then $I\cup \{v\}\in MGIS(G)$ because $nb(I\cup \{v\})\geq nb(I)$ without containing permanent edges, a contradiction.
Note that $u\notin N(v)\cap I$.
If there exists a set $I\in MGIS(G)$ such that $N(v)\cap I\neq \varnothing$, then $w(v)\geq w^+(N(v))-max(0,w(u))\geq w^+(S)$. 
$nb(I\cup \{v\}\setminus N(v))\geq nb(I)$ without containing permanent edges, again a contradiction.
Therefore, there exists a set in $MGIS(G)$ that includes at least one of $u,v$. We conclude that $N_p(u)\cap N_p(v)$ must not be included in at least one set in $MGIS(G)$.
\end{proof}

\begin{proof}[Proof of R\ref{lem: reduction 14}]
Given an arbitrary $I\in MGIS(G)$. 
If $u$ is included in $I$, then $N_p(v)\cap I=\varnothing$. 
$I\cup \{v\}\in MGIS(G)$ because $nb(I\cup \{v\})-nb(I\setminus \{v\})= w(v)- \sum_{x\in N_r(v)}\max(0,p(\{v,x\}))\geq 0$. The same argument holds when $v$ is included in $I$. 
Therefore, we can fold $u$ and $v$ into a vertex $v'$.
There exists $S\in MGIS(G)$ such that $u\in S$ and $v\in S$ iff there exists $S'\in MGIS(G')$ such that $v'\in S'$.
\end{proof}

\subsection{Comparative Details on 216 Instances of SET-1} \label{app: set1}
\setlength{\tabcolsep}{4pt}  
{\tiny  
\begin{longtable}{l|ccc|cc|cc|cc|cc|cc}
\caption{Comparative results on SET-1 instances}
\label{tab: Set1 detail} \\
\hline
\multirow{2}{*}{Instance} & \multirow{2}{*}{$|V|$} & \multirow{2}{*}{$|E_p|$} & \multirow{2}{*}{$|E_r|$} & \multicolumn{2}{c|}{CB\&B} & \multicolumn{2}{c|}{LA-B\&B} &\multicolumn{2}{c|}{ILS-VND}  & \multicolumn{2}{c}{ALS} & \multicolumn{2}{c}{RLS} \\
\cline{5-14}
& & && BKV & time(s) & BKV & time(s)  & \textbf{BKV} &htime(s) & \textbf{BKV} &htime(s) & \textbf{BKV} &htime(s) \\
\hline
\endfirsthead

\multicolumn{14}{c}{Table \thetable{} -- Continued from previous page} \\
\hline
\multirow{2}{*}{Instance} & \multirow{2}{*}{$|V|$} & \multirow{2}{*}{$|E_p|$} & \multirow{2}{*}{$|E_r|$} & \multicolumn{2}{c|}{CB\&B} & \multicolumn{2}{c|}{LA-B\&B} &\multicolumn{2}{c|}{ILS-VND}  & \multicolumn{2}{c}{ALS} & \multicolumn{2}{c}{RLS} \\
\cline{5-14}
& & && BKV & time(s) & BKV & time(s)  & \textbf{BKV} &htime(s) & \textbf{BKV} &htime(s) & \textbf{BKV} &htime(s) \\
\hline
\endhead

\hline \multicolumn{14}{r}{Continued on next page} \\
\endfoot

\hline
\endlastfoot
\multicolumn{14}{l}{$p=0.25$, Setting 1} \\
\hline
brock200\_2\_A & 200 & 7,416 & 2,460 &\textbf{986}& 280 & \textbf{986} & 2&\textbf{986} & $<$1 & \textbf{986} & $<$1 & \textbf{986} & $<$1 \\
brock400\_2\_A& 400 & 44,847 & 14,939 &\textbf{765} & 441 & \textbf{765} & 1 &\textbf{765} & $<$1  & \textbf{765} & $<$1 & \textbf{765} & $<$1 \\
C125.9\_A & 125 & 5,250 & 1,713 &\textbf{454} & $<$1 & \textbf{454} & $<$1 & \textbf{454} & $<$1& \textbf{454} & $<$1 & \textbf{454} & $<$1\\
C250.9\_A & 250 & 20,940 & 7,044&\textbf{581} & 10 & \textbf{581} & $<$1& \textbf{581} & $<$1 & \textbf{581} & $<$1 & \textbf{581} & $<$1 \\
gen200\_p0.9\_55\_A & 200 & 13,434 & 4,176 &\textbf{535} & 4  & \textbf{535} & $<$1 & \textbf{535} & $<$1 & \textbf{535} & $<$1 & \textbf{535} & $<$1\\
gen400\_p0.9\_75\_A & 400 & 53,671 & 18,149 &\textbf{628} & 82& \textbf{628} & $<$1 & \textbf{628} & 3 & \textbf{628} & $<$1& \textbf{628} & $<$1 \\
hamming8-4\_A & 256 & 15,580 & 5,284 &\textbf{1,094} & 240 & \textbf{1,094} & $<$1 & \textbf{1,094} &  $<$1 & \textbf{1,094} & $<$1 & \textbf{1,094} &  $<$1\\
keller4\_A & 171 & 7,057 & 2,378 &\textbf{941} & 15 & \textbf{941} & $<$1 & \textbf{941} &  $<$1 & \textbf{941} & $<$1& \textbf{941} &  $<$1 \\
MANN\_a27\_A & 378 & 53,050 & 17,501 &\textbf{533} & 28 & \textbf{533} & $<$1& \textbf{533} &$<$1 & \textbf{533} &$<$1& \textbf{533} &$<$1  \\
p\_hat300-1\_A & 300 & 8,239 & 2,694 & 2,308 & 10,800 & 2,680 & 10,800 & \textbf{2,744}  & $<$1 & \textbf{2,744} & $<$1& \textbf{2,744}  & $<$1 \\
p\_hat300-2\_A & 300 & 16,403 & 5,525 &\textbf{2,076} & 5,182 & \textbf{2,076} & 608 & \textbf{2,076} & $<$1 & \textbf{2,076} & $<$1& \textbf{2,076} & $<$1 \\
p\_hat300-3\_A & 300 & 25,132 & 8,258 &\textbf{739} & 100 & \textbf{739} & $<$1 &  \textbf{739} &$<$1 & \textbf{739} &$<$1& \textbf{739} &$<$1 \\
brock200\_2\_B & 200 & 7,416 & 2,460 &\textbf{962} & 276  & \textbf{962} & 2 & \textbf{962} & $<$1 & \textbf{962} & $<$1& \textbf{962} & $<$1 \\
brock400\_2\_B & 400 & 44,847 & 14,939 &\textbf{741} & 445 & \textbf{741} & 1 & \textbf{741}  & $<$1 & \textbf{741} & $<$1& \textbf{741}  & $<$1 \\
C125.9\_B & 125 & 5,250 & 1,713 & \textbf{437}& 1 & \textbf{437} & $<$1& \textbf{437}  & $<$1 & \textbf{437} & $<$1& \textbf{437}  & $<$1 \\
C250.9\_B & 250 & 20,940 & 7,044 & \textbf{549}& 11 & \textbf{549} & $<$1 & \textbf{549} & $<$1 &  \textbf{549} & $<$1& \textbf{549} & $<$1 \\
gen200\_p0.9\_55\_B & 200 & 13,434 & 4,176 &\textbf{510} & 5 & \textbf{510} & $<$1 & \textbf{510}  & $<$1 & \textbf{510}  & $<$1& \textbf{510}  & $<$1 \\
gen400\_p0.9\_75\_B & 400 & 53,671 & 18,149 &\textbf{595} & 80  & \textbf{595} & $<$1 & \textbf{595}  & $<$1 &  \textbf{595} & $<$1& \textbf{595}  & $<$1 \\
hamming8-4\_B & 256 & 15,580 & 5,284&\textbf{1,094} & 234 & \textbf{1,094} & $<$1 & \textbf{1,094} & $<$1 &\textbf{1,094} & $<$1& \textbf{1,094} & $<$1 \\
keller4\_B & 171 & 7,057 & 2,378 &\textbf{941} &15  & \textbf{941} & $<$1 &\textbf{941}  & $<$1 & \textbf{941} & $<$1& \textbf{941}  & $<$1 \\
MANN\_a27\_B & 378 & 53,050 & 17,501 &\textbf{503} & 31 & \textbf{503} & $<$1 & \textbf{503} & $<$1 &\textbf{503}& $<$1 & \textbf{503} & $<$1\\
p\_hat300-1\_B & 300 & 8,239 & 2,694 &2,259 & 10,800 & 2,643 & 10,800 & \textbf{2,712}  & $<$1 & \textbf{2,712} & $<$1& \textbf{2,712}  & $<$1 \\
p\_hat300-2\_B & 300 & 16,403 & 5,525 &\textbf{2,062} & 5,194 & \textbf{2,062} & 680 & \textbf{2,062} & $<$1 & \textbf{2,062} & $<$1& \textbf{2,062} & $<$1 \\
p\_hat300-3\_B & 300 & 25,132 & 8,258 &\textbf{713} & 102 & \textbf{713} &  $<$1 & \textbf{713}  & $<$1 &\textbf{713} & $<$1& \textbf{713}  & $<$1\\

brock200\_2\_C & 200 & 7,416 & 2,460 & \textbf{932}& 280 & \textbf{932} & 2 & \textbf{932}  & $<$1 & \textbf{932} & $<$1& \textbf{932}  & $<$1 \\
brock400\_2\_C & 400 & 44,847 & 14,939 &\textbf{698} & 455 & \textbf{698} & 2 & \textbf{698}  & $<$1 & \textbf{698} & $<$1& \textbf{698}  & $<$1 \\
C125.9\_C & 125 & 5,250 & 1,713 &\textbf{403} & 1 & \textbf{403} & $<$1& \textbf{403} & $<$1 & \textbf{403} & $<$1& \textbf{403} & $<$1 \\
C250.9\_C & 250 & 20,940 & 7,044 &\textbf{502} & 12 & \textbf{502} & $<$1 & \textbf{502}  & $<$1 &  \textbf{502} & $<$1& \textbf{502}  & $<$1 \\
gen200\_p0.9\_55\_C & 200 & 13,434 & 4,176 &\textbf{467} & 4 & \textbf{467} & $<$1 & \textbf{467}   & $<$1 & \textbf{467}  & $<$1& \textbf{467}   & $<$1 \\
gen400\_p0.9\_75\_C & 400 & 53,671 & 18,149 &\textbf{533} & 81  & \textbf{533} & $<$1 & \textbf{533}  & $<$1 &  \textbf{533} & $<$1& \textbf{533}  & $<$1 \\
hamming8-4\_C & 256 & 15,580 & 5,284 &\textbf{1,094} & 229 & \textbf{1,094} & 1 & \textbf{1,094}  & $<$1&\textbf{1,094} & $<$1& \textbf{1,094}  & $<$1 \\
keller4\_C & 171 & 7,057 & 2,378 &\textbf{941} & 14 & \textbf{941} & $<$1 & \textbf{941} & $<$1 & \textbf{941} & $<$1& \textbf{941} & $<$1 \\
MANN\_a27\_C & 378 & 53,050 & 17,501 &\textbf{443} & 30 & \textbf{443} & $<$1 & \textbf{443}& $<$1 &\textbf{443}& $<$1& \textbf{443}& $<$1 \\
p\_hat300-1\_C & 300 & 8,239 & 2,694 &2,205 & 10,800 & 2,566 & 10,800 & \textbf{2,649}  & $<$1 & \textbf{2,649} & $<$1& \textbf{2,649}  & $<$1 \\
p\_hat300-2\_C & 300 & 16,403 & 5,525 & \textbf{2,033}& 5,360 & \textbf{2,033} & 778 & \textbf{2,033}  & $<$1 & \textbf{2,033} & $<$1& \textbf{2,033}  & $<$1 \\
p\_hat300-3\_C & 300 & 25,132 & 8,258 &\textbf{688} &111 & \textbf{688} & $<$1 & \textbf{688}  & $<$1 &\textbf{688} & $<$1& \textbf{688}  & $<$1 \\
\hline
\multicolumn{14}{l}{$p=0.25$, Setting 2} \\
\hline
brock200\_2\_A & 200 & 7,416 & 2,460 &\textbf{1,489} & 452 & \textbf{1,489} & 5 & \textbf{1,489} & $<$1 & \textbf{1,489} & $<$1 & \textbf{1,489} & $<$1 \\
brock400\_2\_A & 400 & 44,847 & 14,939 &\textbf{1,084} &593 & \textbf{1,084} & 2 & \textbf{1,084} & 2 & \textbf{1,084} & $<$1 & \textbf{1,084} & $<$1 \\
C125.9\_A & 125 & 5,250 & 1,713 & \textbf{685}& 1 & \textbf{685} & $<$1 & \textbf{685} & $<$1 & \textbf{685} & $<$1 & \textbf{685} & $<$1 \\
C250.9\_A & 250 & 20,940 & 7,044 &\textbf{785} & 12 & \textbf{785} & $<$1& \textbf{785} & $<$1 & \textbf{785} & $<$1 & \textbf{785} & $<$1 \\
gen200\_p0.9\_55\_A & 200 & 13,434 & 4,176 &\textbf{778} & 5 & \textbf{778} & $<$1 & \textbf{778} & $<$1 & \textbf{778} & $<$1 & \textbf{778} & $<$1 \\
gen400\_p0.9\_75\_A & 400 & 53,671 & 18,149 &\textbf{882} & 94 & \textbf{882} & $<$1 & \textbf{882} & $<$1 & \textbf{882} & $<$1& \textbf{882} & $<$1 \\
hamming8-4\_A & 256 & 15,580 & 5,284 & \textbf{1,790} & 331 & \textbf{1,790} & 2 & \textbf{1,790} & $<$1 & \textbf{1,790} & $<$1 & \textbf{1,790} & $<$1 \\
keller4\_A & 171 & 7,057 & 2,378 & \textbf{1,500}& 23 & \textbf{1,500} & $<$1& \textbf{1,500} & $<$1 & \textbf{1,500} & $<$1 & \textbf{1,500} & $<$1 \\
MANN\_a27\_A & 378 & 53,050 & 17,501 &\textbf{683}  & 32  & \textbf{683} & $<$1 &\textbf{683} &$<$1 & \textbf{683} &$<$1 & \textbf{683} &$<$1  \\
p\_hat300-1\_A & 300 & 8,239 & 2,694 & 3,981& 10,800 & 4,182 & 10,800 & \textbf{4,674} & 1 & \textbf{4,674} & $<$1 & \textbf{4,674} & $<$1 \\
p\_hat300-2\_A & 300 & 16,403 & 5,525 &\textbf{2,994} & 8,743 & \textbf{2,994} & 9,402 & \textbf{2,994} & $<$1 & \textbf{2,994} & $<$1 & \textbf{2,994} & $<$1 \\
p\_hat300-3\_A & 300 & 25,132 & 8,258 &\textbf{1,180} & 137 & \textbf{1,180} & 1 & \textbf{1,180} &$<$1 & \textbf{1,180} &$<$1 & \textbf{1,180} &$<$1 \\

brock200\_2\_B & 200 & 7,416 & 2,460 &\textbf{739} & 456 & \textbf{739} & 5 & \textbf{739} & $<$1 & \textbf{739} & $<$1 & \textbf{739} & $<$1 \\
brock400\_2\_B & 400 & 44,847 & 14,939 &\textbf{534} &611  & \textbf{534} & 2 & \textbf{534} & 2 & \textbf{534} & $<$1 & \textbf{534} & $<$1 \\
C125.9\_B & 125 & 5,250 & 1,713 &\textbf{335} &1  & \textbf{335} & $<$1& \textbf{335} & $<$1 & \textbf{335} & $<$1 & \textbf{335} & $<$1 \\
C250.9\_B & 250 & 20,940 & 7,044 & \textbf{385}& 13 & \textbf{385} & $<$1 & \textbf{385} & $<$1 &  \textbf{385} & $<$1  &  \textbf{385} & $<$1 \\
gen200\_p0.9\_55\_B & 200 & 13,434 & 4,176 &\textbf{378} & 5 & \textbf{378} & $<$1 & \textbf{378}  & $<$1 & \textbf{378}  & $<$1 & \textbf{378}  & $<$1 \\
gen400\_p0.9\_75\_B & 400 & 53,671 & 18,149 & \textbf{432}& 96 & \textbf{432} & $<$1 &   \textbf{432} & $<$1 &  \textbf{432} & $<$1 &  \textbf{432} & $<$1 \\
hamming8-4\_B & 256 & 15,580 & 5,284 &\textbf{890} &333  & \textbf{890} & 2 & \textbf{890} & $<$1 &\textbf{890} & $<$1 &\textbf{890} & $<$1 \\
keller4\_B & 171 & 7,057 & 2,378 & \textbf{750} & 22 & \textbf{750} & $<$1 & \textbf{750} & $<$1 & \textbf{750} & $<$1 & \textbf{750} & $<$1 \\
MANN\_a27\_B & 378 & 53,050 & 17,501 & \textbf{333} &34  & \textbf{333} & $<$1 & \textbf{333}& $<$1 &\textbf{333}& $<$1 &\textbf{333}& $<$1 \\
p\_hat300-1\_B & 300 & 8,239 & 2,694 &1,981 & 10,800 & 2,082 & 10,800 & \textbf{2,324} & 2 & \textbf{2,324} & $<$1 & \textbf{2,324} & $<$1 \\
p\_hat300-2\_B & 300 & 16,403 & 5,525 &\textbf{1,494} & 8,947 & \textbf{1,494} & 10,800 & \textbf{1,494} & $<$1 & \textbf{1,494} & $<$1 & \textbf{1,494} & $<$1 \\
p\_hat300-3\_B & 300 & 25,132 & 8,258 &\textbf{580} & 134 & \textbf{580} & 1 & \textbf{580} & 1&\textbf{580} & $<$1 &\textbf{580} & $<$1 \\
brock200\_2\_C & 200 & 7,416 & 2,460 &\textbf{364} & 460 & \textbf{364} & 5 & \textbf{364} & $<$1 & \textbf{364} & $<$1 & \textbf{364} & $<$1 \\
brock400\_2\_C & 400 & 44,847 & 14,939 &\textbf{259}& 646 & \textbf{259} & 2 & \textbf{259} & 1 & \textbf{259} & $<$1 & \textbf{259} & $<$1 \\
C125.9\_C & 125 & 5,250 & 1,713 & \textbf{160} & 1 & \textbf{160} & $<$1& \textbf{160} & $<$1 & \textbf{160} & $<$1 & \textbf{160} & $<$1 \\
C250.9\_C & 250 & 20,940 & 7,044 &\textbf{185} & 13 & \textbf{185} & $<$1 & \textbf{185} & $<$1 &  \textbf{185} & $<$1 &  \textbf{185} & $<$1 \\
gen200\_p0.9\_55\_C & 200 & 13,434 & 4,176 &\textbf{178} & 5 & \textbf{178} & $<$1 & \textbf{178}  & $<$1 & \textbf{178}  & $<$1 & \textbf{178}  & $<$1 \\
gen400\_p0.9\_75\_C & 400 & 53,671 & 18,149 &\textbf{207} & 98 & \textbf{207} & $<$1 & \textbf{207} & $<$1 &  \textbf{207} & $<$1  &  \textbf{207} & $<$1 \\
hamming8-4\_C & 256 & 15,580 & 5,284 &\textbf{440} & 355 & \textbf{440} & 2 & \textbf{440} & $<$1 &\textbf{440} & $<$1 &\textbf{440} & $<$1 \\
keller4\_C & 171 & 7,057 & 2,378 & \textbf{375} & 23 & \textbf{375} & $<$1 & \textbf{375} & $<$1 & \textbf{375} & $<$1 & \textbf{375} & $<$1 \\
MANN\_a27\_C & 378 & 53,050 & 17,501 & \textbf{158} & 33 & \textbf{158} & $<$1 & \textbf{158}& $<$1 &\textbf{158}& $<$1 &\textbf{158}& $<$1 \\
p\_hat300-1\_C & 300 & 8,239 & 2,694 &981 & 10,800 & 1,032 & 10,800 & \textbf{1,149} & 1 & \textbf{1,149} & $<$1 & \textbf{1,149} & $<$1 \\
p\_hat300-2\_C & 300 & 16,403 & 5,525 & \textbf{744} & 10,800 & \textbf{744} & 10,658 & \textbf{744} & $<$1 & \textbf{744} & $<$1 & \textbf{744} & $<$1 \\
p\_hat300-3\_C & 300 & 25,132 & 8,258 &\textbf{280} & 142 & \textbf{280} & $<$1 & \textbf{280} & $<$1 &\textbf{280} & $<$1 &\textbf{280} & $<$1 \\




\hline 
\multicolumn{14}{l}{$p=0.50$, Setting 1} \\
\hline
brock200\_2\_A & 200 & 4,916 & 4,960 &\textbf{1,298} & 8,390 & \textbf{1,298} & 93 & \textbf{1,298} & $<$1 & \textbf{1,298} & $<$1   & \textbf{1,298} & $<$1\\
brock400\_2\_A & 400 & 29,711 & 30,075 & 1,103& 10,800& \textbf{1,123} & 355 & \textbf{1,123} & $<$1 & \textbf{1,123} & $<$1 & \textbf{1,123} & $<$1 \\
C125.9\_A & 125 & 3,500 & 3,463 &\textbf{627} &5 & \textbf{627} & $<$1 & \textbf{627} & $<$1 & \textbf{627} & $<$1 & \textbf{627} & $<$1 \\
C250.9\_A & 250 & 14,017 & 13,967 &\textbf{817}  & 266 & \textbf{817} & 1& \textbf{817} & $<$1 & \textbf{817} & $<$1 & \textbf{817} & $<$1 \\
gen200\_p0.9\_55\_A & 200 & 8,908 & 9,002 & \textbf{785} &68 & \textbf{785} & $<$1 & \textbf{785} & $<$1 & \textbf{785} & $<$1 & \textbf{785} & $<$1 \\
gen400\_p0.9\_75\_A & 400 & 35,823 & 35,997 & \textbf{895} & 5,699 & \textbf{895} & 29 & \textbf{895} & $<$1 & \textbf{895} & $<$1 & \textbf{895} & $<$1 \\
hamming8-4\_A & 256 & 10,329 & 10,535 & \textbf{1,301}  & 7,608& \textbf{1,301} & 49 & \textbf{1,301} & $<$1& \textbf{1,301} & $<$1 & \textbf{1,301} & $<$1 \\
keller4\_A & 171 & 4,738 & 4,697 & \textbf{1,118} & 153  & \textbf{1,118} & 2 & \textbf{1,118} & $<$1 & \textbf{1,118} & $<$1 & \textbf{1,118} & $<$1 \\
MANN\_a27\_A & 378 & 35,345 & 35,206 & \textbf{812}  & 1,217 & \textbf{812} & 6 & \textbf{812} &$<$1 & \textbf{812} &$<$1 & \textbf{812} &$<$1  \\
p\_hat300-1\_A & 300 & 5,505 & 5,428 & 2,568  & 10,800 & 3,026 & 10,800 & \textbf{3,129} & $<$1 & \textbf{3,129} & $<$1 & \textbf{3,129} & $<$1 \\
p\_hat300-2\_A & 300 & 11,051 & 10,877 &  2,063 & 10,800 & \textbf{2,477} & 10,800 &\textbf{2,477} & $<$1 & \textbf{2,477} & $<$1 & \textbf{2,477} & $<$1 \\
p\_hat300-3\_A & 300 & 16,820 & 16,570 & \textbf{1,029} & 3,748 & \textbf{1,029} & 38& \textbf{1,029} &$<$1 & \textbf{1,029} &$<$1 & \textbf{1,029} &$<$1 \\
brock200\_2\_B & 200 & 4,916 & 4,960 & \textbf{1,224} & 9,589  & \textbf{1,224} & 125 & \textbf{1,224} & $<$1 & \textbf{1,224} & $<$1 & \textbf{1,224} & $<$1 \\
brock400\_2\_B & 400 & 29,711 & 30,075 & 1,010 & 10,800 & \textbf{1,035} & 445 & \textbf{1,035} & $<$1 & \textbf{1,035} & $<$1 & \textbf{1,035} & $<$1 \\
C125.9\_B & 125 & 3,500 & 3,463 & \textbf{582} & 5 & \textbf{582} & $<$1& \textbf{582} & $<$1 & \textbf{582} & $<$1  & \textbf{582} & $<$1\\
C250.9\_B & 250 & 14,017 & 13,967 & \textbf{744} & 292  & \textbf{744} & 2 & \textbf{744} & $<$1 &  \textbf{744} & $<$1 &  \textbf{744} & $<$1 \\
gen200\_p0.9\_55\_B & 200 & 8,908 & 9,002 &  \textbf{716} & 72 & \textbf{716} & $<$1 & \textbf{716}  & $<$1 & \textbf{716}  & $<$1 & \textbf{716}  & $<$1 \\
gen400\_p0.9\_75\_B & 400 & 35,823 & 35,997 &  \textbf{805} & 6,602 & \textbf{805} & 34 & \textbf{805} & $<$1 &  \textbf{805} & $<$1 &  \textbf{805} & $<$1 \\
hamming8-4\_B & 256 & 10,329 & 10,535 & \textbf{1,255} & 7,780  & \textbf{1,255} & 62 & \textbf{1,255} & $<$1 &\textbf{1,255} & $<$1 &\textbf{1,255} & $<$1 \\
keller4\_B & 171 & 4,738 & 4,697 & \textbf{1,094} & 156  & \textbf{1,094} & 2 & \textbf{1,094} & $<$1 & \textbf{1,094} & $<$1 & \textbf{1,094} & $<$1 \\
MANN\_a27\_B & 378 & 35,345 & 35,206 &  \textbf{707} & 1,417 & \textbf{707} & 7 & \textbf{707}& 1 &\textbf{707}& $<$1 &\textbf{707}& $<$1 \\
p\_hat300-1\_B & 300 & 5,505 & 5,428 & 2,492 & 10,800 & 2,950 & 10,800 & \textbf{3,023} & $<$1 & \textbf{3,023} & $<$1 & \textbf{3,023} & $<$1 \\
p\_hat300-2\_B & 300 & 11,051 & 10,877 & 1,979 & 10,800 & \textbf{2,405} & 10,800 & \textbf{2,405} & $<$1 & \textbf{2,405} & $<$1 & \textbf{2,405} & $<$1 \\
p\_hat300-3\_B & 300 & 16,820 & 16,570 & \textbf{967} &4,067 & \textbf{967} & 46 & \textbf{967} & $<$1 &\textbf{967} & $<$1 &\textbf{967} & $<$1 \\

brock200\_2\_C & 200 & 4,916 & 4,960 & \textbf{1,101} & 10,800 & \textbf{1,101} & 204 & \textbf{1,101} & $<$1 & \textbf{1,101} & $<$1 & \textbf{1,101} & $<$1 \\
brock400\_2\_C & 400 & 29,711 & 30,075 &861 & 10,800  & \textbf{892} & 691 & \textbf{892} & $<$1 & \textbf{892} & $<$1 & \textbf{892} & $<$1 \\
C125.9\_C & 125 & 3,500 & 3,463 & \textbf{506} & 6 & \textbf{506} & $<$1 & \textbf{506} & $<$1 & \textbf{506} & $<$1 & \textbf{506} & $<$1 \\
C250.9\_C & 250 & 14,017 & 13,967 & \textbf{623} & 334 & \textbf{623} & 2 & \textbf{623} & $<$1 &  \textbf{623} & $<$1 &  \textbf{623} & $<$1 \\
gen200\_p0.9\_55\_C & 200 & 8,908 & 9,002 & \textbf{597} & 91 & \textbf{597} & $<$1 & \textbf{597}  & $<$1 & \textbf{597}  & $<$1 & \textbf{597}  & $<$1 \\
gen400\_p0.9\_75\_C & 400 & 35,823 & 35,997 & \textbf{651} & 7,858 & \textbf{651} & 50 & \textbf{651} & $<$1 &  \textbf{651} & $<$1 &  \textbf{651} & $<$1 \\
hamming8-4\_C & 256 & 10,329 & 10,535 &  \textbf{1,184} & 7,447  & \textbf{1,184} & 85 & \textbf{1,184} & $<$1 &\textbf{1,184} & $<$1 &\textbf{1,184} & $<$1 \\
keller4\_C & 171 & 4,738 & 4,697 & \textbf{1,049} & 145 & \textbf{1,049} & 3 & \textbf{1,049} & $<$1 & \textbf{1,049} & $<$1 & \textbf{1,049} & $<$1 \\
MANN\_a27\_C & 378 & 35,345 & 35,206 & \textbf{552} & 1,849 & \textbf{552} &10 & \textbf{552}& 1 &\textbf{552}& $<$1 &\textbf{552}& $<$1 \\
p\_hat300-1\_C & 300 & 5,505 & 5,428 & 2,303 & 10,800 & 2,818 & 10,800 & \textbf{2,897} & $<$1 & \textbf{2,897} & $<$1 & \textbf{2,897} & $<$1 \\
p\_hat300-2\_C & 300 & 11,051 & 10,877 &  1,852 & 10,800 & \textbf{2,263} & 10,800 & \textbf{2,263} & $<$1 & \textbf{2,263} & $<$1 & \textbf{2,263} & $<$1 \\
p\_hat300-3\_C & 300 & 16,820 & 16,570 & \textbf{851} &4,709 & \textbf{851} & 69 & \textbf{851} & $<$1 &\textbf{851} & $<$1 &\textbf{851} & $<$1 \\
\hline
\multicolumn{14}{l}{$p=0.50$, Setting 2} \\
\hline
brock200\_2\_A & 200 & 4,916 & 4,960 & \textbf{2,034} & 10,800 & \textbf{2,034} & 677 & \textbf{2,034} & $<$1 & \textbf{2,034} & $<$1 & \textbf{2,034} & $<$1 \\
brock400\_2\_A & 400 & 29,711 & 30,075 & 1,628  &10,800  & \textbf{1,630} & 1,323 & \textbf{1,630} & 11 & \textbf{1,630} & $<$1 & \textbf{1,630} & $<$1 \\
C125.9\_A & 125 & 3,500 & 3,463 & \textbf{1,152} & 7 & \textbf{1,152} & $<$1 & \textbf{1,152} & $<$1 & \textbf{1,152} & $<$1 & \textbf{1,152} & $<$1 \\
C250.9\_A & 250 & 14,017 & 13,967 &\textbf{1,236}   & 343 & \textbf{1,236} & 3&\textbf{1,236} & 3 & \textbf{1,236} & $<$1 & \textbf{1,236} & $<$1 \\
gen200\_p0.9\_55\_A & 200 & 8,908 & 9,002  & \textbf{1,151} & 113 & \textbf{1,151} & $<$1 &\textbf{1,151} & 3 & \textbf{1,151} & $<$1 & \textbf{1,151} & $<$1 \\
gen400\_p0.9\_75\_A & 400 & 35,823 & 35,997  & \textbf{1,335} &8,467  & \textbf{1,335} & 69 & \textbf{1,335} &  2 & \textbf{1,335} &  $<$1 & \textbf{1,335} &  $<$1 \\
hamming8-4\_A & 256 & 10,329 & 10,535 & \textbf{2,155}  & 10,800 & \textbf{2,155} &166 & \textbf{2,155} & 1 & \textbf{2,155} & $<$1 & \textbf{2,155} & $<$1 \\
keller4\_A & 171 & 4,738 & 4,697 & \textbf{1,759} & 353  & \textbf{1,759} & 5 & \textbf{1,759} & $<$1 & \textbf{1,759} & $<$1 & \textbf{1,759} & $<$1 \\
MANN\_a27\_A & 378 & 35,345 & 35,206 &  \textbf{1,226} & 1,950 & \textbf{1,226} & 10 & \textbf{1,226} &12 & \textbf{1,226} &$<$1 & \textbf{1,226} &$<$1  \\
p\_hat300-1\_A & 300 & 5,505 & 5,428 & 4,740 & 10,800 & \textbf{5,637} & 10,800 & \textbf{5,637} & $<$1 & \textbf{5,637} & $<$1 & \textbf{5,637} & $<$1 \\
p\_hat300-2\_A & 300 & 11,051 & 10,877 & 3,342 & 10,800 & 3,750 & 10,800 & \textbf{3,943} & $<$1 & \textbf{3,943} & $<$1 & \textbf{3,943} & $<$1 \\
p\_hat300-3\_A & 300 & 16,820 & 16,570 &  \textbf{1,658} & 7,243 & \textbf{1,658} &157 & \textbf{1,658} &$<$1 & \textbf{1,658} &$<$1 & \textbf{1,658} &$<$1 \\

brock200\_2\_B & 200 & 4,916 & 4,960 & \textbf{984}  & 10,800 & \textbf{984} & 757 & \textbf{984} & $<$1 & \textbf{984} & $<$1 & \textbf{984} & $<$1 \\

brock400\_2\_B & 400 & 29,711 & 30,075 &  765 & 10,800 & \textbf{780} &1,385 & \textbf{780} & 10 & \textbf{780} & $<$1 & \textbf{780} & $<$1 \\
C125.9\_B & 125 & 3,500 & 3,463 & \textbf{552} & 7 & \textbf{552} & $<$1& \textbf{552} & $<$1 & \textbf{552} & $<$1 & \textbf{552} & $<$1 \\
C250.9\_B & 250 & 14,017 & 13,967 & \textbf{586} & 391 & \textbf{586} & 3 & \textbf{586} & 4 &  \textbf{586} & $<$1 &  \textbf{586} & $<$1 \\
gen200\_p0.9\_55\_B & 200 & 8,908 & 9,002 & \textbf{551} & 121 & \textbf{551} & $<$1 & \textbf{551}  & 1 & \textbf{551}  & $<$1 & \textbf{551}  & $<$1 \\
gen400\_p0.9\_75\_B & 400 & 35,823 & 35,997 & \textbf{635} & 9,395 & \textbf{635} & 72 & \textbf{635} & $<$1 &  \textbf{635} & $<$1 &  \textbf{635} & $<$1 \\
hamming8-4\_B & 256 & 10,329 & 10,535 &\textbf{1,055}  &10,800  & \textbf{1,055} & 163 & \textbf{1,055} & $<$1 &\textbf{1,055} & $<$1 &\textbf{1,055} & $<$1 \\
keller4\_B & 171 & 4,738 & 4,697 & \textbf{859} & 367 & \textbf{859} & 5 & \textbf{859} & $<$1 & \textbf{859} & $<$1 & \textbf{859} & $<$1 \\
MANN\_a27\_B & 378 & 35,345 & 35,206 & \textbf{576} &2,161  & \textbf{576} & 11 & \textbf{576}& 13 &\textbf{576}& $<$1 &\textbf{576}& $<$1 \\
p\_hat300-1\_B & 300 & 5,505 & 5,428 & 2,340 & 10,800 & 2,325 & 10,800 & \textbf{2,787} & $<$1 & \textbf{2,787} & $<$1 & \textbf{2,787} & $<$1 \\
p\_hat300-2\_B & 300 & 11,051 & 10,877 & 1,642 &10,800 & 1,850 & 10,800  &\textbf{1,943} & $<$1 &\textbf{1,943} & $<$1 & \textbf{1,943} & $<$1 \\
p\_hat300-3\_B & 300 & 16,820 & 16,570 & \textbf{808} & 7,790 & \textbf{808} & 161 & \textbf{808} & $<$1  &\textbf{808} & $<$1 &\textbf{808} & $<$1 \\
brock200\_2\_C & 200 & 4,916 & 4,960 & \textbf{459} & 10,800  & \textbf{459} & 934 & \textbf{459} & $<$1 & \textbf{459} & $<$1 & \textbf{459} & $<$1 \\
brock400\_2\_C & 400 & 29,711 & 30,075 & 341  &10,800 & \textbf{355} & 1,711 & \textbf{355} & 12 & \textbf{355} & $<$1 & \textbf{355} & $<$1 \\
C125.9\_C & 125 & 3,500 & 3,463 &  \textbf{252} & 8 & \textbf{252} & $<$1& \textbf{252} & $<$1 & \textbf{252} & $<$1 & \textbf{252} & $<$1 \\
C250.9\_C & 250 & 14,017 & 13,967 & \textbf{261} & 501 & \textbf{261} &3 & \textbf{261} & 3 &  \textbf{261} & $<$1 &  \textbf{261} & $<$1 \\
gen200\_p0.9\_55\_C & 200 & 8,908 & 9,002 & \textbf{251} &146  & \textbf{251} & 1 & \textbf{251}  & 2 & \textbf{251}  & $<$1 & \textbf{251}  & $<$1 \\
gen400\_p0.9\_75\_C & 400 & 35,823 & 35,997 & \textbf{285} &10,800  & \textbf{285} & 83 &\textbf{285} & 2 &  \textbf{285} & $<$1 &  \textbf{285} & $<$1 \\
hamming8-4\_C & 256 & 10,329 & 10,535 & \textbf{505} & 10,800 & \textbf{505} & 170 & \textbf{505} & $<$1 &\textbf{505} & $<$1 &\textbf{505} & $<$1 \\
keller4\_C & 171 & 4,738 & 4,697 &  \textbf{409}& 389 & \textbf{409} & 5 & \textbf{409} & $<$1 & \textbf{409} & $<$1 & \textbf{409} & $<$1 \\
MANN\_a27\_C & 378 & 35,345 & 35,206 & \textbf{251} & 2,666 & \textbf{251} & 13 & \textbf{251}& 10 &\textbf{251}& $<$1 &\textbf{251}& $<$1 \\
p\_hat300-1\_C & 300 & 5,505 & 5,428 & 1,138 & 10,800 & 1,125 & 10,800 & \textbf{1,362} & $<$1 & \textbf{1,362} & $<$1 & \textbf{1,362} & $<$1  \\
p\_hat300-2\_C & 300 & 11,051 & 10,877 & 788 & 10,800 & 900 & 10,800 & \textbf{943} & $<$1 & \textbf{943} & $<$1 & \textbf{943} & $<$1 \\
p\_hat300-3\_C & 300 & 16,820 & 16,570 & \textbf{383} &9,014  & \textbf{383} & 178 & \textbf{383} & $<$1 &\textbf{383} & $<$1 &\textbf{383} & $<$1 \\




\hline 
\multicolumn{14}{l}{$p=0.75$, Setting 1} \\
\hline
brock200\_2\_A & 200 & 2,438 & 7,438 &1,615 & 10,800 & 1,851 & 10,800 & \textbf{1,885} & $<$1 & \textbf{1,885} & $<$1 & \textbf{1,885} & $<$1 \\
brock400\_2\_A & 400 & 14,751 & 45,035 & 1,403& 10,800 & 1,667 & 10,800 & \textbf{1,728} & 5 & \textbf{1,728} & $<$1 & \textbf{1,728} & $<$1 \\
C125.9\_A & 125 & 1,733 & 5,230 & \textbf{1,023}& 247  & \textbf{1,023} & 4 & \textbf{1,023} & $<$1 & \textbf{1,023} & $<$1 & \textbf{1,023} & $<$1 \\
C250.9\_A & 250 & 7,073 & 20,911 & 1,200&10,800  & \textbf{1,236} & 5,143& \textbf{1,236} & 2 & \textbf{1,236} & $<$1 & \textbf{1,236} & $<$1 \\
gen200\_p0.9\_55\_A & 200 & 4,425 & 13,485 & 1,186 &10,800  & \textbf{1,206} & 601 & \textbf{1,206} & $<$1 & \textbf{1,206} & $<$1 & \textbf{1,206} & $<$1 \\
gen400\_p0.9\_75\_A & 400 & 18,063 & 53,757 &1,155 & 10,800 & \textbf{1,490} & 10,800 & \textbf{1,490} & $<$1 & \textbf{1,490} & $<$1  & \textbf{1490} & $<$1 \\
hamming8-4\_A & 256 & 5,173 & 15,691 &1,549 &10,800  & 1,723 & 10,800 & \textbf{1,759} & $<$1 & \textbf{1,759} & $<$1 & \textbf{1,759} & $<$1 \\
keller4\_A & 171 & 2,400 & 7,035 &1,400 &10,800  & \textbf{1,434} & 1573 & \textbf{1,434} & $<$1 & \textbf{1,434} & $<$1 & \textbf{1,434} & $<$1 \\
MANN\_a27\_A & 378 & 17,580 & 52,971 &1,313 & 10,800 & \textbf{1,323} & 10,800 & \textbf{1,323} & $<$1 & \textbf{1,323} & $<$1 & \textbf{1,323} & $<$1 \\
p\_hat300-1\_A & 300 & 2,734 & 8,199 &2,504 & 10,800 & 3,932 & 10,800 & \textbf{4,164} & $<$1 & \textbf{4,164} & $<$1 & \textbf{4,164} & $<$1 \\
p\_hat300-2\_A & 300 & 5,603 & 16,325 &2,120 &10,800  & 2,902 & 10,800 & \textbf{2,990} & 1 & \textbf{2,990} & $<$1 & \textbf{2,990} & $<$1 \\
p\_hat300-3\_A & 300 & 8,388 & 25,002 &1,350 & 10,800 & 1561 & 10,800 & \textbf{1,564} & $<$1 & \textbf{1,564} & $<$1 & \textbf{1,564} & $<$1 \\
brock200\_2\_B & 200 & 2,438 & 7,438 & 1,416& 10,800 & 1,602 & 10,800 & \textbf{1,641} & $<$1 & \textbf{1,641} & $<$1 & \textbf{1,641} & $<$1 \\
brock400\_2\_B & 400 & 14,751 & 45,035 & 1,129& 10,800 & 1,341 & 10,800 & \textbf{1,386} & 2 & \textbf{1,386} & $<$1 & \textbf{1,386} & $<$1 \\
C125.9\_B & 125 & 1,733 & 5,230 & \textbf{856}& 395 & \textbf{856} & 8 & \textbf{856} & $<$1 & \textbf{856} & $<$1 & \textbf{856} & $<$1 \\
C250.9\_B & 250 & 7,073 & 20,911 & 960& 10,800  & \textbf{1,001} & 10,800 & \textbf{1,001} & 1 & \textbf{1,001} & $<$1 & \textbf{1,001} & $<$1 \\
gen200\_p0.9\_55\_B & 200 & 4,425 & 13,485 &961 & 10,800 & \textbf{983} & 1585 & \textbf{983} & $<$1 & \textbf{983} & $<$1 & \textbf{983} & $<$1 \\
gen400\_p0.9\_75\_B & 400 & 18,063 & 53,757 & 920& 10,800 & 1,105 & 10,800 & \textbf{1,120} & $<$1 & \textbf{1,120} & $<$1 & \textbf{1,120} & $<$1\\
hamming8-4\_B & 256 & 5,173 & 15,691 &1,356 & 10,800 & 1,486 & 10,800 & \textbf{1,579} & $<$1 & \textbf{1,579} & $<$1 & \textbf{1,579} & $<$1 \\
keller4\_B & 171 & 2,400 & 7,035 & 1,230& 10,800 & \textbf{1,268} & 4468 & \textbf{1,268} & $<$1 & \textbf{1,268} & $<$1 & \textbf{1,268} & $<$1 \\
MANN\_a27\_B & 378 & 17,580 & 52,971 & 922&10,800  & 1,011 & 10,800 & \textbf{1,021} & $<$1 & \textbf{1,021} & $<$1 & \textbf{1,021} & $<$1 \\
p\_hat300-1\_B & 300 & 2,734 & 8,199 & 2,306& 10,800 & 3,547 & 10,800 & \textbf{3,886} & $<$1 & \textbf{3,886} & $<$1 & \textbf{3,886} & $<$1 \\
p\_hat300-2\_B & 300 & 5,603 & 16,325 & 1,762& 10,800 & 2,588 & 10,800 & \textbf{2,782} & $<$1 & \textbf{2,782} & $<$1 & \textbf{2,782} & $<$1 \\
p\_hat300-3\_B & 300 & 8,388 & 25,002 & 1,144& 10,800 & 1,209 & 10,800 & \textbf{1,299} & $<$1 & \textbf{1,299} & $<$1 & \textbf{1,299} & $<$1 \\

brock200\_2\_C & 200 & 2,438 & 7,438 & 1,158& 10,800 & 1,303 & 10,800 & \textbf{1,321} & $<$1 & \textbf{1,321} & $<$1 & \textbf{1,321} & $<$1 \\
brock400\_2\_C & 400 & 14,751 & 45,035 & 793 & 10,800 & 961 & 10,800 & \textbf{1,033} & $<$1 & \textbf{1,033} & $<$1 & \textbf{1,033} & $<$1 \\
C125.9\_C & 125 & 1,733 & 5,230 & \textbf{644}& 10,800 & \textbf{644} & 45 & \textbf{644} & $<$1& \textbf{644} & $<$1 & \textbf{644} & $<$1 \\
C250.9\_C & 250 & 7,073 & 20,911 & 688& 10,800  & \textbf{734} & 10,800 & \textbf{734} & 1 & \textbf{734} & $<$1 & \textbf{734} & $<$1 \\
gen200\_p0.9\_55\_C & 200 & 4,425 & 13,485 & 707& 10,800 & \textbf{727} &  9,879 & \textbf{727} & $<$1 & \textbf{727} & $<$1 & \textbf{727} & $<$1 \\
gen400\_p0.9\_75\_C & 400 & 18,063 & 53,757 &613  & 10,800 & 756 & 10,800 & \textbf{772} & 1 & \textbf{772} & $<$1 & \textbf{772} & $<$1 \\
hamming8-4\_C & 256 & 5,173 & 15,691 & 1,120& 10,800 & 1,248 & 10,800 & \textbf{1,378} & $<$1 & \textbf{1,378} & $<$1 & \textbf{1,378} & $<$1 \\
keller4\_C & 171 & 2,400 & 7,035 & 1,060&10,800  & \textbf{1,109} & 10,800 & \textbf{1,109} & $<$1 & \textbf{1,109} & $<$1 & \textbf{1,109} & $<$1 \\
MANN\_a27\_C & 378 & 17,580 & 52,971 &583& 10,800 & 648 & 10,800 & \textbf{651} & 2 & \textbf{651} & $<$1 & \textbf{651} & $<$1 \\
p\_hat300-1\_C & 300 & 2,734 & 8,199 &1,866 &10,800  & 3,191 & 10,800 & \textbf{3,480} & $<$1 & \textbf{3,480} & $<$1 & \textbf{3,480} & $<$1 \\
p\_hat300-2\_C & 300 & 5,603 & 16,325 &1,511& 10,800 & 2,249 & 10,800 &\textbf{2,473} & $<$1 & \textbf{2,473} & $<$1 & \textbf{2,473} & $<$1 \\
p\_hat300-3\_C & 300 & 8,388 & 25,002 & 886& 10,800 & 919 & 10,800 & \textbf{1,004} & $<$1 & \textbf{1,004} & $<$1 & \textbf{1,004} & $<$1 \\
\hline
\multicolumn{14}{l}{$p=0.75$, Setting 2} \\
\hline
brock200\_2\_A & 200 & 2,438 & 7,438 &2,999 &10,800  & 3,078 & 10,800 & \textbf{3,326} & 8 & \textbf{3,326} & $<$1 & \textbf{3,326} & $<$1 \\
brock400\_2\_A & 400 & 14,751 & 45,035 &2,603 &10,800 & 2,787 & 10,800 & \textbf{2,941} & 12 & \textbf{2,941} & 1& \textbf{2,941} &3  \\
C125.9\_A & 125 & 1,733 & 5,230 & \textbf{1,837}&365  & \textbf{1,837} & 62.71 &\textbf{1,837} & $<$1 & \textbf{1,837} & $<$1 & \textbf{1,837} & $<$1 \\
C250.9\_A & 250 & 7,073 & 20,911 &2,011 & 10,800 & \textbf{2,171} & 10,800 & \textbf{2,171} & 5 & \textbf{2,171} & $<$1 & \textbf{2,171} & $<$1 \\
gen200\_p0.9\_55\_A & 200 & 4,425 & 13,485 &2,094 &10,800  & \textbf{2,096} &7,832 & \textbf{2,096} & $<$1 & \textbf{2,096} & $<$1  & \textbf{2,096} & $<$1 \\
gen400\_p0.9\_75\_A & 400 & 18,063 & 53,757 & 2,164&  10,800& 2,310 & 10,800 &\textbf{2,404} & 19 & \textbf{2,404} & $<$1 & \textbf{2,404} & 1 \\
hamming8-4\_A & 256 & 5,173 & 15,691 & 2,813& 10,800 & \textbf{3,124} & 10,800 & \textbf{3,124} & 1 & \textbf{3,124} & $<$1 & \textbf{3,124} & $<$1 \\
keller4\_A & 171 & 2,400 & 7,035 &2,532 & 10,800 & \textbf{2,690} & 10,800 & \textbf{2,690} & $<$1 & \textbf{2,690} & $<$1 & \textbf{2,690} & $<$1 \\
MANN\_a27\_A & 378 & 17,580 & 52,971 &1,976 &10,800  & 2,206 & 10,800 & \textbf{2,208} & $<$1 & \textbf{2,208} & $<$1 & \textbf{2,208} & $<$1 \\
p\_hat300-1\_A & 300 & 2,734 & 8,199 &5,566 &10,800  & 6,666 & 10,800 & \textbf{7,899} & 25 & \textbf{7,899} & $<$1 & \textbf{7,899} & $<$1 \\
p\_hat300-2\_A & 300 & 5,603 & 16,325 &4,102 &10,800  & 4,714 & 10,800 & \textbf{5,343} & $<$1 & \textbf{5,343} & $<$1 & \textbf{5,343} & $<$1 \\
p\_hat300-3\_A & 300 & 8,388 & 25,002 &2,477 &10,800  & 2,658 & 10,800 & \textbf{2,838} & 2 & \textbf{2,838} & $<$1 & \textbf{2,838} & $<$1 \\
brock200\_2\_B & 200 & 2,438 & 7,438 &1,398 &10,800  & 1,427 & 10,800 & \textbf{1,533} & $<$1 & \textbf{1,533} & $<$1 & \textbf{1,533} & $<$1 \\

brock400\_2\_B & 400 & 14,751 & 45,035 &1,161 &10,800  & 1,210 & 10,800 & \textbf{1,291} & 16 & \textbf{1,291} & 1  & \textbf{1,291} & 4 \\
C125.9\_B & 125 & 1,733 & 5,230&\textbf{837} &523  & \textbf{837} & 84 & \textbf{837} &$<$1 & \textbf{837} &$<$1 & \textbf{837} &$<$1 \\
C250.9\_B & 250 & 7,073 & 20,911 & 911& 10,800 & 969 & 10,800 &\textbf{971} & 3 & \textbf{971} & $<$1 & \textbf{971} & $<$1 \\
gen200\_p0.9\_55\_B & 200 & 4,425 & 13,485 &944 &10,800  & \textbf{946} & 10,800 & \textbf{946} & $<$1 & \textbf{946} & $<$1 & \textbf{946} & $<$1 \\
gen400\_p0.9\_75\_B & 400 & 18,063 & 53,757 &936 &10,800  & 1,010 & 10,800 & 1,052&11 & \textbf{1,054} & 1 & \textbf{1,054} & 1 \\
hamming8-4\_B & 256 & 5,173 & 15,691 &1,309 &10,800  & \textbf{1,474} & 10,800 & \textbf{1,474} & $<$1 & \textbf{1,474} & $<$1 & \textbf{1,474} & $<$1 \\
keller4\_B & 171 & 2,400 & 7,035 &1,190 & 10,800 & 1,240 & 10,800 & \textbf{1,254} & $<$1 & \textbf{1,254} & $<$1 & \textbf{1,254} & $<$1 \\
MANN\_a27\_B & 378 & 17,580 & 52,971 & 875& 10,800 & 956 & 10,800 & \textbf{958} & 11 & \textbf{958} & $<$1 & \textbf{958} & $<$1 \\
p\_hat300-1\_B & 300 & 2,734 & 8,199 & 2,666&10,800  & 3,216 & 10,800 & \textbf{3,818} & 2 & \textbf{3,818} & $<$1 & \textbf{3,818} & $<$1 \\
p\_hat300-2\_B & 300 & 5,603 & 16,325 &1,952 &10,800 & 2,177 & 10,800 & \textbf{2,543} & $<$1 & \textbf{2,543} & $<$1 & \textbf{2,543} & $<$1 \\
p\_hat300-3\_B & 300 & 8,388 & 25,002 & 1,101&10,800  & 1,208 & 10,800 & \textbf{1,288} & 2 & \textbf{1,288} & $<$1 & \textbf{1,288} & $<$1 \\
brock200\_2\_C & 200 & 2,438 & 7,438 &607 &10,800  & 607 & 10,800 & \textbf{658} & $<$1 & \textbf{658} & $<$1 & \textbf{658} & $<$1 \\
brock400\_2\_C & 400 & 14,751 & 45,035 &461 &10,800  & 465 & 10,800 & \textbf{500} & 3 & \textbf{500} & $<$1 & \textbf{500} & $<$1 \\
C125.9\_C & 125 & 1,733 & 5,230 & \textbf{337}& 1,519 & \textbf{337} & 204 & \textbf{337} & $<$1 & \textbf{337} & $<$1 & \textbf{337} & $<$1 \\
C250.9\_C & 250 & 7,073 & 20,911 & 359& 10,800 & 370 & 10,800 & \textbf{372} & 5& \textbf{372} & $<$1  & \textbf{372} & $<$1  \\
gen200\_p0.9\_55\_C & 200 & 4,425 & 13,485 &352 & 10,800 & \textbf{371} & 10,800 & \textbf{371} & $<$1 & \textbf{371} & $<$1 & \textbf{371} & $<$1 \\
gen400\_p0.9\_75\_C & 400 & 18,063 & 53,757 & 356& 10,800 & 379 & 10,800 & \textbf{395} & 16 & \textbf{395} & $<$1 & \textbf{395} & $<$1 \\
hamming8-4\_C & 256 & 5,173 & 15,691 & 556&10,800  & 649 & 10,800 & \textbf{652} & $<$1 & \textbf{652} & $<$1 & \textbf{652} & $<$1 \\
keller4\_C & 171 & 2,400 & 7,035 & 510&10,800  & 554 & 10,800 & \textbf{558} & $<$1 & \textbf{558} & $<$1 & \textbf{558} & $<$1 \\
MANN\_a27\_C & 378 & 17,580 & 52,971 & 325& 10,800 & \textbf{335} & 10,800 &\textbf{335} & 11 & \textbf{335} & $<$1 & \textbf{335} & $<$1 \\
p\_hat300-1\_C & 300 & 2,734 & 8,199 & 1,205& 10,800 & 1,491 & 10,800 & \textbf{1,793} & 1 & \textbf{1,793} & $<$1 & \textbf{1,793} & $<$1 \\
p\_hat300-2\_C & 300 & 5,603 & 16,325 & 854& 10,800 & 981 & 10,800 &\textbf{1,159} & $<$1 & \textbf{1,159} & $<$1 & \textbf{1,159} & $<$1 \\
p\_hat300-3\_C & 300 & 8,388 & 25,002 & 453& 10,800 & 494 & 10,800 & \textbf{529} & 1 & \textbf{529} & $<$1 & \textbf{529} & $<$1 \\

\end{longtable}}





\end{document}